\definecolor{dark-red}{rgb}{0.4,0.15,0.15}
\definecolor{dark-blue}{rgb}{0.15,0.15,0.4}
\definecolor{medium-blue}{rgb}{0,0,0.5}
\definecolor{gray}{rgb}{0.5,0.5,0.5}
\newtheorem{theorem}{Theorem}
\newtheorem{lemma}[theorem]{Lemma}
\newtheorem{proposition}[theorem]{Proposition}
\newtheorem{corollary}[theorem]{Corollary}
\newtheorem*{thm:main}{Theorem~\ref{thm:main}}
\newcommand\abs[1]{\lvert #1\rvert}
\newcommand\card[1]{\lvert #1 \rvert}
\newcommand\dist{\operatorname{dist}}
\newcommand\mimw{\operatorname{mimw}}
\newcommand\mimval{\operatorname{mim}}
\newcommand\mim{\operatorname{mim}}
\newcommand\bd{\operatorname{bd}}
\newcommand\fvs{\textsc{Feedback Vertex Set}}
\newcommand\dptable{\mathcal{T}}
\newcommand{\bN}{\mathbb{N}}
\newcommand\cA{\mathcal{A}}
\newcommand\cB{\mathcal{B}}
\newcommand\cC{\mathcal{C}}
\newcommand\cF{F}
\newcommand\cM{\mathcal{M}}
\newcommand\cO{\mathcal{O}}
\newcommand\cP{\mathcal{P}}
\newcommand\cR{\mathcal{R}}
\newcommand\cX{\mathcal{X}}
\newcommand\cY{\mathcal{Y}}
\newcommand\cZ{\mathcal{Z}}
\newcommand\potentialleaves{PL}
\newcommand*{\defeq}{\mathrel{\vcenter{\baselineskip0.5ex \lineskiplimit0pt
                     \hbox{\scriptsize.}\hbox{\scriptsize.}}}%
                     =}
\newcommand{\parproblemdef}[4]
{
\begin{quote}
\textsc{#1}\\
\textbf{Input:} #2\\
\textbf{Parameter:} #3\\
\textbf{Question:} #4
\end{quote}
}
\theoremstyle{definition}
\newtheorem{definition}[theorem]{Definition}
\theoremstyle{remark}
\newtheorem{claim}[theorem]{Claim}
\newtheorem{observation}[theorem]{Observation}
\newtheorem*{remark*}{Remark}
\newtheorem*{question*}{Open Problem}
\newenvironment{clproof}{\begin{proof}\renewcommand{\qedsymbol}{\claimqed}}{\end{proof}\renewcommand{\qedsymbol}{\plainqed}}
\let\plainqed\qedsymbol
\begin{document}

\title{A unified polynomial-time algorithm for Feedback Vertex Set on graphs of bounded mim-width\footnote{The work was partially done while the authors were at Polytechnic University of Valencia, Spain.}}

\author[1]{Lars Jaffke\thanks{Supported by the Bergen Research Foundation (BFS).}}

\author[2]{O-joung Kwon\thanks{Supported by the European Research Council (ERC) under the European Union's Horizon 2020 research and innovation programme (ERC consolidator grant DISTRUCT, agreement No. 648527).}}

\author[1]{Jan Arne Telle}

\affil[1]{Department of Informatics, University of Bergen, Norway. \protect\\ \texttt{\{lars.jaffke, jan.arne.telle\}@uib.no}}
\affil[2]{Logic and Semantics, Technische Universit\"at Berlin, Berlin, Germany. \protect\\ \texttt{ojoungkwon@gmail.com}}

\date\today

\maketitle

\begin{abstract}
                        We give a first polynomial-time algorithm for (\textsc{Weighted}) 
\fvs\  on graphs of bounded {\em maximum induced matching width} 
(mim-width).
                Explicitly, given a branch decomposition of mim-width $w$, we give an 
$n^{\cO(w)}$-time algorithm that solves \textsc{Feedback Vertex Set}.
                This provides a unified algorithm for many well-known classes, such as 
{\sc Interval} graphs and {\sc Permutation} graphs, and furthermore,
                it gives the first polynomial-time algorithms for other classes of 
bounded mim-width, such as {\sc Circular Permutation} and \textsc{Circular $k$-Trapezoid} graphs for fixed $k$. In all these classes the 
decomposition is computable in polynomial time, as shown by Belmonte and 
Vatshelle [Theor.\ Comput.\ Sci.\ 2013].

        We show that powers of graphs of tree-width $w - 1$ or path-width $w$
        and powers of graphs of clique-width $w$ have mim-width at most $w$.
        These results extensively provide new classes of bounded mim-width. 
        We prove a slight strengthening of the first statement which implies that, surprisingly, {\sc Leaf Power} graphs which are of 
importance in the field of phylogenetic studies have mim-width at most 
$1$.
Given a tree decomposition of width $w-1$, a path decomposition of width $w$, or a clique-width $w$-expression of a graph,
        one can for any value of $k$ find a mim-width decomposition of its 
$k$-power in polynomial time, and apply our algorithm to solve 
\textsc{Feedback Vertex Set} on the $k$-power in time $n^{\cO(w)}$.

                In contrast to \fvs, we show that \textsc{Hamiltonian Cycle} is 
$\NP$-complete even on graphs of linear mim-width $1$, which further 
hints at the expressive power of the mim-width parameter.

\end{abstract}

\section{Introduction}
A feedback vertex set in a graph is a subset of its vertices whose 
removal results in an acyclic graph. The problem of finding a smallest 
such set is one of Karp's 21 famous $\NP$-complete problems \cite{Kar72} 
and many algorithmic techniques have been developed to attack this 
problem, see e.g.\ the survey~\cite{FPR99}. The study of {\sc Feedback 
Vertex Set} through the lens of parameterized algorithmics dates back to 
the earliest days of the field~\cite{DF95}
and throughout the years numerous efforts have been made to obtain 
faster algorithms for this problem~\cite{Bod94,CFL08,CNP11,DFL05,DF95,DF99,GGH06,KPS04,RSS02,RSS06}.
In terms of parameterizations by structural properties of the graph, 
{\sc Feedback Vertex Set} is e.g.\ known to be $\FPT$ parameterized by 
tree-width \cite{CNP11} and clique-width~\cite{BST13}, and $\W[1]$-hard but in 
$\XP$ parameterized by Independent Set and the size of a maximum induced 
matching~\cite{JVV14}.

In this paper, we study {\sc Feedback Vertex Set} parameterized by the 
{\em maximum induced matching width} (mim-width for short), a graph 
parameter defined in 2012 by Vatshelle~\cite{VatshelleThesis} which 
measures how easy it is to decompose a graph along vertex cuts with 
bounded maximum induced matching size on the bipartite graph induced by 
edges crossing the cut. One interesting aspect of this width-measure is 
that its modeling power is much stronger than tree-width and 
clique-width, and many well-known and deeply studied graph classes such 
as \textsc{Interval} graphs and \textsc{Permutation} graphs have 
(linear) mim-width $1$, with decompositions that can be found in 
polynomial time~\cite{BelmonteV2013,VatshelleThesis}, while their 
clique-width can be proportional to the square root of the number of 
vertices \cite{GR00}. Hence, designing an algorithm for a problem $\Pi$ that runs in 
$\XP$ time parameterized by mim-width yields polynomial-time algorithms 
for $\Pi$ on several interesting graph classes at
once.

We give an $\XP$-time algorithm for {\sc Feedback Vertex 
Set} parameterized by mim-width, assuming that a branch decomposition of 
bounded mim-width is given.\footnote{This problem was mentioned as an 
`interesting topic for further research' in \cite{JVV14}. Furthermore, the authors recently proved it to be $\W[1]$-hard~\cite{JaffkeKT2017c}.} 
Since such a decomposition can be computed in polynomial time~\cite{BelmonteV2013,VatshelleThesis} for the following classes, this 
provides a unified polynomial-time algorithm for {\sc Feedback Vertex 
Set} on all of them: \textsc{Interval} and \textsc{Bi-Interval} graphs, 
\textsc{Circular Arc}, \textsc{Permutation} and \textsc{Circular 
Permutation} graphs, \textsc{Convex} graphs, \textsc{$k$-Trapezoid}, 
\textsc{Circular $k$-Trapezoid}, \textsc{$k$-Polygon}, 
\textsc{Dilworth-$k$} and \textsc{Co-$k$-Degenerate} graphs for fixed 
$k$.
Furthermore, our algorithm can be applied to {\sc Weighted Feedback 
Vertex Set} as well, which for several of these classes was not known to 
be solvable in polynomial time. 

\begin{theorem}\label{thm:FVSmim}
	Given an $n$-vertex graph and a branch decomposition of mim-width $w$, we can solve \textsc{(Weighted) Feedback Vertex Set} in time $n^{\cO(w)}$. 	
	\end{theorem}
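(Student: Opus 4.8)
The approach is to solve the equivalent \textsc{Maximum Weight Induced Forest} problem: since $S \subseteq V(G)$ is a feedback vertex set of $G$ exactly when $G - S$ is a forest, a minimum weight feedback vertex set is the complement of a maximum weight vertex set $\cF$ with $G[\cF]$ acyclic. We compute such an $\cF$ by dynamic programming over the given branch decomposition, rooted arbitrarily. Each node $x$ of the decomposition is associated with a bipartition $(V_x, \overline{V_x})$ of $V(G)$ with $\mim(V_x) \le w$, and for a (hypothetical optimal) induced forest $\cF$ we wish to record only a bounded-size \emph{signature} of the trace $\cF \cap V_x$ with the properties that (i) two traces with equal signature admit the same completions with the same weight difference, so that keeping, for each signature, only the heaviest trace is correct, and (ii) there are only $n^{\cO(w)}$ signatures, bounding the table size.

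The content of a signature is dictated by what a completion needs to know about $\cF \cap V_x$. We must record its weight; we must record enough of the adjacency pattern between $\cF \cap V_x$ and $\overline{V_x}$ to know which future vertices may be added; and, since new crossing edges may create cycles, we must record how the components of $G[\cF \cap V_x]$ that touch the cut are (and will be forced to be) merged --- a partition, exactly as in feedback-vertex-set dynamic programming over tree-width and clique-width. The parameter enters through two facts. First, the number of $d$-neighbour-equivalence classes of subsets of $V_x$ is at most $\nec_d(V_x) \le n^{d \cdot \mim(V_x)} = n^{\cO(w)}$ for any fixed $d$; taking $d$ a small constant (so that we can distinguish $0$, $1$, and $\ge 2$ crossing edges, which is all acyclicity cares about) lets us encode the adjacency part of the signature by neighbourhood classes rather than literal vertex sets. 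Second, the bipartite graph formed by the edges of $\cF$ crossing the cut is an induced subgraph of the cut's bipartite graph and therefore also has maximum induced matching size at most $w$, and this remains true after contracting the components of $\cF$ on each side; the resulting ``contracted crossing graph'' is thus a \emph{forest} of maximum induced matching size $\le w$. This structural restriction is what lets one bound the number of classes of boundary components that must be tracked individually, so that the partition part of the signature ranges over only $n^{\cO(w)}$ possibilities; together, a node's table has size $n^{\cO(w)}$.

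With the signature in hand the dynamic programming is standard. Leaves are handled by brute force over the one relevant vertex. At an internal node $x$ with children $y$ and $z$ we iterate over all pairs of a signature at $y$ and a signature at $z$, build the signature of the combined trace on $V_x = V_y \cup V_z$ (taking into account the edges of $G$ between $V_y$ and $V_z$), reject it if the combination creates a cycle --- which, by the forest structure of the contracted crossing graph, must either double an edge between two components or close up in that graph, both recognisable from the two signatures --- and otherwise update $\dptable_x$ at the new signature with the larger weight; at the root, where the cut is trivial, the answer is read off directly. The running time is $n^{\cO(w)}$: polynomially many nodes, $n^{\cO(w)}$ pairs of signatures at each, polynomial work per pair, plus a polynomial-time preprocessing computing representatives of the neighbour-equivalence classes. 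Correctness reduces to proving property (i) by an exchange argument: one shows that swapping a trace for another with the same signature inside any induced forest yields again an induced forest of the same weight, which hinges on choosing the representatives (of neighbourhood classes and of boundary components) canonically.

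The main obstacle is the definition of the signature together with the proof of property (i): one must pin down exactly which information about the boundary components of a partial induced forest and about their connection pattern is simultaneously small enough to leave only $n^{\cO(w)}$ states, computable from the children's signatures, and a complete invariant for acyclicity under every completion. The delicate point is reconciling the neighbour-equivalence coarsening with the counting of crossing edges (multiplicities up to two) and with the merging of components across the cut; once this is set up correctly, the weight bookkeeping, the leaf and root cases, and the running-time estimate are routine.
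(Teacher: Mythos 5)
The proposal correctly reduces the problem to \textsc{Maximum (Weight) Induced Forest} and correctly identifies what a signature must encode (weight, cross-adjacency information, and a connectivity partition), but it never actually constructs the signature, and the place where it defers is precisely where the entire difficulty of the theorem lies. The partition component of your signature is a partition of the components of $G[\cF\cap V_x]$ that touch the cut, and there can be $\Theta(n)$ such components (for instance, many single vertices each sending one edge across a cut of mim-value $1$); a partition of $\Theta(n)$ objects has super-polynomially many possibilities, so as stated the table is not of size $n^{\cO(w)}$. You assert that the forest structure of the ``contracted crossing graph'' with induced matching number at most $w$ ``lets one bound the number of classes of boundary components that must be tracked individually,'' but you give no mechanism for this, and your closing paragraph concedes that pinning down the signature and proving the exchange property is ``the main obstacle.'' That obstacle is the theorem.

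The missing mechanism is a concrete combinatorial fact: a forest whose maximum induced matching has size at most $w$ has at most $6w$ vertices that are neither isolated nor leaves (Lemma~\ref{lem:reducedforest}). Consequently only the \emph{reduced forest} $R$ of $\cF\cap G_{t,\bar{t}}$ --- at most $6w$ vertices, hence $\cO(n^{6w})$ candidates and at most $B_{6w}$ partitions of its components --- is stored explicitly, while the unboundedly many leaves and non-crossing vertices of the trace are not enumerated at all but are instead constrained implicitly by additionally storing a minimal vertex cover $M$ of $G_{t,\bar{t}}-V(R)$ disjoint from $\cF$; there are at most $n^{w}$ such covers by the Minimal Vertex Covers Lemma, which plays the role your $\nec_d$ bound is meant to play. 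The combine step then requires the notions of restriction and compatibility to reconcile the children's reduced forests with the parent's (a vertex on the far side of the cut may be a leaf or isolated in each child's trace yet internal in their union, and vice versa), which is the content of Propositions~\ref{prop:restriction1}--\ref{prop:correctness}. None of this is routine bookkeeping, and none of it is supplied by the proposal; as written it is an accurate description of the problem rather than a proof.
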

	
	We note that some of the above mentioned graph classes of bounded mim-width also have bounded asteroidal number, and a polynomial-time algorithm for {\sc Feedback Vertex Set} on graphs of bounded asteroidal number was previously known due to Kratsch et al.~\cite{KMT08}. However, our algorithm improves this result. For instance, {\sc $k$-Polygon} graphs have mim-width at most $2k$~\cite{BelmonteV2013} and asteroidal number $k$~\cite{SV17}. The algorithm of Kratsch et al.~\cite{KMT08} implies that {\sc Feedback Vertex Set} on {\sc $k$-Polygon} graphs can be solved in time $n^{\cO(k^2)}$ while the our result improves this bound to $n^{\cO(k)}$ time. It is not difficult to see that in general, mim-width and asteroidal number are incomparable.

	We give results that expand our knowledge of the expressive power 
of mim-width. 
	The {\em $k$-power} of a graph $G$ is the graph obtained by adding an edge $vw$ for each pair of vertices $v, w$ whose distance in $G$ is at most $k$.
We show that powers of graphs of tree-width $w-1$ or path-width $w$ and powers of graphs of clique-width $w$ have mim-width at most $w$.
        \begin{theorem}\label{thm:power}
        Given a nice tree decomposition of width $w$, all of whose join bags have size at most $w$,
        or a clique-width $w$-expression of a graph,
        one can output a branch decomposition of mim-width $w$ of its $k$-power in polynomial time.
	\end{theorem}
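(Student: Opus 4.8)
The plan is to re-use the decomposition tree we are already given — the tree of the nice tree decomposition in the first case, the parse tree of the clique-width $w$-expression in the second — as the shape of a branch decomposition of $G^k$, after a light clean-up, and then to bound the mim-value of every cut by an injection from an arbitrary induced matching of $G^k$ across the cut into a set of size at most $w$. Computing $G^k$ itself (one BFS per vertex) is trivial and not where the content lies; note already that, since the bound on the mim-value will be $w$ regardless of $k$, the resulting decomposition works for all $k$ simultaneously.

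First I would turn the decomposition tree into a branch decomposition: for each vertex $v$ there is a unique node where $v$ is ``introduced'' (the introduce node of $v$ in the nice tree decomposition; the leaf creating $v$ in the parse tree), and I attach a pendant leaf $\ell_v$ there, then repeatedly suppress degree-two nodes and delete superfluous empty leaves. This yields, in polynomial time, a subcubic tree $T$ whose leaves are exactly $\{\ell_v : v\in V(G)\}$, i.e.\ a branch decomposition $(T,\delta)$ of $G^k$; suppressing a degree-two node only replaces a path by an edge inducing the same bipartition, so the family of bipartitions of $V(G)$ realized by edges of $T$ is (a subfamily of) that realized by edges of the original tree, together with the bipartitions $(\{v\},V(G)\setminus\{v\})$ coming from the pendant edges. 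The pendant edges have mim-value at most $1$, so only the ``internal'' edges are at stake.

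The heart of the proof is one path-surgery argument, run in parallel in the two settings. Fix an internal edge of $T$ with bipartition $(A,\bar A)$ and let $\{a_1b_1,\dots,a_mb_m\}$ be an induced matching of $G^k[A,\bar A]$ with $a_\ell\in A$, $b_\ell\in\bar A$; pick for each $\ell$ a shortest $a_\ell$--$b_\ell$ path $P_\ell$ in $G$, of length $\le k$. In the tree-width case, let $S$ be $B_t\cap B_p$ for the corresponding tree edge $\{t,p\}$ (with $t$ the child); a short case check over the node types of a nice tree decomposition, using that all bags have size $\le w+1$ and all join bags size $\le w$, gives $|S|\le w$, and $S$ separates $A\setminus S$ from $\bar A\setminus S$ in $G$, so $P_\ell$ meets $S$; let $s_\ell$ be its first vertex on $S$. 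In the clique-width case, let $x_\ell$ be the tail of the first edge of $P_\ell$ crossing the cut, with label $j_\ell$ at this edge; its successor $y_\ell\in\bar A$ is then adjacent in $G$ to \emph{every} $A$-vertex of label $j_\ell$, because vertices of equal label at a cut keep equal labels above it and therefore receive identical edges at every $\eta$-operation above it, hence have equal neighborhood in $\bar A$. In either case write $P_\ell=Q_\ell R_\ell$, split at $s_\ell$ (resp.\ at $x_\ell$), so $|Q_\ell|+|R_\ell|\le k$. Now suppose $\ell\ne\ell'$ give the same separating object: $s_\ell=s_{\ell'}$, resp.\ $j_\ell=j_{\ell'}$. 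Assuming without loss of generality $|R_{\ell'}|\le|R_\ell|$, the walk obtained by following $Q_\ell$, then a bridge of length $\le 1$ (nothing, through $s_\ell=s_{\ell'}$; resp.\ the edge $x_\ell y_{\ell'}$, which exists since $y_{\ell'}$ is adjacent to the label-$j_\ell$ vertex $x_\ell$), then $R_{\ell'}$ minus its first vertex, goes from $a_\ell$ to $b_{\ell'}$ and has length at most $|Q_\ell|+|R_{\ell'}|\le|Q_\ell|+|R_\ell|\le k$; so $a_\ell b_{\ell'}\in E(G^k)$, contradicting that the matching is induced. Hence $\ell\mapsto s_\ell$ (resp.\ $\ell\mapsto j_\ell$) is injective and $m\le|S|\le w$ (resp.\ $m\le w$).

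The main obstacle is the clique-width case. Two $A$-vertices of the same label at the cut are ``twins towards $\bar A$'' but need \emph{not} be twins inside $A$, so one cannot reroute a whole shortest path the way one would in $G$ itself; moreover a shortest path in $G$ may oscillate across the cut, so there is no hope of a clean ``enter $\bar A$ once'' normal form. The trick that unlocks everything is to cut $P_\ell$ at its \emph{first} crossing of the cut and reroute only the prefix $Q_\ell$, discarding the (possibly oscillating) suffix but keeping its length. Once this is in place, the tree-width case is the same argument with ``first crossing vertex'' replaced by ``first vertex on a small separator'', and the remaining work — the case analysis giving $|B_t\cap B_p|\le w$, the check that suppressing degree-two nodes preserves the family of bipartitions, and the bookkeeping that $\delta$ is a bijection onto the leaves of a subcubic tree — is routine.
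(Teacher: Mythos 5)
Your proposal follows essentially the same route as the paper: reuse the given decomposition tree as the shape of the branch decomposition, and bound the mim-value of every cut by a pigeonhole-plus-path-surgery argument --- on a separator of size at most $w$ in the tree-width case (the paper's Lemma~\ref{lem:power}) and on the at most $w$ label classes, which are twins towards the other side of the cut, in the clique-width case (the paper's Lemma~\ref{lem:power2}, which cuts each path at its last crossing rather than its first; the two choices are interchangeable). The one slip is in the tree-width case: a vertex of a nice tree decomposition does \emph{not} in general have a unique introduce node --- it has one in each branch below every join node whose bag contains it --- so ``attach $\ell_v$ at the introduce node of $v$'' is not well-defined as stated. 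The paper sidesteps this by hanging the leaf $\ell_v$ off the (unique) forget node of $v$; alternatively, fixing one introduce node per vertex arbitrarily also works, since your separator argument with $S=B_t\cap B_p$ only needs each vertex of $A$ (resp.\ $\bar A$) to occur in some bag below (resp.\ not below) $t$. With that repair the argument is complete.
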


	Theorem~\ref{thm:power} implies 
that {\sc leaf power} graphs, of importance in the field of phylogenetic 
studies, have mim-width $1$.
 These graphs are known to be
        {\sc Strongly Chordal} and there has recently been interest in 
delineating the difference between these two graph classes, on the 
assumption that this difference was not very big \cite{Lafond17,NR16}. 
Our result actually implies a large difference, as
        it was recently shown by Mengel that there are {\sc Strongly Chordal Split} graphs of  
mim-width linear in the number of vertices \cite{Mengel16}.

        We contrast our positive result with a proof that {\sc Hamiltonian 
Cycle} is $\NP$-complete on graphs of linear mim-width $1$, even when 
given a decomposition.
        Panda and Pradhan~\cite{PP08} showed that {\sc Hamiltonian Cycle} is $\NP$-complete 
on {\sc Rooted Directed Path} graphs and we show that the 
graphs constructed in their reduction have linear mim-width $1$.
        This provides evidence that the class of graphs of linear 
mim-width $1$ is larger than one might have previously expected. Up 
until now, on all graph classes of linear mim-width $1$, {\sc 
Hamiltonian Cycle} was known to be polynomial time ({\sc Permutation}), 
or even linear time ({\sc Interval}) solvable.
  This can be compared with the fact that parameterized by clique-width, \fvs\ is $\FPT$~\cite{BST13} and \textsc{Hamiltonian Cycle} only admits an $\XP$ algorithm~\cite{BKK17,EGW01} but is $\W[1]$-hard~\cite{FGL10} (see also~\cite{FGLS14}).

Let us explain some of the essential ingredients of our dynamic 
programming algorithm.
A crucial observation is that if a forest contains no induced matching 
of size $w+1$, then the number of internal vertices of the forest is 
bounded by $6w$ (Lemma~\ref{lem:reducedforest}).
        Motivated by this observation, given a forest, we define the forest 
obtained by removing its isolated vertices and leaves to be its 
\emph{reduced forest}.
        The observation implies that in a cut $(A,B)$ of a graph $G$, 
        there are at most $\cO(n^{6w})$ possible reduced forests of some induced forests consisting of edges crossing this cut. 
        We enumerate all of them, and use these as indices of the table of our 
algorithm.
                
        However, the interaction of an induced forest $F$ in $G$ with the edges of the bipartite graph crossing the cut $(A, B)$, denote this graph by $G_{A, B}$, is not completely described by its reduced forest $R$. Observe that there might still be edges in the graph $G_{A, B}$ after removing the vertices of $R$; however, these edges are not contained in the forest $F$. We capture this property of $F$ by considering a minimal vertex cover of $G_{A, B} - V(R)$ that avoids all vertices in $F$. Hence, as a second component of the table indices, we enumerate all minimal vertex covers of $G_{A, B} - V(R)$, for any possible reduced forest $R$.
                
        To argue that the number of table entries stays bounded by 
$n^{\cO(w)}$, we use the known result that every $n$-vertex bipartite 
graph with maximum induced matching size $w$ has at most $n^{w}$ minimal 
vertex covers.
	Remark that in the companion paper~\cite{JaffkeKT2017}, we use minimal vertex covers of a bipartite graph in a similar way. 
	However, in the algorithms described in~\cite{JaffkeKT2017}, the full intersection of a solution with a cut could be used as a part of the table indices, whereas in the present paper, we can only store reduced forests (as opposed to the full forests), resulting in a more technical exposition.

        The rest of the paper is organized as follows: After giving some 
preliminary definitions and tools in Section \ref{sec:preliminaries}, in 
Section~\ref{sec:reducedforest}, we give necessary lemmas regarding 
reduced forests.
        We obtain our algorithm in Section~\ref{sec:fvs}.
        In Section~\ref{sec:hamcyc}, we prove the hardness result for \textsc{Hamiltonian Cycle}
         and in Section~\ref{sec:graphclass}, we discuss new graph classes of bounded 
mim-width.
	
	\section{Preliminaries}\label{sec:preliminaries}

	For integers $a$ and $b$ with $a \le b$, we let $[a..b] \defeq \{a, a+1, \ldots, b\}$ and if $a$ is positive, we define $[a] \defeq [1..a]$. 
	Every graph in this paper is finite, undirected and simple.
	For a graph $G$ we denote by $V(G)$ and $E(G) \subseteq {V(G) \choose 2}$ its vertex and edge set, respectively.
	For graphs $G$ and $H$ we say that $G$ is a {\em subgraph} of $H$, if $V(G) \subseteq V(H)$ and $E(G) \subseteq E(H)$. 
	For a vertex set $X \subseteq V(G)$, we denote by $G[X]$ the subgraph {\em induced} by $X$, i.e.\ $G[X] \defeq (X, E(G) \cap {X \choose 2})$. 
	We use the shorthand $G - X$ for $G[V(G) \setminus X]$. 
		For two graphs $G_1$ and $G_2$,  $G_1\cup G_2$ is the graph with the vertex set $V(G_1) \cup V(G_2)$ and the edge set $E(G_1) \cup E(G_2)$, 
and $G_1\cap G_2$ is the graph with the vertex set $V(G_1) \cap V(G_2)$ and the edge set $E(G_1) \cap E(G_2)$.
	For a vertex $v \in V(G)$, we denote by $N_G(v)$ the set of {\em neighbors} of $v$ in $G$, i.e.~$N_G(v) \defeq \{w \in V(G) \mid \{v, w\} \in E(G)\}$, 
	and the number of neighbors of $v$ is called its \emph{degree}, denoted by $\deg_G(v) \defeq \card{N_G(v)}$.
	For $A\subseteq V(G)$, let $N_G(A)$ be the set of vertices in $V(G)\setminus A$ having a neighbor in $A$.
	We drop $G$ as a subscript if it is clear from the context.
	We denote by $\mathcal{C}(G)$ the set of connected components of $G$.

For two (disjoint) vertex sets $X, Y \subseteq V(G)$, 
	we denote by $G[X, Y]$ the bipartite subgraph of $G$ with bipartition $(X, Y)$ such that for $x\in X, y\in Y$, $x$ and $y$ are adjacent in $G[X, Y]$ if and only if they are adjacent in $G$. A {\em cut} of $G$ is a bipartition $(A, B)$ of its vertex set.
	A set $M$ of edges is a \emph{matching} if no two edges in $M$ share an endpoint, and a matching $\{a_1b_1, \ldots, a_kb_k\}$ is  \emph{induced} if there are no other edges in the subgraph induced by 
	$\{a_1, b_1, \ldots, a_k, b_k\}$. 
	A vertex set $S \subseteq V(G)$ is a \emph{vertex cover} of $G$ if every edge in $G$ is incident with a vertex in $S$.

	For $r \in \bN$, a graph $G$ is called {\em $r$-regular} if $\deg_G(v) = r$ for all $v \in V(G)$. A connected $2$-regular graph is called a {\em cycle}. A graph that does not contain a cycle as a subgraph is called a {\em forest} and a connected forest is a {\em tree}. A tree of maximum degree is called a {\em path} and we refer to the {\em length} of a path as the number of its edges.

	A {\em star} is a tree on at least three vertices containing a special vertex, called its {\em central} vertex, adjacent to all other vertices.
	We require a star to have at least three vertices to emphasize the distinction between a star and a graph consisting of a single edge, as they require different treatment in our algorithm.

\subsection{Parameterized Complexity}

We now give the basic definition in parameterized complexity and refer to \cite{Cygan15,DowneyF13} for an introduction.

\begin{definition}[Parameterized Problem, $\FPT$, $\XP$]
	Let $\Sigma$ be an alphabet. A {\em parameterized problem} is a set $\Pi \subseteq \Sigma^* \times \mathbb{N}$, the second component being the {\em parameter} which usually expresses a structural measure of the input.
	\begin{enumerate}
		\item A parameterized problem $\Pi$ is {\em fixed-parameter tractable} ($\FPT$) if there exists an algorithm that for any $\langle x, k \rangle \in \Sigma^* \times \mathbb{N}$ decides whether $\langle x, k \rangle \in \Pi$ in time $f(k) \cdot |x|^{\cO(1)}$, for some computable function $f$.
		\item A parameterized problem $\Pi$ is in $\XP$ if there exists an algorithm that for any $\langle x, k \rangle \in \Sigma^* \times \mathbb{N}$ decides whether $\langle x, k \rangle \in \Pi$ in time $f(k) \cdot |x|^{g(k)}$, for some computable functions $f$ and $g$.
	\end{enumerate}
\end{definition}

\newcommand*{\dectree}{T}
\newcommand*{\decf}{\mathcal{L}}
\newcommand{\crossinggraph}[1]{G_{#1, \bar{#1}}}
\newcommand{\crossinggraphAB}[2]{G_{#1, #2}}	

\subsection{Branch Decompositions and Mim-Width}

	For a graph $G$ and a vertex subset $A$ of $G$, we define $\mimval_G(A)$ to be the maximum size of an induced matching in $G[A, V(G) \setminus A]$. 

A pair $(\dectree, \decf)$ of a subcubic tree $\dectree$ and a bijection $\decf$ from $V(G)$ to the set of leaves of $\dectree$ is called a \emph{branch decomposition}.
For each edge $e$ of $\dectree$, 
let $\dectree^e_1$ and $\dectree^e_2$ be the two connected components of $\dectree-e$, and 
let $(A^e_1, A^e_2)$ be the vertex bipartition of $G$ such that for each $i\in \{1,2\}$, 
$A^e_i$ is the set of all vertices in $G$ mapped to leaves contained in $\dectree^e_i$ by $\decf$. 
The {\em mim-width of $(\dectree, \decf)$}, denoted by $\mimw(\dectree, \decf)$, is defined as $\max_{e \in E(\dectree)} \mimval_G(A^e_1)$.
The minimum mim-width over all branch decompositions of $G$ is called the {\em mim-width of $G$}, and the {\em linear mim-width of $G$} if $\dectree$ is restricted to a path with a pendant leaf at each node.
If $\abs{V(G)}\le 1$, then $G$ does not admit a branch decomposition, and the mim-width of $G$ is defined to be $0$.

To avoid confusion, we refer to elements in $V(T)$ as {\em nodes} and elements in $V(G)$ as {\em vertices} throughout the rest of the paper.
Given a branch decomposition, one can subdivide an arbitrary edge and let the newly created vertex be the root of $\dectree$, in the following denoted by $r$. Throughout the following we assume that each branch decomposition has a root node of degree two. 
For two nodes $t, t' \in V(T)$, we say that $t'$ is a {\em descendant} of $t$ if $t$ lies on the path from $r$ to $t'$ in $T$.
For $t \in V(\dectree)$, we denote by $G_t$ the subgraph induced by all vertices that are mapped to a leaf that is a descendant of $t$, i.e.\ $G_t = G[X_t]$, where $X_t = \{v \in V(G) \mid \decf^{-1}(t') = v \mbox{ where } t' \mbox{ is a descendant of $t$ in $\dectree$}\}$. We use the shorthand $V_t$ for $V(G_t)$ and let $\bar{V_t} \defeq V(G) \setminus V_t$.

The following definitions which we relate to branch decompositions of graphs will play a central role in the design of the algorithms in Section~\ref{sec:fvs}.

\begin{definition}[Boundary]
	Let $G$ be a graph and $A, B \subseteq V(G)$ such that $A \cap B = \emptyset$. We let $\bd_B(A)$ be the set of vertices in $A$ that have a neighbor in $B$, i.e.\ $\bd_B(A) \defeq \{v \in V(A) \mid N(v) \cap B \neq \emptyset\}$. We define $\bd(A) \defeq \bd_{V(G) \setminus A}(A)$ and call $\bd(A)$ the {\em boundary} of $A$ in $G$.
\end{definition}

\begin{definition}[Crossing Graph]
	Let $G$ be a graph and $A, B \subseteq V(G)$. If $A \cap B = \emptyset$, we define the graph $G_{A, B} \defeq G[\bd_B(A), \bd_A(B)]$ to be the {\em crossing graph from $A$ to $B$}.
\end{definition}
	
	If $(\dectree, \decf)$ is a branch decomposition of $G$ and $t_1, t_2 \in V(\dectree)$ such that $V_{t_1}\cap V_{t_2}=\emptyset$, we use the shorthand $\crossinggraphAB{t_1}{t_2} \defeq \crossinggraphAB{V_{t_1}}{V_{t_2}}$. We use the analogous shorthand notations $\crossinggraphAB{t_1}{\bar{t_2}} \defeq \crossinggraphAB{V_{t_1}}{\bar{V_{t_2}}}$ and $\crossinggraphAB{\bar{t_1}}{t_2} \defeq \crossinggraphAB{\bar{V_{t_1}}}{V_{t_2}}$ (whenever these graphs are defined). For the frequently arising case when we consider $\crossinggraph{t}$ for some $t \in V(\dectree)$, we refer to this graph as the {\em crossing graph w.r.t.\ $t$}.

\newcommand{\nummis}{\mathrm{mis}}
\newcommand{\nummvc}{\mathrm{mvc}}

\subsection{The Minimal Vertex Covers Lemma} 
	Let $G$ be a graph. 
	We prove that given a set $A \subseteq V(G)$, the number of minimal vertex covers in $G_{A, V(G) \setminus A}$ is bounded by $n^{\mimval_G(A)}$, and 
	furthermore, the set of all minimal vertex covers can be enumerated in time $n^{\mathcal{O}(\mimval_G(A))}$. 
	This observation is crucial to argue that in our dynamic programming algorithm, there are at most $n^{\cO(w)}$ table entries to consider at each node of the given branch decomposition $(T, \decf)$, where $w$ denotes the mim-width of $(T, \decf)$.
	Notice that the bound on the number can be easily obtained by combining two results, \cite[Lemma 1]{BelmonteV2013} and \cite[Theorem 3.5.5]{VatshelleThesis};
	however, an enumeration algorithm is not given explicitly. To be self-contained, we state and prove it here.

    \begin{corollary}[Minimal Vertex Covers Lemma]\label{cor:mvc:lemma}
		Let $H$ be a bipartite graph on $n$ vertices with a bipartition $(A,B)$. 
		The number of minimal vertex covers of $H$ is at most $n^{\mimval_H(A)}$, 
		and the set of all minimal vertex covers of $H$ can be enumerated in time $n^{\mathcal{O}(\mimval_H(A))}$.
    \end{corollary}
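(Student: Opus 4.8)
The plan is to exploit a tight structural description of minimal vertex covers of a bipartite graph and then encode each of them by a small subset of one side. First I would recall the standard equivalence: $S \subseteq V(H)$ is a minimal vertex cover of $H$ if and only if $V(H) \setminus S$ is a maximal independent set. Using this together with bipartiteness, I would determine the shape of a minimal vertex cover $S$: writing $S_A \defeq S \cap A$ and $S_B \defeq S \cap B$, one checks that $S_B = N_H(A \setminus S_A)$, where $\supseteq$ holds because every uncovered vertex of $A$ forces all of its neighbours into $S$, and $\subseteq$ holds because, by minimality, no vertex of $S_B$ can have all of its neighbours inside $S_A$. Symmetrically, $S_A = N_H(B \setminus S_B)$. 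In particular $S$ is completely determined by $S_A$, and moreover $S_A = N_H(Y)$ for $Y \defeq B \setminus S_B$.

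The second ingredient is a greedy shrinking claim: for every $Y \subseteq B$ there exists $Y' \subseteq Y$ with $\abs{Y'} \le \mimval_H(A)$ and $N_H(Y') = N_H(Y)$. To prove it, let $Y'$ be an inclusion-minimal subset of $Y$ with $N_H(Y') = N_H(Y)$. By minimality, for every $b \in Y'$ there is a vertex $a_b \in N_H(Y') \setminus N_H(Y' \setminus \{b\})$; such $a_b$ is adjacent to $b$ and to no other vertex of $Y'$, and the $a_b$ are pairwise distinct, so $\{a_b b : b \in Y'\}$ is an induced matching of $H$, whence $\abs{Y'} \le \mimval_H(A)$.

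Combining the two, I would map each minimal vertex cover $S$ to a fixed inclusion-minimal set $Y' \subseteq B \setminus S_B$ with $N_H(Y') = N_H(B \setminus S_B) = S_A$; this is an injection from the set of minimal vertex covers into the family of subsets of $B$ of size at most $k \defeq \mimval_H(A)$, because $S$ is recovered from $Y'$ via $S = N_H(Y') \cup N_H(A \setminus N_H(Y'))$. Hence there are at most $n^k$ minimal vertex covers (this bound also follows by combining \cite[Lemma~1]{BelmonteV2013} with \cite[Theorem~3.5.5]{VatshelleThesis}, as remarked above). The enumeration algorithm now falls out: iterate over all $Y' \subseteq B$ with $\abs{Y'} \le k$ --- there are $n^{\cO(k)}$ of them --- and for each compute $S_A \defeq N_H(Y')$, $S_B \defeq N_H(A \setminus S_A)$ and $S \defeq S_A \cup S_B$, check in polynomial time whether $S$ is a vertex cover and whether it is minimal (that is, whether every vertex of $S$ has a neighbour in $V(H) \setminus S$), and if so output $S$ unless it was output before. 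Every set output is a minimal vertex cover by construction, and every minimal vertex cover is produced (by the $Y'$ it is mapped to), so the algorithm is correct and runs in time $n^{\cO(k)}$.

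The step I expect to be most delicate is the reconstruction identity: one must verify that for a genuine minimal vertex cover both $S_A = N_H(Y')$ and $S_B = N_H(A \setminus S_A)$ hold with equality, so that the candidate set rebuilt from $Y'$ coincides with $S$ --- this is exactly where minimality is invoked, and it is invoked twice. The remaining points are routine: isolated vertices of $H$ lie in no minimal vertex cover and may be deleted at the outset, and if $H$ has no edges then $\emptyset$ is its unique minimal vertex cover, in agreement with the bound $n^0$.
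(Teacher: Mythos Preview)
Your proof is correct and follows essentially the same strategy as the paper: encode every minimal vertex cover by a small subset of one side of the bipartition whose neighbourhood determines the cover, then enumerate all such subsets. The only differences are cosmetic: you pick the witness set $Y'$ from $B$ while the paper picks $R$ from $A$ (a symmetric choice), and you prove the shrinking lemma directly via the private-neighbour induced-matching argument, whereas the paper invokes the equivalent statement from Belmonte--Vatshelle~\cite{BelmonteV2013} as a black box; this makes your write-up self-contained.
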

    \begin{proof}
    Let $w\defeq \mimval_H(A)$.
    For each vertex set $R\subseteq A$ with $\abs{R}\le w$, 
    let $X_R\subseteq A$ be the set of all vertices having a neighbor in $B\setminus N(R)$.
    We enumerate the sets in 
    \[\mathcal{M}=\{N(R)\cup X_R   :R\subseteq A, \abs{R}\le w\}.\]
    Clearly, we can enumerate them in time $n^{\mathcal{O}(w)}$.
    It is not difficult to see that each set in $\mathcal{M}$ is a minimal vertex cover.
    We claim that $\mathcal{M}$ is the set of all minimal vertex covers in $H$.
    
    We use the result by Belmonte and Vatshelle~\cite{BelmonteV2013} that
    for a graph $G$ and $A\subseteq V(G)$, $\mim(A)\le k$ if and only if for every $S\subseteq A$, there exists $R\subseteq S$ such that
  $N(R)\cap (V(G)\setminus A)=N(S)\cap (V(G)\setminus A)$ and $\abs{R}\le k$.
    
    Let $U$ be a minimal vertex cover of $H$.
    Clearly, every vertex in $A\setminus U$ has no neighbors in $B\setminus U$, as $U$ is a vertex cover.
    Therefore, by the result of Belmonte and Varshelle, 
    there exists $R\subseteq A\setminus U$ such that $\abs{R}\le w$ and $N(R)\cap B=N(A\setminus U)\cap B=U\cap B$.
    Clearly, $U\cap A=X_R$; if a vertex in $U\cap A$ has no neighbors in $B\setminus U$, then we can remove it from the vertex cover.
    Therefore, $U\in \mathcal{M}$, as required.    \end{proof}

\newcommand{\crosg}[1]{\crossinggraph{#1}}
\newcommand{\reduced}{\mathfrak{R}}
\newcommand{\leaves}{L}
\newcommand{\noncrossing}{\mathrm{NC}}

\section{Reduced forests}\label{sec:reducedforest}

We formally introduce the notion of a {\em reduced forest} which will be crucial to obtain the desired runtime bound of the algorithm for {\sc Feedback Vertex Set}. 

\begin{definition}[Reduced Forest]\label{def:reduced:forest}
	Let $F$ be a forest. A {\em reduced forest} of $F$, denoted by $\reduced(F)$, is an induced subforest of $F$ obtained as follows.
	\begin{enumerate}[(i)]
		\item Remove all isolated vertices of $F$.
		\item For each component $C$ of $F$ with $\card{V(C)} = 2$, remove one of its vertices. 
	\item For each component $C$ of $F$ with $\card{V(C)} \ge 3$, remove all leaves of $C$.
		\end{enumerate}
\end{definition}

	Note that if $F$ has no component that is an edge (i.e. $\card{V(C)} = 2$) then the reduced forest is uniquely defined.
	We give an upper bound on the size of a reduced forest $\reduced(F)$ by a function of the size of a maximum induced matching in the forest $F$. 
	\begin{lemma}\label{lem:reducedforest}
	Let $p$ be a positive integer. 
	If $F$ is a forest whose maximum induced matching has size at most $p$,  
	then $\abs{\reduced(F)} \le 6p$.
	\end{lemma}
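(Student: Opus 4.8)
The plan is to reduce the statement to a single tree and then construct an induced matching of $T$ that is large enough to witness the bound. Since both $\reduced(\cdot)$ and ``size of a maximum induced matching'' are additive over the connected components of a forest, and the hypothesis gives that a maximum induced matching of $F$ has size at most $p$, it suffices to prove $\abs{\reduced(C)} \le 6\mu(C)$ for every component $C$ (where $\mu(\cdot)$ denotes the size of a maximum induced matching) and then sum: $\abs{\reduced(F)} = \sum_C \abs{\reduced(C)} \le 6\sum_C\mu(C) = 6\mu(F) \le 6p$. The cases $\abs{V(C)}\le 2$ are immediate, so I would fix a tree $T$ with $\abs{V(T)}\ge 3$; then $\reduced(T)$ is exactly the subtree $R$ of $T$ induced by its non-leaf vertices, and the case $\abs{V(R)}\le 2$ is again trivial (as $\mu(T)\ge 1$), so assume $\abs{V(R)}\ge 3$.

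The first tool is that every leaf $a$ of $R$ is a non-leaf of $T$, and each of its neighbours outside $R$ is forced to be a \emph{leaf} of $T$ (it was deleted when forming $R$, and it is not isolated since it touches $a$). Choosing one such private $T$-leaf $m_a$ for each leaf $a$ of $R$, and using that $R$ is connected on at least three vertices so that no two leaves of $R$ are adjacent, the set $M_0 \defeq \{am_a\}$ is an induced matching of $T$; hence $\mu(T)\ge s$, where $s$ is the number of leaves of $R$. If $R$ is a path, attaching $m_a$ at its two ends yields an induced path on $\abs{V(R)}+2$ vertices, so $\mu(T)\ge\abs{V(R)}/3$ and we are done; so assume $R$ has a vertex of degree at least $3$.

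The main step accounts for the degree-$2$ vertices of $R$. Contract every maximal path of degree-$2$ vertices of $R$; the remaining ``special'' vertices (the $s$ leaves and the $b$ branch vertices of $R$) form a tree with $s+b-1$ edges, which I call \emph{super-edges}, super-edge $j$ spanning a path of $R$ through $q_j\ge 0$ degree-$2$ vertices, so $\abs{V(R)} = s+b+\sum_j q_j$. A handshake in $R$ gives $b\le s-2$, and since $R$ is not a path no super-edge joins two leaves, so exactly $s$ super-edges have one leaf endpoint and $b-1$ have two branch endpoints. Now I would enlarge $M_0$: from each super-edge add an induced matching of its degree-$2$ vertices, except that for a super-edge with a leaf endpoint the single degree-$2$ vertex adjacent to that leaf is discarded. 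This union is still an induced matching of $T$ --- after the discards no chosen edge is adjacent to an edge of $M_0$, and degree-$2$ vertices of distinct super-edges induce no edge of $T$ (since $R$ is an induced subgraph of $T$), so in particular edges picked next to a common branch vertex are at distance at least two. A leaf-endpoint super-edge then contributes at least $\lfloor q_j/3\rfloor\ge(q_j-2)/3$ and a two-branch super-edge at least $\lfloor(q_j+1)/3\rfloor\ge(q_j-1)/3$ new edges, so with $q=\sum_j q_j$,
\[
\mu(T)\;\ge\; s + \frac{1}{3}\Bigl(\sum_{\text{leaf-end }j}(q_j-2)+\sum_{\text{two-branch }j}(q_j-1)\Bigr)\;=\;s+\frac{q-2s-(b-1)}{3}\;=\;\frac{s+q-b+1}{3}.
\]
Thus $\abs{V(R)} = s+q+b \le 3\mu(T)+2b-1 \le 3\mu(T)+2(s-2)-1 \le 5\mu(T)$, using $b\le s-2$ and $s\le\mu(T)$, and summing over the components of $F$ finishes the proof.

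The step I expect to be the real obstacle is pinning down the constant $6$. Bounding the three ``types'' of vertices of $R$ --- leaves, branch vertices, and degree-$2$ vertices lying on long paths --- by three independent crude estimates loses far too much. Getting down to $6p$ requires combining the pendant-leaf matching $M_0$ with the interior matchings of \emph{all} the super-edges into one induced matching, and the bookkeeping that justifies this glueing --- ensuring no two chosen edges share an endpoint or have adjacent endpoints, especially around the branch vertices --- is where the care is needed.
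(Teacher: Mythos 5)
Your proof is correct, and it reaches the conclusion by a genuinely different route than the paper. The paper argues by induction on the matching number $m(F)$: it strips off components containing no path of length $4$, performs surgery on runs of three consecutive degree-$2$ vertices of $\reduced(F)$ (replacing $v_1v_2v_3v_4v_5$ by the edge $v_1v_5$ and showing the matching number drops), and only then does the final count $t+t+2(2t-1)\le 6t$ per component. You instead build, in one shot, an explicit induced matching of $T$ of size at least $(s+q-b+1)/3$ by combining the pendant-leaf matching $M_0$ with interior matchings of the degree-$2$ runs, and close with the handshake inequality $b\le s-2$. The underlying combinatorial content is the same in both arguments --- each leaf of $R$ owns a private pendant $T$-leaf, long degree-$2$ runs of $R$ yield further independent matching edges, and branch vertices are dominated by leaves --- but your direct, non-inductive execution avoids the graph-modification step and its accompanying case analysis, and in fact yields the stronger per-component bound $\abs{V(R)}\le 5\mu(T)-5$ for components with a branch vertex (and $\le 3\mu(T)$ for paths), so the constant $6$ could be improved to $5$. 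The one step you flag as delicate --- that the glued set of edges is an induced matching --- does go through: the interior edges use only degree-$2$ vertices of $R$, which (because $R$ is an induced subtree of the tree $T$) are adjacent neither to degree-$2$ vertices of other super-edges nor to leaves of $R$ other than the discarded positions, and never to the $T$-leaves $m_a$; so the verification is routine once the first interior vertex next to each leaf of $R$ is discarded, exactly as you prescribe.
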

	\begin{proof}
	For a forest $F$, we denote by $m(F)$ the size of the maximum induced matching in $F$.
	We prove the lemma by induction on $m(F)$.
	We may assume $F$ contains no isolated vertices, as they will be removed in the reduced forest.
	If $m(F)=0$, then $F$ contains no edges, and we are done.
	If $m(F)=1$, then $F$ consists of one component that contains no path of length $4$ 
	which implies that $\reduced(F)$ contains at most $2$ vertices.
	We may assume $m(F)=p>1$.
	
	Suppose $F$ contains a connected component $C$ containing no path of length $4$.
	As observed, $C$ contains no induced matching of size larger than one.
	Since $C$ contains an edge, we have $m(F-V(C))=m(F)-1$.
	By the induction hypothesis, $\reduced(F-V(C))$ contains at most $6(p-1)$ vertices, and 
	we have that $\reduced(F)$ contains at most $6(p-1)+2\le 6p$ vertices.
	We may assume every component $C$ of $F$ contains a path of length $4$, implying that 
	$\reduced(C)$ contains at least $3$ vertices.
	
	Now, suppose $F$ contains a path $v_1v_2v_3v_4v_5$ such that 
	$v_1$ and $v_5$ are not leaves of $F$, and $v_2, v_3, v_4$ have degree $2$ in $\reduced(F)$.
	Let $F'$ be the forest obtained from $F$ by removing $v_2, v_3, v_4$ and adding an edge $v_1v_5$.
	We observe that $m(F')\le m(F)-1$.
	Let $M$ be a maximum induced matching of $F'$. If $M$ contains the edge $v_1v_5$, then 
	we can obtain an induced matching for $F$ by removing $v_1v_5$ and adding $v_1v_2$ and $v_4v_5$.
	If $M$ does not contain $v_1v_5$, then one of $v_1$ and $v_5$ is not matched by $M$.
	Then we can select one of $v_2v_3$ and $v_3v_4$ to increase the size of an induced matching.
	Thus, we have $m(F')\le m(F)-1$.
	By the induction hypothesis, $\reduced(F')$ contains at most $6(p-1)$ vertices, 
	and thus $\reduced(F)$ contains at most $6(p-1)+3=6p-3$ vertices.
	We may assume there is no such path.
	
	Let $C$ be a component of $F$. As $\reduced(C)$ contains at least $3$ vertices, 
	the leaves of $\reduced(C)$ form an independent set. 
	Let $t$ be the number of leaves in $\reduced(C)$.
	Since each leaf of $\reduced(C)$ is incident with a leaf of $C$, 
	$C$ contains an induced matching of size at least $t$.
	Thus, $m(F-V(C))\le m(F)-t$.
	Note that $\reduced(C)$ contains at most $t$ vertices of degree at least $3$. 
	Also, by the previous argument, 
	there are at most $2$ vertices between two vertices of degree other than $2$ in $\reduced (C)$.
	Thus, $\reduced(C)$ contains at most $t+t+2(2t-1)\le 6t$ vertices. 
	Therefore, the result follows by the induction hypothesis.
	\end{proof}

	Let $(A,B)$ be a vertex partition of a graph $G$.
	Let $R$ be some forest in $G_{A,B}$. 
	In the algorithm, we will be asking if there exists an induced forest $F$ in $G[A\cup \bd(B)]$ such that 
	$F\cap G_{A,B}$ has $R$ as a reduced forest.
	However, this formulation turns out to be quite technical, as we need to significantly consider some edges in $B$ when we merge two partial solutions.
	To ease this task in the dynamic programming algorithm, we define the following notion on an induced forest in $G[A\cup \bd(B)]-E(G[\bd(B)])$.

\begin{definition}[Forest respecting a forest and a minimal vertex cover]\label{def:respect}
	Let $(A, B)$ be a vertex partition of a graph $G$.
	Let $R$ be an induced forest in $G_{A,B}$ and $M$ be a minimal vertex cover 
	of $G_{A, B}-V(R)$. 
	An induced forest $F$ in $G[A\cup \bd(B)]-E(G[\bd(B)])$ \emph{respects} $(R, M)$ if it satisfies the following.
	\begin{enumerate}[(i)]
	\item $R$ is a reduced forest of $G_{A,B}\cap F$.
	\item $V(F) \cap M = \emptyset$.
\end{enumerate}
\end{definition}

Suppose $R$ is an induced forest in $G_{A,B}$. For an induced forest $F$ of $G$ containing $V(R)$, 
	there are two necessary conditions for $R$ to be a reduced forest of $F\cap G_{A,B}$. 
	First, if $F\cap G_{A,B}$ contains a vertex $v$ in $G_{A,B}-V(R)$ having at least two neighbors in $R$, 
	then $v$ should be contained in the reduced forest.
	Therefore, in $F\cap G_{A,B}$, every vertex in $V(F\cap G_{A,B})\setminus V(R)$ should have at most one neighbor in $R$.
	Second, every leaf $x$ of $R$ should have a neighbor $y$ in $G_{A,B}-V(R)$ such that the only neighbor of $y$ in $R$ is $x$; 
	otherwise, we would have removed $x$ when taking a reduced forest. 
	
	Motivated by this observation we define the notion of potential leaves, which is a possible leaf neighbor of some vertex in $V(R)$.

\begin{definition}[Potential Leaves]
		Let $(A, B)$ be a vertex partition of a graph $G$.
	Let $R$ be an induced forest in $H\defeq G_{A,B}$ and $M$ be a minimal vertex cover 
	of $H-V(R)$.
	For each vertex $x \in V(R)$, we define its set of {\em potential leaves}, denoted by $\potentialleaves_{R,M}(x)$, as 
	\begin{align*}
		\potentialleaves_{R,M}(x) \defeq N_H(x) \setminus N_H(V(R)\setminus \{x\}) \setminus (M\cup V(R)).
	\end{align*}
\end{definition}

	We can observe the following.
	\begin{observation}
	Every forest $F$ respecting $(R, M)$ 
	should contain at least one vertex in $\potentialleaves_{R,M}(x)$ for each leaf $x$ of $R$.
	\end{observation}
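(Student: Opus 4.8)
The plan is to unwind the definition of a reduced forest. Fix a forest $F$ that respects $(R,M)$ and set $F' \defeq G_{A,B}\cap F$; by Definition~\ref{def:respect}, $R$ is a reduced forest of $F'$ and $V(F)\cap M=\emptyset$. Let $x$ be a leaf of $R$. The first step is to locate $x$ inside $F'$. Isolated vertices and $2$-vertex components of $F'$ contribute at most a single \emph{isolated} vertex to $R$, so a degree-one vertex of $R$ must come from a component $C$ of $F'$ with $\card{V(C)}\ge 3$, and on such a component the reduction deletes exactly the leaves, i.e.\ $\reduced(C)=C[\{v\in V(C):\deg_C(v)\ge 2\}]$. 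Since $x$ lies in $R$ but has degree one there, $x$ is not a leaf of $C$, so $\deg_C(x)\ge 2$, while exactly one neighbour of $x$ in $C$ is a non-leaf of $C$; hence $x$ has a neighbour $y$ in $C$ that is a leaf of $C$. This $y$ satisfies $\deg_{F'}(y)=1$ with $x$ as its unique $F'$-neighbour, it lies in $V(F')\subseteq V(F)$, and $y\notin V(R)$ because leaves of components of size at least $3$ are removed when forming $R$.

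It then remains to verify that $y\in\potentialleaves_{R,M}(x)$, i.e.\ (with $H\defeq G_{A,B}$) that $y\in N_H(x)$, $y\notin N_H(V(R)\setminus\{x\})$, and $y\notin M\cup V(R)$. Three of these are immediate: $xy\in E(F')\subseteq E(H)$ yields $y\in N_H(x)$; $y\in V(F)$ together with $V(F)\cap M=\emptyset$ yields $y\notin M$; and $y\notin V(R)$ was already observed. The substantive point is $y\notin N_H(V(R)\setminus\{x\})$. Suppose toward a contradiction that some $z\in V(R)\setminus\{x\}$ is adjacent to $y$ in $H$. The edges of $H=G[\bd_B(A),\bd_A(B)]$ run between $\bd_B(A)\subseteq A$ and $\bd_A(B)=\bd(B)$ (as $A=V(G)\setminus B$), so $yz$ joins a vertex of $A$ to a vertex of $\bd(B)$; in particular it is not an edge inside $\bd(B)$ and therefore survives in $G[A\cup\bd(B)]-E(G[\bd(B)])$. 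Since $V(R)\subseteq V(F')\subseteq V(F)$, both $y$ and $z$ lie in $V(F)$, and because $F$ is an \emph{induced} subgraph of $G[A\cup\bd(B)]-E(G[\bd(B)])$ we conclude $yz\in E(F)$, hence $yz\in E(H)\cap E(F)=E(F')$ — contradicting that $x\neq z$ is the only $F'$-neighbour of $y$. Thus $y\in\potentialleaves_{R,M}(x)$, and since $y\in V(F)$ the observation follows.

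I expect the only real obstacle to be keeping the three ambient graphs straight — $G$, the bipartite crossing graph $H=G_{A,B}$, and $G[A\cup\bd(B)]-E(G[\bd(B)])$ — and pinning down the one structural fact that makes the argument work, namely that an edge of $H$ always has one endpoint in $A$ and hence never lies inside $\bd(B)$, so deleting $E(G[\bd(B)])$ cannot destroy it. Once that is isolated, the edge $yz$ is forced into $F$ and everything else is a direct check against the definitions of reduced forest and of potential leaves.
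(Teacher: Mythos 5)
Your proof is correct and follows exactly the route the paper itself sketches in the paragraph preceding the observation (a degree-one vertex of $R$ must sit in a component of $F\cap G_{A,B}$ of size at least three and retain a pendant neighbour $y$ that was stripped off during reduction, and the inducedness of $F$ forces $y$ to avoid $N_H(V(R)\setminus\{x\})$); the paper leaves these checks implicit, and your write-up supplies them accurately.
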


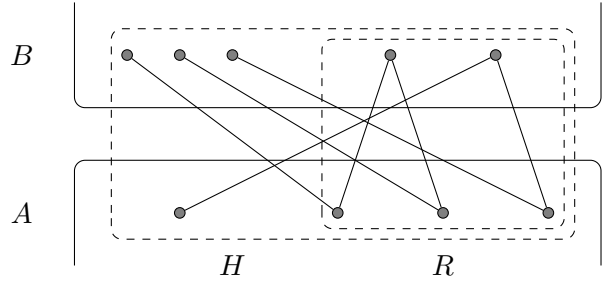
\begin{figure}
  \centering
  \begin{tikzpicture}[scale=0.7]
  \tikzstyle{w}=[circle,draw,fill=black!50,inner sep=0pt,minimum width=4pt]

	\draw[rounded corners] (0, -2)--(0,0)--(10,0)--(10,-2);
	\draw[rounded corners] (0, 3)--(0,1)--(10,1)--(10,3);
     \node at (-1, -1) {$A$};
     \node at (-1, 2) {$B$};
    \node at (7, -2) {$R$};

      \draw (2,-1) node [w] (v1) {};
 \draw (5,-1) node [w] (v2) {};
 \draw (7,-1) node [w] (v3) {};
 \draw (9,-1) node [w] (v4) {};
 \draw (1,2) node [w] (w0) {};
 \draw (2,2) node [w] (w1) {};
 \draw (3,2) node [w] (w2) {};
 \draw (6,2) node [w] (w3) {};
\draw (8,2) node [w] (w4) {};

	\draw (v2)--(w3)--(v3); \draw(v4)--(w4);
	\draw (w0)--(v2);
	\draw (w1)--(v3);
	\draw (w2)--(v4);
		\draw (v1)--(w4);
		\draw[dashed, rounded corners] (4.7, -1)--(4.7,2.3)--(9.3,2.3)--(9.3,-1.3)--(4.7,-1.3)--(4.7, -1);
\draw[dashed, rounded corners] (0.7, -1)--(0.7,2.5)--(9.5,2.5)--(9.5,-1.5)--(0.7,-1.5)--(0.7, -1);
    \node at (3, -2) {$H$};

   \end{tikzpicture}     \caption{The graph $R$ is a reduced forest of $H$. }\label{fig:setting}
\end{figure}

	For a subset $A'$ of $A$, we consider a pair of an induced forest $R'$ and a minimal vertex cover $M'$ of $G_{A', V(G)\setminus A'}-V(R')$
	and we say that this pair is a restriction of a pair of $R$ and $M$ for $A$, if they satisfy certain natural properties.  
	In the dynamic programming algorithm, we use this notion to study the structure a partial solution w.r.t.\ a cut corresponding to a node $t$ in the branch decomposition induces on the cuts corresponding to the children of $t$.

\begin{definition}[Restriction of a reduced forest and a minimal vertex cover]
	Let $(A_1, A_2, B)$ be a vertex partition of a graph $G$.
	Let $R$ be an induced forest in $G_{A_1\cup A_2,B}$ and $M$ be a minimal vertex cover 
	of $G_{A_1\cup A_2, B}-V(R)$. 
	An induced forest $R_1$ in $G_{A_1, A_2\cup B}$ and a minimal vertex cover $M_1$ of $G_{A_1, A_2\cup B}-V(R_1)$ are \emph{restrictions} of $R$ and $M$ to $G_{A_1, A_2\cup B}$
	if they satisfy the following:
	\begin{enumerate}
	\item $V(R)\cap A_1\subseteq V(R_1)$ and for every $v\in V(R)\cap B$ having at least two neighbors in $V(R)\cap A_1$, 
$v\in V(R_1)$.
	\item $(V(R_1)\setminus V(R))\cap B=\emptyset$ and $V(R_1)\cap M=\emptyset$.
	\item Every vertex in $(V(R_1)\setminus V(R))\cap A_1$ has at most one neighbor in $V(R)\cap B$.
	\item $V(R)\cap M_1=\emptyset$ and $M\cap A_1\subseteq M_1$.
	\item Let $v$ be a vertex in $M\cap B$ incident with an edge $vw$ in $G_{A_1, B}-V(R)$ for some $w\notin V(R_1)$ that is not covered by any vertices in $M\setminus \{v\}$.
	Then either $v\in M_1$ or $w\in M_1$.
	\end{enumerate}
\end{definition}

	Lastly, we define a notion for merging two partial solutions.

\begin{definition}[Compatibility]
	Let $(A_1, A_2, B)$ be a vertex partition of a graph $G$.
	Let $R$ be an induced forest in $G_{A_1\cup A_2, B}$, and 
	for each $i\in \{1, 2\}$, let $R_i$ be an induced forest in $G_{A_i, A_{3-i}\cup B}$, and  
	$P_i$ be a partition of $\mathcal{C}(R_i)$.
	We construct an auxiliary graph $Q$ with respect to $(R, R_1, R_2, P_1, P_2)$ in $G$ as follows.
	Let $Q$ be the graph on the vertex set $\mathcal{C}(R)\cup \mathcal{C}(R_1)\cup \mathcal{C}(R_2)$ such that
	\begin{itemize}
	\item for $H_1$ and $H_2$ contained in distinct sets of $\mathcal{C}(R), \mathcal{C}(R_1), \mathcal{C}(R_2)$, $H_1$ is adjacent to $H_2$ in $Q$ 
	if and only if $V(H_1)\cap V(H_2)\neq \emptyset$, 
	\item for $H_1, H_2\in \mathcal{C}(R_i)$, $H_1$ is adjacent to $H_2$ if and only if $H_1$ and $H_2$ are contained in the same part of $P_i$, 
	\item $\mathcal{C}(R)$ is an independent set. 
		\end{itemize}
		We say that the tuple $(R, R_1, R_2, P_1, P_2)$ is \emph{compatible} in $G$ if $Q$ has no cycles.
		We define $\mathcal{U}(R, R_1, R_2, P_1, P_2)$ to be the partition of $\mathcal{C}(R)$ such that
		for $H_1, H_2\in \mathcal{C}(R)$, $H_1$ and $H_2$ are contained in the same part 
		if and only if they are contained in the same connected component of $Q$.
\end{definition}

The remainder of this section is devoted to proving several technical propositions related to the notions introduced above that will be important to establish the correctness of the algorithm proposed in Section \ref{sec:fvs}. Let $t \in V(T)$ be a no-leaf node in the given branch decomposition of $G$ with children $a$ and $b$. 
In Section \ref{sec:reducedforest:top:bottom} we show that given any forest $F_t$ in $G[V_t \cup \bd(\bar{V_t})]$ respecting $(R_t, M_t)$, where $R_t$ is an induced forest in $G_{t, \bar{t}}$ and $M_t$ a minimal vertex cover of $G_{t, \bar{t}} - V(R_t)$, we can find restrictions $(R_a, M_a)$ and $(R_b, M_b)$ to $G_{a, \bar{a}}$ and $G_{b, \bar{b}}$, respectively, such that a forest $F_a$ in $G[V_a \cup \bd(\bar{V_a})]$ respecting $(R_a, M_a)$ and a forest $F_b$ in $G[V_b \cup \bd(\bar{V_b})]$ respecting $(R_b, M_b)$ can be combined to the forest $F_t$. In other words, $(R_t, R_a, R_b, P_a, P_b)$, where $P_a$ and $P_b$ denote the partitions induced by $F_a$ and $F_b$, respectively, is compatible. In Section \ref{sec:reducedforest:bottom:top} we prove the converse direction. For the sake of generality, we will state the results in terms of a $3$-partition $(A_1, A_2, B)$ rather than $(V_a, V_b, V_t)$ (i.e.\ independently of a branch decomposition of a graph).

	\subsection{Top to bottom}\label{sec:reducedforest:top:bottom}

	\begin{proposition}\label{prop:restriction1}
	Let $(A_1, A_2, B)$ be a vertex partition of a graph $G$.
	Let $R$ be an induced forest in $G_{A_1\cup A_2, B}$  and $M$ be a minimal vertex cover 
	of $G_{A_1\cup A_2, B}-V(R)$. 
	Let $H$ be an induced forest in $G[A_1\cup A_2\cup \bd(B)]-E(G[\bd(B)])$ respecting $(R, M)$.
	
	Then there are restrictions $(R_1, M_1)$ and $(R_2, M_2)$ of $(R, M)$ to $G_{A_1, A_2\cup B}$ and $G_{A_2, A_1\cup B}$, respectively, such that 
	\begin{enumerate}[(i)]
	\item for each $i\in \{1, 2\}$, $H\cap G[A_i\cup \bd(A_{3-i}\cup B)]-E(G[\bd(A_{3-i}\cup B)])$ respects $(R_i, M_i)$, \label{prop:restriction1:1}
	\item every vertex in $(V(R)\setminus (V(R_1)\cup V(R_2))) \cap B$ has at least two neighbors in $(V(R_1)\cap A_1)\cup (V(R_2)\cap A_2)$.\label{prop:restriction1:2}
	\end{enumerate}
	\end{proposition}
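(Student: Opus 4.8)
The plan is to construct the restrictions $(R_1,M_1)$ and $(R_2,M_2)$ explicitly from $H$ and verify the defining properties one by one. For $i\in\{1,2\}$, write $H_i \defeq H\cap G[A_i\cup\bd(A_{3-i}\cup B)]-E(G[\bd(A_{3-i}\cup B)])$, and let $F_i \defeq H_i\cap G_{A_i,A_{3-i}\cup B}$ be the part of $H_i$ crossing the cut $(A_i,A_{3-i}\cup B)$. The natural candidate is to take $R_i \defeq \reduced(F_i)$, the (essentially unique, up to the two-vertex-component ambiguity) reduced forest guaranteed by Definition~\ref{def:reduced:forest}; this makes item~(i) of the "respects" relation hold by construction. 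For the minimal vertex cover $M_i$ of $G_{A_i,A_{3-i}\cup B}-V(R_i)$, I would start from the set of vertices of $G_{A_i,A_{3-i}\cup B}-V(R_i)$ that are \emph{not} in $V(H_i)$ — these must cover every crossing edge not used by $H_i$, since $H_i$ is an induced subgraph and any crossing edge with both endpoints in $V(H_i)\setminus V(R_i)$ would contradict $R_i$ being a reduced forest of $F_i$ (this is exactly the first necessary condition spelled out before Definition~\ref{def:respect}) — and then shrink this set to a minimal vertex cover, making sure during the shrinking never to delete a vertex forced in by property~5 of the restriction definition. Property~(ii) of the "respects" relation, $V(H_i)\cap M_i=\emptyset$, then holds because $M_i$ is chosen inside $V(G_{A_i,\cdot})\setminus V(H_i)$ augmented only by vertices we are about to argue are outside $V(H_i)$ as well.

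Next I would check the five conditions of the restriction definition for each $i$. Conditions~1 and~3 are structural facts about reduced forests: a vertex of $V(R)\cap A_i$ that survives into $\reduced(F)$ has degree $\ge 2$ or lies on a long enough path in $F$, hence has enough crossing neighbours to survive into $\reduced(F_i)$ as well (one must be slightly careful with leaves of $R$ whose potential-leaf witnesses lie in $A_{3-i}$ versus $B$); and a newly added vertex of $(V(R_1)\setminus V(R))\cap A_1$ is internal in $F_1$ but a leaf-or-isolated vertex in $F$, which forces it to have at most one neighbour in $V(R)\cap B$. Condition~2's first half, $(V(R_1)\setminus V(R))\cap B=\emptyset$, holds because $R_1\subseteq A_1\cup B$ and any $B$-vertex kept by $\reduced(F_1)$ has $\ge 2$ neighbours in $A_1$, which (being also crossing neighbours in $G_{A_1\cup A_2,B}$) forces it into $R=\reduced(F)$; the second half $V(R_1)\cap M=\emptyset$ and condition~4's $V(R)\cap M_1=\emptyset$, $M\cap A_1\subseteq M_1$ follow from $V(H_i)\cap M=\emptyset$ (given, since $H$ respects $(R,M)$) together with the explicit construction of $M_i$. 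Condition~5 is the one I deliberately built into the construction of $M_i$, so it holds by fiat provided the shrinking step is done correctly.

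Finally, item~\ref{prop:restriction1:2} of the Proposition: if $v\in V(R)\cap B$ is dropped from both $V(R_1)$ and $V(R_2)$, then $v$ has at most one neighbour of $H$ in $A_1$ and at most one in $A_2$ (otherwise it would be internal in some $F_i$ and kept by $\reduced(F_i)$); but $v\in V(R)=V(\reduced(F))$ means $v$ is internal in $F=H\cap G_{A_1\cup A_2,B}$, hence $v$ has $\ge 2$ neighbours of $H$ in $A_1\cup A_2$, and since $H$ itself avoids the edges inside $\bd(B)$ these neighbours split as one in $A_1$ and one in $A_2$, each of which is therefore kept by the respective $\reduced(F_i)$ and lies in $(V(R_1)\cap A_1)\cup(V(R_2)\cap A_2)$. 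I expect the main obstacle to be the bookkeeping around leaves of $R$ and their potential-leaf witnesses: a leaf of $R$ at a vertex of $B$ might have its unique private leaf-neighbour (in $H$) lying in $A_1$, in $A_2$, or already in $B\cap\bd$, and tracing which side "inherits" that leaf — and hence whether that $B$-vertex stays a leaf, becomes internal, or gets dropped in each $R_i$ — is where the case analysis is densest; the choice in step~(ii) of the reduced-forest definition for two-vertex components adds another small layer of care there.
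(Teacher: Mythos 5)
Your skeleton matches the paper's: the same graphs $F_i^*$ and $F_i$, a reduced forest $R_i$ of $F_i$, and a minimal vertex cover $M_i$ extracted from an explicit covering set. But there is a genuine gap exactly at the point you defer. The choice of endpoint for two-vertex components of $F_i$ cannot be left arbitrary; the paper fixes it with a \emph{Single-edge Rule}: for a single-edge component $vw$ of $F_i$ with $v\in V(R)$ and $w\notin V(R)$, the vertex $v$ is the one kept in $R_i$. This rule is what makes condition~1 of the restriction definition ($V(R)\cap A_1\subseteq V(R_1)$) hold --- a vertex of $V(R)\cap A_1$ whose only neighbour in $F_1$ is its potential leaf forms a single-edge component, and the wrong choice would put the potential leaf into $R_1$ instead --- and it is also what your argument for item~(ii) silently needs. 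That argument asserts that $v\in V(R)\cap B$ implies $v$ is internal in $F=H\cap G_{A_1\cup A_2,B}$ and hence has at least two $H$-neighbours in $A_1\cup A_2$; this is false when $v$ entered $R$ as the chosen endpoint of a single-edge component of $F$, in which case $v$ has exactly one such neighbour $w$, and without the rule $v$ may be dropped from both $R_1$ and $R_2$ while having only the single neighbour $w$ in $(V(R_1)\cap A_1)\cup(V(R_2)\cap A_2)$. So the ``small layer of care'' you flag at the end is in fact a load-bearing piece of the construction, not bookkeeping.

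The second gap is the vertex cover. Declaring that the shrinking step ``never deletes a vertex forced in by property~5'' does not establish that such a shrinking exists, and you do not address condition~4's requirement $M\cap A_1\subseteq M_1$ at all: an arbitrary minimal vertex cover inside your starting set (the vertices of $G_{A_1,A_2\cup B}-V(R_1)$ outside $V(H_1)$) may discard a vertex $u\in M\cap A_1$ whose private edge $uz$ (private with respect to $M$) happens to be coverable by $z$ when $z\notin V(H)$. The paper instead builds an explicit intermediate cover $M''$ from three sets $M'$, $Y$, $Z$ and proves that every vertex of $M\cap A_1$ covers an edge covered by nothing else in $M''$, so it survives any minimalization; it then needs a further claim that the added set $Z$ avoids $V(F_1^*)$ so that the ``respects'' condition still holds. (Condition~5, by contrast, is essentially automatic for \emph{any} vertex cover of $G_{A_1,A_2\cup B}-V(R_1)$, since the edge $vw$ in question lies in that graph with both endpoints outside $V(R_1)$; your care is aimed at the wrong condition.) To repair the proposal you need both the explicit tie-breaking rule and a covering set for which the vertices of $M\cap A_1$ are provably irredundant.
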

	\begin{proof}
	For each $i\in \{1,2\}$, let $F_i^*\defeq H\cap G[A_i\cup \bd(A_{3-i}\cup B)]-E(G[\bd(A_{3-i}\cup B)])$, and 
	let $F_i\defeq F_i^*\cap G_{A_i, A_{3-i}\cup B}$, and let $R_i$ be a reduced forest of $F_i$ such that the following holds.
	\begin{description}
		\item[(Single-edge Rule.)] For a single-edge component $vw$ of $F_i$ with $v\in V(R)$ and $w\notin V(R)$, 
	we select $v$ as a vertex of $R_i$.
	\end{description}
	We first prove (\ref{prop:restriction1:2}).	
	\begin{claim}\label{claim:twoneighbors}
	Every vertex in $(V(R) \setminus (V(R_1) \cup V(R_2))) \cap B$ has at least two neighbors in $(V(R_1)\cap A_1)\cup (V(R_2)\cap A_2)$.
	\end{claim}
	\begin{clproof}
	Suppose there exists a vertex $v$ in $(V(R) \setminus (V(R_1) \cup V(R_2))) \cap B$ having at most one neighbor in $(V(R_1)\cap A_1)\cup (V(R_2)\cap A_2)$.
	If $v$ had only one neighbor in $V(H) \cap (A_1 \cup A_2)$, then $vw$ was a single-edge component; otherwise, $v$ would have been removed while taking the reduced forest of $H$. But then the Single-edge rule forces $v \in V(R_1) \cup V(R_2)$, a contradiction with the assumption.
	So $v$ has at least two neighbors in $V(H)\cap (A_1\cup A_2)$.
	Thus, $v$ has a neighbor not contained in $(V(R_1)\cap A_1)\cup (V(R_2)\cap A_2)$.
	Let $w$ be such a vertex, and without loss of generality, we assume $w\in A_1$.
	
	If $v$ has a neighbor other than $w$ in $V(H)\cap A_1$, then $v$ is contained in $R_1$.
	So, in $H$, $w$ is the unique neighbor of $v$ in $V(H)\cap A_1$.
	Also, since $w\notin V(R_1)$, $v$ is the unique neighbor of $v$ in $F_1$.
	Then $vw$ is a single-edge component of $F_1$, and by the Single-edge Rule, we selected $v$ as a vertex of $R_1$.
	This contradicts $v\notin V(R_1)$.

	We conclude that every vertex in $(V(R) \setminus (V(R_1) \cup V(R_2))) \cap B$ has at least two neighbors in $(V(R_1)\cap A_1)\cup (V(R_2)\cap A_2)$.
	\end{clproof}
		
	In the remainder of this proof we show (\ref{prop:restriction1:1}), i.e.\ that for each $i\in \{1, 2\}$, $R_i$ is a restriction of $R$ that respects $F_i^*$.
	We give the proof for $i=1$; an analogous proof holds for $i=2$.
	We first verify the first condition of being a restriction.

		\begin{claim}\label{claim:condition1}
		$V(R)\cap A_1\subseteq V(R_1)$ and for every $v\in V(R)\cap B$ having at least two neighbors in $V(R)\cap A_1$, 
$v\in V(R_1)$.
		\end{claim}
		\begin{clproof}
		Let $v\in V(R)\cap A_1$. Then either $v$ has degree at least $2$ in $F_1$ or 
		the unique neighbor of $v$ in $F_1$ is its potential leaf in $H$. In the former case, clearly $v$ is contained in $R_1$, 
		and in the latter case, $v$ was chosen as a vertex of $R_1$ by Single-edge Rule.
		If $v\in V(R)\cap B$ has at least two neighbors in $V(R)\cap A_1$, then clearly $v\in V(R_1)$, as all such neighbors are in $F_1$.
		\end{clproof}

	We now verify the second condition of being a restriction.
	\begin{claim}\label{claim:condition2}
		\begin{enumerate}[(a)]
			\item $(V(R_1)\setminus V(R))\cap B=\emptyset$.\label{claim:condition2:1}
			\item $V(R_1)\cap M=\emptyset$.\label{claim:condition2:2}
		\end{enumerate}
	\end{claim}
	\begin{clproof}
	 (\ref{claim:condition2:1}) It is sufficient to prove that 
	every vertex in $(V(F_1)\setminus V(R))\cap B$ is not contained in $R_1$.
	Suppose $v$ is a vertex in $(V(F_1)\setminus V(R))\cap B$.
	If $v$ has degree at least $2$ in $H$, then $v \in V(R)$, so we can assume that $v$ has degree at most $1$ in $H$.
	If $v$ is isolated in $F_1$, then $v \notin V(R_1)$, so $v$ has degree $1$ in $F_1$.	
	Let $w$ be the neighbor of $v$ in $F_1$. 
	If $w$ has degree at least $2$ in $F_1$, then $v$ is removed by definition of a reduced forest.
	If $vw$ is a single-edge component, then by the Single-edge Rule, 
	$v\notin V(R_1)$ and $w\in V(R_1)$.
	We conclude that 
	$(V(R_1)\setminus V(R))\cap B=\emptyset$.
	
	(\ref{claim:condition2:2}) As $R$ avoids $M$, clearly, $R_1$ also avoids $M$.
	\end{clproof}
	
	We also verify the third condition.
	
	\begin{claim}\label{claim:oneneighbor}
	Every vertex in $(V(R_1)\setminus V(R))\cap A_1$ has at most one neighbor in $V(R)\cap B$.
	\end{claim}
	\begin{clproof}
	Suppose not and let $v \in (V(R_1) \setminus V(R)) \cap A_1$ such that $v$ has two neighbors $x$ and $y$ in $V(R) \cap B$. Clearly, $\{v, x, y\} \subseteq V(H)$. But then, $v \in V(R)$ by the definition of reduced forests, a contradiction.
	\end{clproof}

	We now construct a minimal vertex cover $M_1$ of $G_{A_1, A_2\cup B}-V(R_1)$, and verify the fourth and fifth conditions of being a restriction.
	Let $M'$ be the set of all vertices $v$ in $M$ such that $v$ is incident with an edge $vw$ in $G_{A_1, A_2}-V(R)$ 
	where $vw$ is not covered by any vertices in $M\setminus \{v\}$ and $w\notin V(R_1)$.

\begin{figure}
  \centering
  \begin{tikzpicture}[scale=0.7]
  \tikzstyle{w}=[circle,draw,fill=black!50,inner sep=0pt,minimum width=4pt]

	\draw[rounded corners] (-5,0)--(4.5,0)--(4.5, -5);
	\draw[rounded corners] (15,0)--(5.5,0)--(5.5, -5);
	\draw[rounded corners] (-3, 3)--(-3,1)--(13,1)--(13,3);
     \node at (-5, -1) {$A_1$};
     \node at (15, -1) {$A_2$};
     \node at (-4, 2) {$B$};

	\draw[red, thick, dashed, rounded corners] (3, -1)--(3,2.3)--(7,2.3)--(7,-1.3)--(3,-1.3)--(3, -1);
    \draw[blue, thick, dashed, rounded corners] (2, -1)--(2,2.5)--(5,2.5)--(5,-2.3)--(2,-2.3)--(2, -1);
    \node at (7.3, 0.5) {$R$};
	\node at (3.5, 2.8) {$R_1$};

 	\draw[thick, rounded corners] (6, -2.5)--(6,-2)--(8,-2)--(8,-4)--(6,-4)--(6, -2.5);
 	\draw[thick, dotted](7,-2.5-.5) [in=30,out=150] to (3,-3-.5);
	\draw[thick, dotted](7,-3-.5) [in=30,out=150] to (3,-3.5-.5);

 	\draw[thick, rounded corners] (-1, -2.5)--(-1,-2)--(1,-2)--(1,-4)--(-1,-4)--(-1, -2.5);
 	\draw[thick, dotted](0,-2.5-.5) [in=-150,out=60] to (6,2);
	\draw[thick, dotted](0,-3-.5) [in=-150,out=60] to (6,1.5);

 	\draw[thick, rounded corners] (-2.5, 1)--(-2.5,2)--(-.5,2)--(-.5,-1)--(-2.5,-1)--(-2.5, 1);
 	\draw[thick, dotted](-1.5,1.7) -- (.5,-0.7);
	\draw[thick, dotted](-1.5,1.3) -- (.5,-1.1);
 	\draw[thick, dotted](-1.5,-.7) -- (.5,1.7);
	\draw[thick, dotted](-1.5,-.3) -- (.5,2.1);
	
   \node at (7, -4.3) {$Y$};
   \node at (0, -4.3) {$Z$};
   \node at (-1.5, -1.3) {$M'$};

   \end{tikzpicture}     \caption{An illustration of $M', Y, Z$ in Claim~\ref{claim:condition4}. }\label{fig:mvc}
\end{figure}
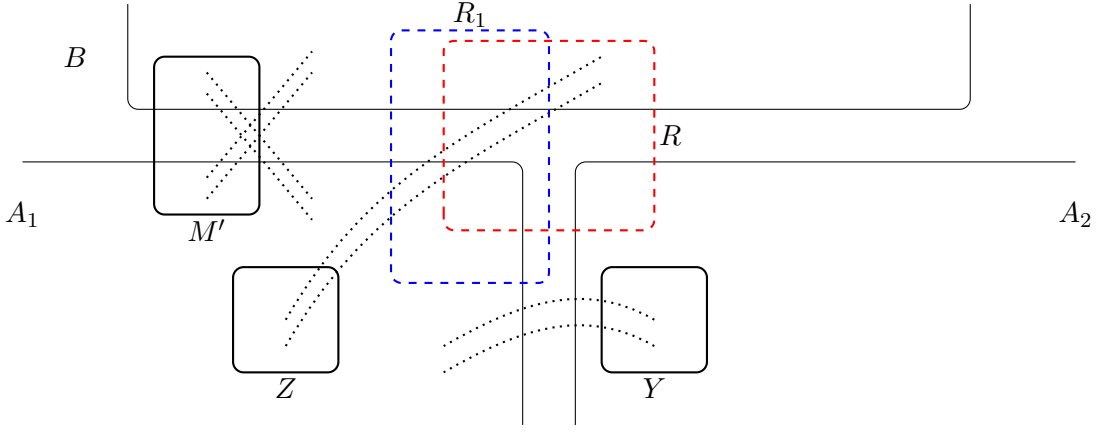

	\begin{claim}\label{claim:condition4}
	There is a minimal vertex cover $M_1$ of $G_{A_1, A_2\cup B}-V(R_1)$ satisfying the following.
	\begin{itemize}
	\item $V(R)\cap M_1=\emptyset$ and $M\cap A_1 \subseteq M_1$.
	\item Let $v$ be a vertex in $M\cap B$ incident with an edge $vw$ in $G_{A_1, B}-V(R)$ for some $w\notin V(R_1)$ that is not covered by any vertices in $M\setminus \{v\}$.
	Then either $v\in M_1$ or $w\in M_1$.
	\end{itemize}	
	\end{claim}
	\begin{clproof}
	Let $Y$ be the set of all vertices in $A_2\setminus V(H)$ having a neighbor in $A_1\setminus V(R_1)$.
	Let $Z$ be the set of all vertices in $A_1\setminus V(R_1)\setminus (M\cap A_1)$ having a neighbor in $(V(R)\setminus V(R_1))\cap B$.
	See Figure~\ref{fig:mvc} for an illustration of $M', Y$ and $Z$.
	Let $M''$ be the set obtained from $M'\cup Y\cup Z$ by 
	removing all vertices $v\in M'\cap B$ such that all the neighbors of $v$ in $A_1\setminus V(R_1)\setminus (M\cap A_1)$ are contained in $Z$.	 

	We show that $M''$ is a vertex cover of $G_{A_1, A_2\cup B}-V(R_1)$.
	Suppose there is an edge $yz$ in $G_{A_1, A_2\cup B}-V(R_1)$ not covered by $M''$.
	As $Y$ hits all edges between $A_1$ and $A_2$ in $G_{A_1, A_2\cup B}-V(R_1)$, 
	this edge is an edge between $A_1$ and $B$. Assume that $y\in A_1$ and $z\in B$.
	
	As $V(R)\cap A_1\subseteq V(R_1)\cap A_1$ and $M\cap A_1=M'\cap A_1$,
	$z$ cannot be in $B\setminus (V(R)\cup M)$. 
	Thus, either $z\in (V(R)\setminus V(R_1))\cap B$ or $z\in (M\setminus M')\cap B$.
	Since $Z$ covers all edges between $A_1\setminus V(R_1)\setminus (M\cap A_1)$ and $(V(R)\setminus V(R_1))\cap B$, 
	$z$ is contained in $(M\setminus M')\cap B$.
	In this case, $z$ is a vertex in $M$ covering the edge $yz$ which is not covered by any other vertex in $M$, 
	and thus by definition of $M'$, 
	$M'$ includes $z$. Then one of $y$ and $z$ is contained in $(M'\cap B) \cup Z$. 
	This is a contradiction.
	Therefore, $M''$ is a vertex cover of $G_{A_1, A_2\cup B}-V(R_1)$.

	Now, we take a minimal vertex cover $M_1$ of $G_{A_1, A_2\cup B}-V(R_1)$ contained in $M''$.
	We have $V(R)\cap M_1=\emptyset$.
	Since each vertex of $M'\cap A$ covers some edge that is not covered by any other vertex in $M''$, 
	we have $M \cap A_1 = M'\cap A_1 \subseteq M_1$.
	Since every vertex in $Z$ meets some edge incident with $V(R)\setminus V(R_1)$, 
	$Z$ is contained in $M_1$.
	If $v$ is a vertex in $M\cap B$ incident with an edge $vw$ in $G_{A_1, B}-V(R)$ for some $w\notin V(R_1)$ that is not covered by any vertices in $M\setminus \{v\}$,
	then $v\in M'\cap B$.
	 By construction of $M''$, we have either $v\in M''\cap B$ or $w\in Z$.
	In particular if $w\notin Z$, then $v$ is the vertex covering the edge $vw$, and it also remains in $M_1$.
\end{clproof}
	
	By Claim \ref{claim:condition4}, the fourth and fifth condition of being a restriction are satisfied, so $(R_1, M_1)$ is a restriction of $(R, M)$. It remains to show that $F_1^*$ respects $(R_1, M_1)$. By construction, $R_1$ is the reduced forest of $F_1^*$ so we only have to show that that $V(F_1^*) \cap M_1 = \emptyset$, and in particular, by the construction given in the proof of Claim~\ref{claim:condition4}, it suffices to prove the following.
	\begin{claim}
		Let $Z$ be as in the proof of Claim~\ref{claim:condition4}. Then, $Z \cap V(F_1^*) = \emptyset$.
	\end{claim}
	\begin{clproof}
		Suppose not; let $x \in Z \cap V(F_1^*)$. By construction, $x \notin V(R_1)$ and $x$ has a neighbor $y$ in $(V(R) \setminus V(R_1)) \cap B$. Then, $x$ is either a leaf of $F_1^*$ or contained in a single-edge component of $F_1^*$: Since the edge $\{x, y\}$ is contained in $H$, it is also contained in $F_1^*$, so $x$ is not isolated in $F_1^*$. We can conclude that $y$ is the only neighbor of $x$ in $F_1^*$. However, neither $x$ nor $y$ is contained in $R_1$, a contradiction with the fact that $R_1$ is a reduced forest of $F_1^*$.
	\end{clproof}
	We can conclude that $F_1^*$ respects $(R_1, M_1)$.
	\end{proof}
	
	\begin{proposition}\label{prop:restriction2}
	Let $(A_1, A_2, B)$ be a vertex partition of a graph $G$.
	Let $R$ be an induced forest in $G_{A_1\cup A_2, B}$  and $M$ be a minimal vertex cover 
	of $G_{A_1\cup A_2, B}-V(R)$. 
	Let $H$ be an induced forest in $G[A_1\cup A_2\cup \bd(B)]-E(G[\bd(B)])$ respecting $(R, M)$ and 
	for each $i\in \{1, 2\}$, 
	\begin{itemize}
	\item let $(R_i, M_i)$ be a restriction of $(R, M)$ that
	$H\cap G[A_i\cup \bd(A_{3-i}\cup B)]-E(G[\bd(A_{3-i}\cup B)])$ respects (guaranteed by Proposition~\ref{prop:restriction1}), and 
	\item let $P_i$ be the partition of $\mathcal{C}(R_i)$ such that for $C, C'\in \mathcal{C}(R_i)$, 
	$C$ and $C'$ are in the same part if and only if they are contained in the same connected component of $H\cap G[A_i\cup \bd(A_{3-i}\cup B)]-E(G[\bd(A_{3-i}\cup B)])$.
	\end{itemize}
	Then $(R, R_1, R_2, P_1, P_2)$ is compatible.

	\end{proposition}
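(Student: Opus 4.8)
The goal is to show that the auxiliary graph $Q\defeq Q(R,R_1,R_2,P_1,P_2)$ is acyclic. The plan is to first localize the question to a single connected component of $H$, and then argue entirely inside a tree. Every vertex of $Q$ is a connected subgraph of $H$: since $H$ respects $(R,M)$ we have $R\subseteq G_{A_1\cup A_2,B}\cap H\subseteq H$, and each $R_i\subseteq F_i^*\subseteq H$ where $F_i^*\defeq H\cap G[A_i\cup\bd(A_{3-i}\cup B)]-E(G[\bd(A_{3-i}\cup B)])$ is the forest that $R_i$ is a restriction for. Hence each $X\in V(Q)$ lies in a unique component $\phi(X)$ of $H$, and I would check that adjacent vertices of $Q$ have the same $\phi$-image: an intersection edge yields a shared vertex of $H$; an edge coming from $P_i$ puts both of its endpoints, by the choice of $P_i$, in a single component of $F_i^*$, which is connected in $H$. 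So every connected component of $Q$ is contained in a fibre $\phi^{-1}(D)$ for some component $D$ of $H$, and it suffices to prove that for each such $D$ the induced subgraph $Q_D\defeq Q[\phi^{-1}(D)]$ is a forest.

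Fix such a $D$, which is a tree. The vertices of $Q_D$ are subtrees of $D$ from the three families $\mathcal{C}(R),\mathcal{C}(R_1),\mathcal{C}(R_2)$; within each family they are pairwise vertex-disjoint, and $\mathcal{C}(R)$ is independent in $Q$. For a part $\pi$ of $P_i$, all its members sit inside one component $K_\pi$ of $F_i^*$ — a subtree of $D$ — and distinct parts of $P_i$ sit inside vertex-disjoint such components. Assuming $Q_D$ has a cycle, pick a shortest (hence chordless) one $X_0X_1\cdots X_{\ell-1}X_0$ and trace it out in $D$: along an intersection edge $X_jX_{j+1}$ choose a common vertex; along an edge internal to a part $\pi$ of $P_i$ use the unique path of $K_\pi$ joining the two disjoint subtrees $X_j,X_{j+1}$; inside each $X_j$ use the unique path between its two contact vertices; concatenate to obtain a closed walk in $D$. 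One then derives a contradiction with $D$ being a tree: using the three disjointness properties, the independence of $\mathcal{C}(R)$, the (folklore) Helly property of subtrees of a tree for the short cycles, and the restriction conditions of Proposition~\ref{prop:restriction1} governing how $V(R)$, $V(R_1)$, $V(R_2)$ overlap, one shows that the contact vertices along a chordless cycle cannot all collapse and the walk cannot merely backtrack, so it would contain a genuine cycle of $D$.

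The reduction to the components of $H$ is routine; all the work is inside the tree $D$. The two delicate points are the edges internal to a part of $P_i$ — where several pairwise-adjacent vertices of $Q_D$ are several disjoint subtrees sitting inside one $F_i^*$-subtree $K_\pi$, so that segment of the walk must be routed carefully through $K_\pi$ — and the overlap between $R$-components and $R_i$-components, which is exactly what the restriction conditions of Proposition~\ref{prop:restriction1} were set up to control. Making the step ``a chordless cycle of $Q_D$ forces a cycle of $D$'' fully rigorous, in particular ruling out short cycles, is where the real difficulty lies.
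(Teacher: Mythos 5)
Your proposal follows exactly the same route as the paper: the paper's entire proof of this proposition is the single sentence that ``it is not difficult to see that if $Q$ contains a cycle, then $H$ also contains a cycle, which leads to a contradiction.'' Your sketch is a more detailed elaboration of precisely that implication (localizing to a tree component $D$ of $H$ and tracing a cycle of $Q$ as a closed walk in $D$), so the approach matches; the remaining rigor you honestly flag is left entirely implicit in the paper as well.
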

	\begin{proof}
	Let $Q$ be the auxiliary graph of  $(R, R_1, R_2, P_1, P_2)$.	
	It is not difficult to see that if $Q$ contains a cycle, then $H$ also contains a cycle, which leads to a contradiction.
	Thus, $Q$ has no cycles.
	\end{proof}

	\subsection{Bottom to top}\label{sec:reducedforest:bottom:top}
	
	\begin{proposition}\label{prop:correctness}
	Let $(A_1, A_2, B)$ be a vertex partition of a graph $G$.
	Let $R$ be an induced forest in $G_{A_1\cup A_2, B}$  and $M$ be a minimal vertex cover 
	of $G_{A_1\cup A_2, B}-V(R)$ such that for every vertex $x$ of degree at most $1$ in $R$,
	$\potentialleaves_{R,M}(x)\neq \emptyset$. 
	For each $i\in \{1, 2\}$, 
	\begin{itemize}
	\item let $R_i$ be an induced forest in $G_{A_i, A_{3-i}\cup B}$ and $M_i$ be a minimal vertex cover 
	of $G_{A_i, A_{3-i}\cup B}-V(R_i)$, and $H_i$ be an induced forest in $G[A_i\cup \bd(A_{3-i}\cup B)]-E(G[\bd(A_{3-i}\cup B)])$ respecting $(R_i, M_i)$, 
	\item let $P_i$ be the partition of $\mathcal{C}(R_i)$ such that for $C, C'\in \mathcal{C}(R_i)$, 
	$C$ and $C'$ are in the same  part if and only if they are contained in the same connected component of $H_i$,
	\item 	$(R_i, M_i)$ is a restriction of $(R, M)$.
	\end{itemize}
	Furthermore,
	\begin{itemize}
	\item every vertex in $(V(R) \setminus (V(R_1) \cup V(R_2))) \cap B$ has at least two neighbors in $(V(R_1)\cap A_1)\cup (V(R_2)\cap A_2)$,
	\item $(R, R_1, R_2, P_1, P_2)$ is compatible.
	\end{itemize}
	Then there is an induced forest $H$ in $G[A_1\cup A_2\cup \bd(B)]-E(G[\bd(B)])$ respecting $(R, M)$ such that
	\begin{itemize}
	\item 	$V(H)\cap (A_1\cup A_2)= (V(H_1)\cap A_1)\cup (V(H_2)\cap A_2)$.
	\end{itemize}
	\end{proposition}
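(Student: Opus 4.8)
The plan is to assemble $H$ explicitly and then verify the three requirements separately: that $H$ respects $(R,M)$ (which splits into ``$R=\reduced(G_{A_1\cup A_2,B}\cap H)$'' and ``$V(H)\cap M=\emptyset$''), that $H$ is a forest, and that $V(H)\cap(A_1\cup A_2)=(V(H_1)\cap A_1)\cup(V(H_2)\cap A_2)$. First I would fix, for every vertex $x$ of $R$ with $\deg_R(x)\le 1$ lying in $A_1\cup A_2$, a private leaf-neighbour $\ell_x\in\potentialleaves_{R,M}(x)$ (nonempty by hypothesis, and contained in $B$ since $G_{A_1\cup A_2,B}$ is bipartite with $x$ on the $A_1\cup A_2$ side). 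Writing $L$ for the set of these $\ell_x$, I would set
\[
  V(H)\defeq(V(H_1)\cap A_1)\cup(V(H_2)\cap A_2)\cup(V(R)\cap B)\cup L
\]
and let $H$ be the subgraph of $G[A_1\cup A_2\cup\bd(B)]-E(G[\bd(B)])$ induced on $V(H)$ (one checks $V(H)\subseteq A_1\cup A_2\cup\bd(B)$). Since $L\cup(V(R)\cap B)\subseteq B$, the required identity $V(H)\cap(A_1\cup A_2)=(V(H_1)\cap A_1)\cup(V(H_2)\cap A_2)$ is immediate. The condition $V(H)\cap M=\emptyset$ is also quick: $V(R)\cap M=\emptyset$ by definition of $M$, the vertices $\ell_x$ avoid $M$ by definition of potential leaves, and since $H_i$ respects $(R_i,M_i)$ we have $(V(H_i)\cap A_i)\cap M_i=\emptyset$, whence $(V(H_i)\cap A_i)\cap M=\emptyset$ using $M\cap A_i\subseteq M_i$ (the fourth condition of being a restriction).

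\textbf{Reduced forest.} Put $F'\defeq G_{A_1\cup A_2,B}\cap H$. I would first note $V(R)\subseteq V(F')$ (its $A_i$-parts are in $V(R_i)\subseteq V(H_i)$ by the first condition of being a restriction, its $B$-part is in $V(H)$ by construction) and that $F'$ is an induced subgraph of the bipartite graph $G_{A_1\cup A_2,B}$ — crossing edges are never inside $B$, so they survive in $H$ — which gives $F'[V(R)]=R$. Next, since $V(H)\cap M=\emptyset$ and $M$ covers $G_{A_1\cup A_2,B}-V(R)$, \emph{every edge of $F'$ meets $V(R)$}. Granting the two further statements (a) every vertex of $V(F')\setminus V(R)$ has at most one neighbour in $F'$, and (b) every $x\in V(R)$ with $\deg_R(x)\le 1$ has a neighbour in $V(F')\setminus V(R)$, a routine case distinction on the sizes of the components of $F'$ yields $\reduced(F')=R$. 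For (a): a vertex of $L$ has, by definition of potential leaves, at most one $G_{A_1\cup A_2,B}$-neighbour in $V(R)$ and, since every edge of $F'$ meets $V(R)$, none outside $V(R)$; a vertex $u\in(V(H_1)\cap A_1)\setminus V(R)$ likewise has no $F'$-neighbour outside $V(R)$, and two of its $F'$-neighbours inside $V(R)\cap B$ would — if both also lay in $V(H_1)$ — force $u$ to be a non-leaf of a component of $G_{A_1,A_2\cup B}\cap H_1$, hence $u\in V(R_1)\setminus V(R)$, contradicting the third condition of being a restriction; the residual case of a neighbour outside $V(H_1)$ is ruled out with the first condition and the hypothesis that every vertex of $(V(R)\setminus(V(R_1)\cup V(R_2)))\cap B$ has two neighbours in $(V(R_1)\cap A_1)\cup(V(R_2)\cap A_2)$. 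For (b): a low-degree vertex of $R$ in $A_1\cup A_2$ has its $\ell_x\in L\subseteq V(F')$ adjacent to it; a low-degree vertex $x$ of $R$ in $B$ needs a potential leaf already present among $(V(H_1)\cap A_1)\cup(V(H_2)\cap A_2)$, and deriving this from the restriction conditions together with the Observation applied to $R_i$ (valid since $H_i$ respects $(R_i,M_i)$) is, I expect, the most delicate point in this part.

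\textbf{Forest.} This is the main obstacle. Suppose $H$ contains a cycle $O$; I would extract from it a cycle in the auxiliary graph $Q$ of $(R,R_1,R_2,P_1,P_2)$, contradicting compatibility. Concretely, I would walk around $O$ and record its passages through the three families $\mathcal{C}(R)$, $\mathcal{C}(R_1)$, $\mathcal{C}(R_2)$: each maximal stretch of $O$ lying inside one component of $H_i$ collapses to a walk inside the corresponding part of $P_i$ (an edge of $Q$ of the ``same part of $P_i$'' type), each single-vertex visit to $V(R)$ collapses to a vertex of $\mathcal{C}(R)$, and each transition between these is carried by an edge of $O$ whose two endpoints lie in components that share a vertex (an edge of $Q$ of the ``nonempty intersection'' type); the outcome is a non-degenerate closed walk in $Q$, so $Q$ has a cycle. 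The care here lies in controlling the edges of $O$ that $H$ has but neither $H_1$ nor $H_2$ has — these arise only from passing to the induced subgraph, and are of the forms: an edge between $V(H_1)\cap A_1$ and $V(H_2)\cap A_2$, or an edge between an $A_i$-vertex and a newly introduced vertex of $(V(R)\cap B)\cup L$ — and showing that any such edge either has an endpoint in $V(R)$ (so is visible in $Q$ as a shared-vertex edge) or is impossible, because one of its endpoints would otherwise violate a restriction condition, the minimality of $M$, or the minimality of some $M_i$. Once this case analysis is complete the cycle in $Q$ follows and compatibility is contradicted, so $H$ is a forest. Combining this with the two previous paragraphs, $H$ respects $(R,M)$ and $V(H)\cap(A_1\cup A_2)=(V(H_1)\cap A_1)\cup(V(H_2)\cap A_2)$, completing the proof.
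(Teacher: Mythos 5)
Your construction of $H$ and your three-part verification plan coincide with the paper's: the paper forms $H^*=G[V(H_1)\cup V(H_2)\cup V(R)]$, deletes $B\setminus V(R)$ and all edges inside $B$, and attaches a potential leaf to each degree-$\le 1$ vertex of $V(R)\cap(A_1\cup A_2)$, which is your $H$ up to notation. Your arguments for $V(H)\cap M=\emptyset$ and for ``$R$ is a reduced forest of $G_{A_1\cup A_2,B}\cap H$'' track the paper's claims, though two of your citations point at the wrong hypotheses: the ``residual case'' of a vertex $u\in(V(H_1)\cap A_1)\setminus V(R)$ with a neighbour $w\in(V(R)\setminus V(R_1))\cap B$ is excluded by the fourth restriction condition (the edge $uw$ lies in $G_{A_1,A_2\cup B}-V(R_1)$, so $M_1$ contains $u$ or $w$; $u\notin M_1$ since $H_1$ respects $(R_1,M_1)$, and $w\in M_1$ would contradict $V(R)\cap M_1=\emptyset$), not by the first condition; and a degree-$\le 1$ vertex of $R$ in $B\setminus(V(R_1)\cup V(R_2))$ is kept in the reduced forest because of the two-neighbours hypothesis, not the Observation. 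Both are repairable.

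The genuine gap is the forest property, exactly the step you defer. The reduction of a cycle of $H$ to a cycle of $Q$ is not routine, because $Q$ joins a component of $\cC(R_1)$ to a component of $\cC(R_2)$ only when they \emph{share a vertex}, whereas a cycle in $H$ may pass between $H_1$ and $H_2$ along an edge whose two endpoints lie in disjoint components; such an edge need not meet $V(R)$ and is not ruled out by any cover, since both endpoints may belong to $V(R_1)\cup V(R_2)$. Concretely, let $G=C_4$ with edges $u_1u_2,u_2v_2,v_2v_1,v_1u_1$, $A_1=\{u_1,u_2\}$, $A_2=\{v_1,v_2\}$, $B=\emptyset$, so $R=M=\emptyset$; let $H_1$ be the path $v_1u_1u_2v_2$ and $H_2$ the path $u_1v_1v_2u_2$; then $H_1\cap G_{A_1,A_2\cup B}$ consists of the two single edges $u_1v_1$ and $u_2v_2$, and one may legitimately take $R_1$ to be the edgeless reduced forest on $\{u_1,u_2\}$ (symmetrically $R_2$ on $\{v_1,v_2\}$), with $M_1=M_2=\emptyset$. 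Every hypothesis of the proposition is then satisfied --- in particular $Q$ consists of two disjoint edges and is acyclic, so the tuple is compatible --- yet $(V(H_1)\cap A_1)\cup(V(H_2)\cap A_2)=V(G)$ induces $C_4$. The paper itself dispatches this step with ``we can verify that $H^*$ is a forest'', so your sketch is faithful to its proof; but the case analysis you promise cannot be completed from the stated hypotheses, and this is precisely where the missing argument (or a strengthened compatibility notion recording adjacency, not just intersection, of components) is required.
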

	\begin{proof}
	As $(R, R_1, R_2, P_1, P_2)$ is compatible, we can verify that 
		\[ 
		H^* \defeq G[V(H_1)\cup V(H_2)\cup V(R)]
		\] 
	 is a forest.
	Let $H$ be the graph obtained from $H^*-(B\setminus V(R))$ by adding a potential leaf of each vertex in $V(R) \cap (A_1 \cup A_2)$ of degree at most $1$ in $R$ and removing all edges between vertices in $B$.
	We show that $H$ is a forest.
	
	\begin{claim}\label{claim:forest}
	$H$ is a forest such that $V(H)\cap (A_1\cup A_2)= (V(H_1)\cap A_1)\cup (V(H_2)\cap A_2)$.
	\end{claim}
	\begin{clproof}
	Since $H^*$ is a forest,
	$H^*-(B\setminus V(R))$ is also a forest.
	Adding a potential leaf of a vertex in $V(R)\cap (A_1\cup A_2)$ preserves the property of being a forest, as we removed all edges in $B$.
	When we take  $H$ from $H^*$, we only change the vertices in $B$.
	Therefore, we have $V(H)\cap (A_1\cup A_2)= (V(H_1)\cap A_1)\cup (V(H_2)\cap A_2)$.
	\end{clproof}
	
	In the remainder, we prove that $H$ respects $(R, M)$.
	We need to verify that 
		\begin{enumerate}[(i)]
	\item $R$ is a reduced forest of $G_{A_1\cup A_2,B}\cap H$. \label{prop:correctness:respects:1}
	\item $V(H) \cap M = \emptyset$. \label{prop:correctness:respects:2}
\end{enumerate}
	Condition (\ref{prop:correctness:respects:2}) is easy to verify: 
	Since we remove all vertices in $M$ when we construct $H$ from $G[V(H_1)\cup V(H_2)\cup V(R)]$, 
	we have $V(H)\cap M=\emptyset$. We now verify condition (\ref{prop:correctness:respects:1}).
	
\newcommand{\hnew}{H_{A, B}}	
	
	Let $\hnew\defeq H\cap G_{A_1\cup A_2, B}$.
	We first verify 
	\begin{claim}\label{claim:degree1}
	Every vertex of $V(\hnew)\setminus V(R)$ has degree at most $1$ in $\hnew$.
	\end{claim}
	\begin{clproof}
	Let $v\in V(\hnew)\setminus V(R)$. First assume that $v\in A_1\cup A_2$. Without loss of generality, we assume $v\in A_1$.
	Since $v$ is not contained in $R$ and $v\notin M\cap A_1\subseteq M_1$, its neighborhood in $G_{A_1, B}-V(R)$ is contained in $M$.	
	As $\hnew$ does not contain a vertex in $M$, the neighborhood of $v$ in $\hnew$ is contained in $V(R)\cap B$.

	Suppose for contradiction that $v$ has at least two neighbors in $V(R)\cap B$.
	Since $(R_1, M_1)$ is a restriction of $(R,M)$, by the third condition of the statement of the proposition, $v$ is also not contained in $R_1$.
	If $v$ has at least two neighbors in $V(R_1)\cap B$, then $v$ should be contained in $R_1$, a contradiction.
	Therefore, $v$ has at least one neighbor in $(V(R)\setminus V(R_1))\cap B$, say $w$.
	Then $vw$ is an edge in $G_{A_1, A_2\cup B}-V(R_1)$, 
	so $M_1$ contains $v$ or $w$.
	Since $v\in V(H_1)$, $v$ is not contained in $M_1$, and thus $w\in M_1$.
	But this contradicts the assumption that $M_1\cap V(R)=\emptyset$, which is the forth condition of being a restriction.
	Therefore, $v$ has at most one neighbor in $V(R)\cap B$, as required.

	Now we assume $v\in B$. By construction, $v$ is a potential leaf of some vertex in $R$. Thus $v$ has degree at most $1$ in $\hnew$, as required.
	\end{clproof}

	We argue that we can take $R$ as a reduced forest of $\hnew$. 
	Let $v\in V(R)$. If $v$ has degree at least $2$ in $\hnew$, then 
	$v$ is contained in any reduced forest of $\hnew$.
	Suppose $v$ has degree at most $1$ in $\hnew$.
	Suppose $v\in A_1\cup A_2$. In this case, by construction, $v$ is incident with its potential leaf in $\hnew$, say $w$.
	This means that $vw$ is a single-edge component in $\hnew$, and we can take $v$ as a vertex in $R$.
	
	Now, suppose $v\in B$.
	First assume that $v\in V(R_i)$ for some $i\in \{1,2\}$. 
	If $v$ has a neighbor in $R_i$, then it also has at least one potential leaf in $H_i\cap G_{A_i, A_{3-i}\cup B}$, 
	and thus $v$ has degree $2$ in $\hnew$, a contradiction.
	Thus, $v$ has no neighbor in $R_i$, and has exactly one potential leaf, say $w$.
	By Claim~\ref{claim:degree1}, $v$ is the unique neighbor of $w$ in $R$, 
	and thus $vw$ is a single-edge component of $\hnew$. Thus, we can take $v$ as a vertex in $R$.
	Suppose $v\in (V(R) \setminus (V(R_1) \cup V(R_2))) \cap B$.
	Then by the precondition, it has at least two neighbors in $(V(R_1)\cap A_1)\cup (V(R_2)\cap A_2)\subseteq (V(H_1)\cap A_1)\cup (V(H_2)\cap A_2)$.
	Therefore, it is contained in any reduced forest of $\hnew$.
	It shows that $R$ is a reduced forest of $\hnew$.
	
	Note that for each $i\in \{1, 2\}$, $V(H_i)\cap A_i$ avoids $M\cap A_i$.
	Furthermore, when we construct $\hnew$, we removed all vertices in $M\cap B$.
	Therefore, we have $V(\hnew)\cap M=\emptyset$, as required.	
	\end{proof}

\section{Feedback Vertex Set on graphs of bounded mim-width}\label{sec:fvs}
	
In this section we give an algorithm that solves the \textsc{Feedback Vertex Set} problem on graphs on $n$ vertices together with a branch decomposition of mim-width $w$ in time $n^{\cO(w)}$.

First, we observe that given a graph $G$, a subset of its vertices $S \subseteq V(G)$ is by definition a feedback vertex set if and only if $G - S$, the induced subgraph of $G$ on vertices $V(G) \setminus S$, is an induced forest. It is therefore readily seen that computing the minimum size of a feedback vertex set is equivalent to computing the maximum size of an induced forest, so in the remainder of this section we solve the following problem which is more convenient for our exposition.

\parproblemdef
	{Maximum Induced Forest/Mim-Width}
	{A graph $G$ on $n$ vertices, a branch decomposition $(T, \decf)$ of $G$ and an integer $k$.}
	{$w \defeq \mimw(T, \decf)$.}
	{Does $G$ contain an induced forest of size at least $n - k$?}

	We furthermore assume that $G$ is connected; otherwise, we can solve it for each connected component.
	Also, we assume $G$ contains at least $2$ vertices.
	
We solve the \textsc{Maximum Induced Forest} problem by bottom-up dynamic programming over $(T, \decf)$, the given branch decomposition of $G$, starting at the leaves of $T$. Let $t \in V(T)$ be a node of $T$. To motivate the table indices of the dynamic programming table, we now observe how a solution to \textsc{Maximum Induced Forest}, an induced forest $\cF$, interacts with the graph $G_{t+\bd} \defeq G[V_t \cup \bd(\bar{V_t})]-E(G[\bd(\bar{V_t})])$. The intersection of $\cF$ with $G_{t+\bd}$ is an induced forest which throughout the following we denote by $\cF_{t+\bd} \defeq \cF[V(G_{t+\bd})]$.
Since we want to bound the number of table entries by $n^{\cO(w)}$, we have to focus in particular on the interaction of $\cF$ with the crossing graph $\crosg{t}$ which is an induced forest in $\crosg{t}$, denoted by $\cF_{t, \bar{t}} \defeq \cF[V(\crosg{t})]$.

However, it is not possible to enumerate all induced forests in a crossing graph as potential table indices: Consider for example a star on $n$ vertices and the cut consisting of the central vertex on one side and the remaining vertices on the other side. This cut has mim-value $1$ but it contains $2^n$ induced forests, since each vertex subset of the star induces a forest on the cut. The remedy for this issue are {\em reduced} (induced) forests, introduced in Section \ref{sec:reducedforest}. 

In particular, at each node $t \in V(T)$, we only consider reduced forests as possible (parts of) indices for the table entries, and by Lemma \ref{lem:reducedforest}, the number of reduced forests in each cut of mim-value $w$ is bounded by $\cO(n^{6w})$. We now analyze the structure of $\cF_{t, \bar{t}}$ to motivate the objects that can be used to represent $\cF_{t, \bar{t}}$ in such a way that the number of all possible table entries remains bounded by $n^{\cO(w)}$.

The induced forest $\cF_{t, \bar{t}}$ has three types of vertices in $G_{t, \bar{t}}$:
\begin{itemize}
	\item The vertices of the reduced forest $\reduced(\cF_{t, \bar{t}})$ of $\cF_{t, \bar{t}}$.\label{mif:solution:reduced:forest}
	\item The leaves of the induced forest $\cF_{t, \bar{t}}$, denoted by $\leaves(\cF_{t, \bar{t}})$.\label{mif:solution:leaves}
	\item Vertices in $\cF_{t, \bar{t}}$ that do not have a neighbor in $\cF_{t, \bar{t}}$ on the opposite side of the boundary, in the following called {\em non-crossing} vertices and denoted by $\noncrossing(\cF_{t, \bar{t}})$.\label{mif:solution:noncrossing}
\end{itemize}

As outlined above, the only type of vertices in $\cF_{t, \bar{t}}$ that will be used as part of the table indices are the vertices of a reduced forest of $\cF_{t, \bar{t}}$, since otherwise, the number of possible indices might be exponential in $n$. Hence, we neither know about the leaves of $\cF_{t, \bar{t}}$ nor its non-crossing vertices upon inspecting this part of the index. Suppose we have a vertex $v \in (\leaves(\cF_{t, \bar{t}}) \cup \noncrossing(\cF_{t, \bar{t}})) \cap V_t$ and consider $N_{\bar{t}}^*(v) \defeq (N(v) \cap \bar{V_t}) \setminus V(\reduced(\cF_{t, \bar{t}}))$. Then, $\cF_{t, \bar{t}}$ does not use any vertex in $x \in N_{\bar{t}}^*(v)$: If $v$ is a leaf in $\cF_{t, \bar{t}}$, then the presence of the edge $\{v, x\}$ would make it a non-leaf vertex and if $v$ is a non-crossing vertex, the presence of $\{v, x\}$ would make $v$ a vertex incident to an edge of the forest crossing the cut.  An analogous point can be made for a vertex in $(\leaves(\cF_{t, \bar{t}}) \cup \noncrossing(\cF_{t, \bar{t}})) \cap \bar{V_t}$.  In the table indices, we capture this property of $\cF_{t, \bar{t}}$ by considering a minimal vertex cover of $G_{t, \bar{t}} - V(\reduced(\cF_{t, \bar{t}}))$ that avoids all leaves and non-crossing vertices of $\cF_{t, \bar{t}}$. We observe that such a minimal vertex cover always exists. (Note that $\leaves(\cF_{t, \bar{t}}) \cup \noncrossing(\cF_{t, \bar{t}})$ is an independent set in $G_{t, \bar{t}}$.)

\begin{observation}\label{obser:is:avoid:mvc}
	Let $G$ be a graph and $X \subseteq V(G)$ an independent set in $G$. Then, there exists a minimal vertex cover $M$ of $G$ such that $X \cap M = \emptyset$.
\end{observation}

\newcommand{\mIn}[1]{M_{#1}^{\mathrm{in}}}
\newcommand{\mOut}[1]{M_{#1}^{\mathrm{out}}}

Lastly, we have to keep track of how the connected components of $\cF_{t, \bar{t}}$ (respectively, $\reduced(\cF_{t, \bar{t}})$) are joined together via the forest $\cF_{t+\bd}$.
This forest induces a partition of $\cC(\reduced(\cF_{t, \bar{t}}))$ in the following way: Two components $C_1, C_2 \in \cC(\reduced(\cF_{t, \bar{t}}))$ are in the same part of the partition if and only if  $C_1$ and $C_2$ are contained in the same connected component of $\cF_{t + \bd}$. 

We are now ready to define the indices of the dynamic programming table $\dptable$ to keep track of sufficiently much information about the partial solutions in the graph $G_{t+\bd}$. Throughout the following, we denote by $\cR_t$ the set of all induced forests of $\crosg{t}$ on at most $6w$ vertices (which by Lemma \ref{lem:reducedforest} contains all reduced forests in $\crosg{t}$). For $R \in \cR_t$, we let $\cM_{t, R}$ be the set of all minimal vertex covers of $\crosg{t} - V(R)$ and $\cP_{t, R}$ the set of all partitions of the connected components of $R$. 

For an illustration of the above discussion and also the definition of the table indices, which we start on now, see Figure \ref{fig:fvs:table:entries}.
For $(R, M, P) \in \cR_t \times \cM_{t, R} \times \cP_{t, R}$ and $i \in \{0,\ldots, n\}$, we set $\dptable[t, (R, M, P), i] \defeq 1$ (and to $0$ otherwise), if and only if the following conditions are satisfied. 

\begin{figure}
	\centering
		\centering
		\includegraphics[width=.70\textwidth]{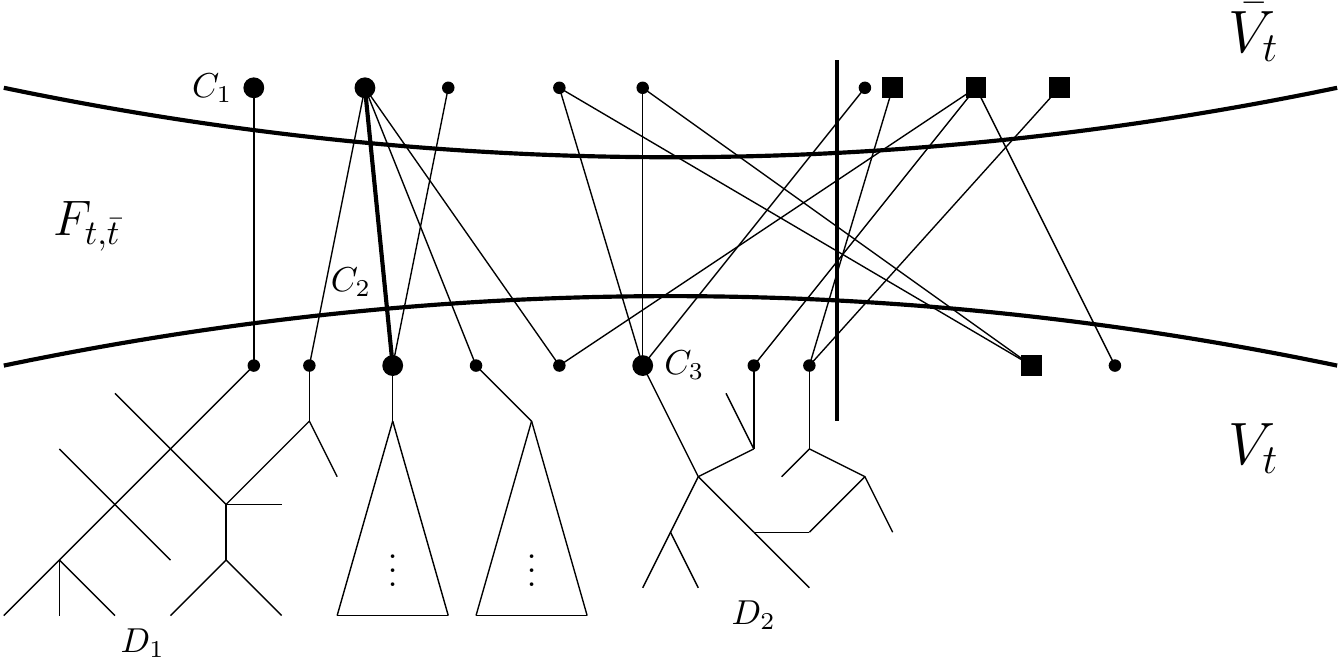}
	\caption{An example of a crossing graph $\crosg{t}$ together with an induced forest $\cF$ and their interaction. The forest $\cF_{t, \bar{t}} = \cF[V(\crosg{t})]$ is displayed to the left of the dividing line in the drawing and the 4 vertices and 1 edge in bold form a reduced forest $R$ of $\cF_{t, \bar{t}}$. The square vertices form a minimal vertex cover of $\crosg{t} - V(R)$ satisfying (\ref{fvs:table:index:3}). Furthermore, $C_i$ ($i \in [3]$) are the connected components of $R$ and $D_i$ ($i \in [2]$) are the connected components of $\cF$.}
		\label{fig:fvs:table:entries}
\end{figure}

\newcommand{\trimmed}[1]{#1^*}

\begin{enumerate}[(i)]
	\item There is an induced forest $F$ in $G[V_t \cup \bd(\bar{V_t})]-E(G[\bd(\bar{V_t})])$, such that $V(F)\cap V_t$ has size $i$.\label{fvs:table:index:1}
	\item Let $F_{t, \bar{t}} = F\cap G_{t, \bar{t}}$, i.e.\ $F_{t, \bar{t}}$ is the subforest of $F$ induced by the vertices of the crossing graph $G_{t, \bar{t}}$. Then, $R = \reduced(F_{t, \bar{t}})$, meaning that $R$ is a reduced forest of $F_{t, \bar{t}}$.\label{fvs:table:index:2}
	\item $M$ is a minimal vertex cover of $G_{t, \bar{t}} - V(R)$ such that $V(F) \cap M = \emptyset$.\label{fvs:table:index:3}
	\item $P$ is a partition of $\cC(R)$ such that 
	two components $C_1, C_2 \in \cC(R)$ are in the same part of the partition if and only if  $C_1$ and $C_2$ are contained in the same connected component of $F$.\label{fvs:table:index:4} 
\end{enumerate}

Regarding (\ref{fvs:table:index:3}), recall that even though the leaves and non-crossing vertices of $F_{t, \bar{t}}$ are still contained in $G_{t, \bar{t}} - V(R)$, a minimal vertex cover that avoids the leaves and non-crossing vertices of $F_{t, \bar{t}}$ always exists by Observation~\ref{obser:is:avoid:mvc}.

Recall that $r \in V(T)$ denotes the root of $T$, the tree of the given branch decomposition of $G$. From Property (\ref{fvs:table:index:1}) we immediately observe that the table entries store enough information to obtain a solution to \textsc{Maximum Induced Forest} after all table entries have been filled. In particular, we make

\begin{observation}
	$G$ contains an induced forest of size $i$ if and only if $\dptable[r, (\emptyset, \emptyset, \emptyset), i] = 1$.
\end{observation}

Before we proceed with the description of the algorithm, we first show that the number of table entries is bounded by a polynomial whose degree is linear in the mim-width $w$ of the given branch decomposition.

\begin{proposition}\label{prop:indices}
	There are at most $n^{\cO(w)}$ table entries in $\dptable$.
\end{proposition}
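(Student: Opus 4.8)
The plan is to bound the number of table entries $\dptable[t, (R, M, P), i]$ by counting, for each fixed node $t$, the number of choices for each of the four coordinates $t$, $(R,M,P)$, and $i$, and then multiplying by the number of nodes of $T$, which is $\cO(n)$.

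First I would count the number of choices for $R$. By definition, $\cR_t$ consists of all induced subforests of $\crosg{t}$ on at most $6w$ vertices; since $\crosg{t}$ has at most $n$ vertices, there are at most $\sum_{j=0}^{6w} \binom{n}{j} \le n^{\cO(w)}$ ways to pick the vertex set of such a forest, and the induced forest is then determined by its vertex set (as an induced subgraph). So $\card{\cR_t} \le n^{\cO(w)}$. Next, for a fixed $R \in \cR_t$, I would count $\card{\cM_{t,R}}$. The graph $\crosg{t} - V(R)$ is a bipartite graph on at most $n$ vertices with bipartition inherited from $\crosg{t}$, namely $(\bd(\bar{V_t}) \cap V_t \setminus V(R), \bd(V_t) \cap \bar{V_t} \setminus V(R))$. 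Deleting vertices cannot increase the maximum induced matching size, so $\mimval$ of this bipartite graph on the $V_t$-side is at most $\mimval_G(V_t) \le w$. By the Minimal Vertex Covers Lemma (Corollary~\ref{cor:mvc:lemma}), the number of minimal vertex covers of $\crosg{t} - V(R)$ is at most $n^{w}$, so $\card{\cM_{t,R}} \le n^{w}$.

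Then I would count $\card{\cP_{t,R}}$: this is the number of partitions of $\cC(R)$, and since $R$ has at most $6w$ vertices it has at most $6w$ connected components, so the number of partitions is at most the Bell number $B_{6w} \le (6w)^{6w} = 2^{\cO(w \log w)}$, which is a function of $w$ only and in particular is $n^{\cO(w)}$ (indeed it does not even depend on $n$). Finally, $i$ ranges over $\{0, 1, \ldots, n\}$, giving $n+1$ choices, and there are $\card{V(T)} = \cO(n)$ nodes $t$. Multiplying all these bounds, the total number of table entries is at most $\cO(n) \cdot n^{\cO(w)} \cdot n^{w} \cdot 2^{\cO(w \log w)} \cdot (n+1) = n^{\cO(w)}$, as claimed.

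The only real subtlety — and the step I would be most careful about — is the bound on $\card{\cM_{t,R}}$: one must check that Corollary~\ref{cor:mvc:lemma} applies to $\crosg{t} - V(R)$ with the correct side of the bipartition, i.e.\ that deleting the vertices of $R$ from the crossing graph keeps the maximum induced matching (measured from the $V_t$-side) bounded by $w$. This is immediate because an induced matching in an induced subgraph is also an induced matching in the original graph, so $\mimval$ can only decrease under vertex deletion; hence the exponent $w$ in $n^w$ is valid. Everything else is a routine counting argument, and the dominant contribution to the exponent is the $\cO(w)$ coming from the choice of $R$ together with the $w$ from the minimal vertex covers.
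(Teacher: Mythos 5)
Your proposal is correct and follows essentially the same counting argument as the paper: bound $\card{\cR_t}$ by $n^{\cO(w)}$ via vertex-set selection, $\card{\cM_{t,R}}$ via the Minimal Vertex Covers Lemma, $\card{\cP_{t,R}}$ by a Bell number depending only on $w$, and multiply by the $n+1$ choices for $i$ and the $\cO(n)$ nodes of $T$. Your extra remark that vertex deletion cannot increase the induced matching size (so the lemma applies to $\crosg{t} - V(R)$) is a detail the paper leaves implicit, and it is correct.
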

\begin{proof}
	Let $t \in V(T)$. We show that the number of table entries in $\dptable_t$ is bounded by $n^{\cO(w)}$ which together with the observation that $\card{V(T)} = \cO(n)$ yields the proposition. By definition, $\card{\cR_t} = \cO(n^{6w})$ and by the Minimal Vertex Covers Lemma we have for each $R \in \cR_t$ that $\card{\cM_{t, R}} = n^{\cO(w)}$. The size of $\cP_{t, R}$ is at most the number of partitions of a set of size $6w$, and hence at most $B_{6w} < (w/\log(w))^{\cO(w)}$ by standard upper bounds on the Bell number $B_{6w}$ (see e.g.~\cite{BT10}). Finally, there are $n+1$ choices for the integer $i$. To summarize, there are at most 
	\begin{align*}
		\cO\left(n^{6w}\right) \cdot n^{\cO(w)} \cdot (w/\log(w))^{\cO(w)} \cdot (n+1) = n^{\cO(w)}
	\end{align*}
table entries in $\dptable_t$ and the proposition follows.
\end{proof}

We now show how to compute the table entries in $\dptable$. First, we explain how to compute the entries in $\dptable_\ell$ for the leaves of $T$ and then how to compute the entries in the internal nodes of $T$ from the entries stored in the tables corresponding to their children.

\medskip
\noindent{\bf Leaves of $T$.} Let $t \in V(T)$ be a leaf of $T$ and $v = \decf^{-1}(t)$. Clearly, the crossing graph $\crosg{t}$ is a star $S$ with central vertex $v$ or a single edge.
Hence, any induced forest $F$ in $G[\{v\} \cup N(v)] - E(G[N(v)])$ satisfies that
 either $V(F)=\{v\}$ or $V(F)\subseteq N(v)$ or $F$ contains an edge in $G_{t, \bar{t}}$.
 In the last case, either $F$ is a single edge or a star with central vertex $v$.
 Let $R$ be a reduced forest of $F$.
 
If $V(F)=\{v\}$, then $R=\emptyset$, $M=N(v)$, $P=\emptyset$, and $i=1$.
If $V(F)\subseteq N(v)$, then $R=\emptyset$, $M=\{v\}$, $p=\emptyset$, and $i=0$.
Throughout the following, we assume $F$ contains an edge in $G_{t, \bar{t}}$. 

Suppose $F$ is a single edge $\{v, w\}$. Then, $R$ is either the vertex $v$ or the vertex $w$. If $V(R) = \{v\}$, then $G_{t, \bar{t}} - V(R)$ does not contain any edges and hence $\cM_{t, R} = \{\emptyset\}$. Furthermore, $F$ has size one in $G[V_t] = G[\{v\}]$.
If $V(R)=\{w\}$, then $v$ is a leaf in $F$ and hence the only minimal vertex cover satisfying (\ref{fvs:table:index:3}) is the set of neighbors of $v$ without $w$, i.e.\ the set $N(v) \setminus \{w\}$.
The size of $F$ in $G[V_t]$ is $1$.
In both cases, $F$ only has one component, so $\cP_{t, R} = \{\{R\}\}$.

Now suppose that $F$ has at least three vertices. Then, $F$ is a star with central vertex $v$ and hence, the reduced forest of any such $F$ is the single vertex $v$.
Since the vertices of $F$ in $\bar{V_t}$ are not counted in the table entry by (\ref{fvs:table:index:1}), we only have to consider one index where the reduced forest is $v$, the minimal vertex cover is empty (again since $G_{t, \bar{t}} - \{v\}$ does not have any edges), the partition of $R$ is the singleton partition and $i = 1$, since $F$ has size one in $G[V_t] = G[\{v\}]$.

We furthermore have to represent the empty solution, i.e.\ the case when $F = \emptyset$. Then, both $\{v\}$ and $N(v)$ are feasible minimal vertex covers and clearly, $P = \emptyset$. To summarize, the table entries for the leaf $t$ are set as follows.
\begin{align*}
	\dptable[t, (R, M, P), i] \defeq \left\{\begin{array}{ll}
		1, &\mbox{if } R = \emptyset, M=N(v), P = \emptyset, i = 1 \\
		1, &\mbox{if } R = \emptyset, M \in \{\{v\}, N(v)\}, P = \emptyset, i = 0 \\
		1, &\mbox{if } R = G[\{v\}], M = \emptyset, P = \{R\}, i = 1 \\
		1, &\mbox{if } R = G[\{w\}] \mbox{ where } w \in N(v), M = N(v) \setminus \{w\}, P = \{R\}, i = 1 \\
		0, &\mbox{otherwise}
	\end{array}	
	\right.
\end{align*}

\newcommand\dpindex{\mathfrak{I}}

\noindent{\bf Internal Nodes of $T$.} Let $t \in V(T)$ be an internal node with children $a$ and $b$. 
	Using Propositions~\ref{prop:restriction1}, \ref{prop:restriction2} and \ref{prop:correctness}, 
	we can show the following.
	
	\begin{proposition}\label{claim:correctness}
	Let $\dpindex = [(R, M, P), i] \in 
\left(\cR_t \times \cM_{t, R_t} \times \cP_{t, R_t}\right) \times \{0,\ldots,n\}$ such that 
	for every vertex $x$ of degree at most $1$ in $R$,
	$\potentialleaves_{R,M}(x)\neq \emptyset$. 
	Then $\dptable[t, (R, M, P), i]=1$ if and only if 
	there are restrictions $(R_a, M_a)$ and $(R_b, M_b)$ of $(R, M)$ to $G_{a, \bar{a}}$ and $G_{b, \bar{b}}$, respectively, 
	and partitions $P_a$ and $P_b$ of $\mathcal{C}(R_a)$ and $\mathcal{C}(R_b)$, respectively, 
	and integers $i_a$ and $i_b$ such that 
	\begin{itemize}
	\item $\dptable[t_a, (R_a, M_a, P_a), i_a]=1$ and $\dptable[t_b, (R_b, M_b, P_b), i_b]=1$,
	\item $(R, R_a, R_b, P_a, P_b)$ is compatible and $P=\mathcal{U}(R, R_1, R_2, P_1, P_2)$,
	\item every vertex in $(V(R) \setminus (V(R_1) \cup V(R_2))) \cap B$ has at least two neighbors in $(V(R_1)\cap A_1)\cup (V(R_2)\cap A_2)$,
	\item $i_a+i_b=i$.
	\end{itemize}
	\end{proposition}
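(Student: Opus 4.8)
The plan is to prove the two implications separately, using Propositions~\ref{prop:restriction1} and~\ref{prop:restriction2} for the ``only if'' direction and Proposition~\ref{prop:correctness} for the ``if'' direction, and in each case translating between the four table conditions (\ref{fvs:table:index:1})--(\ref{fvs:table:index:4}) and the ``respects''/restriction/compatibility vocabulary of those propositions. Throughout, write $(A_1, A_2, B) \defeq (V_a, V_b, \bar{V_t})$ for the children $a, b$ of $t$, so that $V_t = V_a \uplus V_b$, $\crosg{t} = G_{A_1 \cup A_2, B}$, $\crosg{a} = G_{A_1, A_2 \cup B}$ and $\crosg{b} = G_{A_2, A_1 \cup B}$; the graph in condition~(\ref{fvs:table:index:1}) is then exactly $G[A_1 \cup A_2 \cup \bd(B)] - E(G[\bd(B)])$, with the analogous identification at the children, and we identify $R_1, R_2, P_1, P_2$ in the statement with $R_a, R_b, P_a, P_b$.

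For the forward direction, suppose $\dptable[t, (R, M, P), i] = 1$ and fix an induced forest $F$ witnessing (\ref{fvs:table:index:1})--(\ref{fvs:table:index:4}); by (\ref{fvs:table:index:2})--(\ref{fvs:table:index:3}), $F$ respects $(R, M)$. Applying Proposition~\ref{prop:restriction1} to $F$ yields restrictions $(R_a, M_a)$ of $(R, M)$ to $\crosg{a}$ and $(R_b, M_b)$ to $\crosg{b}$ such that, with $F_a^\ast \defeq F \cap G[V_a \cup \bd(\bar{V_a})] - E(G[\bd(\bar{V_a})])$ and $F_b^\ast$ defined analogously, each $F_c^\ast$ respects $(R_c, M_c)$, and every vertex of $(V(R) \setminus (V(R_a) \cup V(R_b))) \cap \bar{V_t}$ has at least two neighbours in $(V(R_a) \cap V_a) \cup (V(R_b) \cap V_b)$. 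For $c \in \{a, b\}$ let $P_c$ be the partition of $\mathcal{C}(R_c)$ whose classes are the components of $R_c$ lying in a common connected component of $F_c^\ast$, and set $i_c \defeq \card{V(F_c^\ast) \cap V_c} = \card{V(F) \cap V_c}$, so that $i_a + i_b = i$ since $V_t = V_a \uplus V_b$. Proposition~\ref{prop:restriction2} gives that $(R, R_a, R_b, P_a, P_b)$ is compatible. Unwinding the definition of ``respects'' shows that $F_c^\ast$ witnesses $\dptable[c, (R_c, M_c, P_c), i_c] = 1$ (conditions (\ref{fvs:table:index:1})--(\ref{fvs:table:index:3}) at $c$ are immediate, and (\ref{fvs:table:index:4}) holds by the choice of $P_c$). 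It remains to check $P = \mathcal{U}(R, R_a, R_b, P_a, P_b)$: one argues that two components $C, C' \in \mathcal{C}(R)$ lie in a common connected component of the auxiliary graph $Q$ precisely when they lie in a common connected component of $F$ --- the ``only if'' by lifting each edge of $Q$ (a shared vertex, or a $P_c$-class, realised inside $F_c^\ast \subseteq F$) to a path of $F$, and the ``if'' by splitting an $F$-path between $C$ and $C'$ at the $V_a / V_b$ interface and reading off the induced walk in $Q$.

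For the backward direction, assume we are given restrictions $(R_a, M_a), (R_b, M_b)$ of $(R, M)$, partitions $P_a, P_b$, and integers $i_a, i_b$ satisfying the four listed bullets, together with the standing hypothesis that $\potentialleaves_{R, M}(x) \neq \emptyset$ for every $x$ of degree at most $1$ in $R$. Let $H_a, H_b$ be induced forests witnessing $\dptable[a, (R_a, M_a, P_a), i_a] = 1$ and $\dptable[b, (R_b, M_b, P_b), i_b] = 1$; by (\ref{fvs:table:index:1})--(\ref{fvs:table:index:4}) at the children, $H_c$ respects $(R_c, M_c)$, has $\card{V(H_c) \cap V_c} = i_c$, and induces exactly the partition $P_c$ on $\mathcal{C}(R_c)$. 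These facts, together with the restriction property, the two-neighbour condition and compatibility from the hypotheses, are precisely the assumptions of Proposition~\ref{prop:correctness}, which therefore produces an induced forest $H$ in $G[V_t \cup \bd(\bar{V_t})] - E(G[\bd(\bar{V_t})])$ respecting $(R, M)$ with $V(H) \cap V_t = (V(H_a) \cap V_a) \cup (V(H_b) \cap V_b)$; in particular $\card{V(H) \cap V_t} = i_a + i_b = i$. Thus $H$ witnesses (\ref{fvs:table:index:1}), while (\ref{fvs:table:index:2}) and (\ref{fvs:table:index:3}) follow from $H$ respecting $(R, M)$; for (\ref{fvs:table:index:4}) we must again verify that the partition of $\mathcal{C}(R)$ induced by the components of $H$ equals $P = \mathcal{U}(R, R_a, R_b, P_a, P_b)$, which we read off from the explicit construction of $H$ in the proof of Proposition~\ref{prop:correctness} (delete $\bar{V_t} \setminus V(R)$ and all edges inside $\bar{V_t}$ from $G[V(H_a) \cup V(H_b) \cup V(R)]$, then attach a potential leaf to each low-degree vertex of $R$), noting that this gluing merges two components of $R$ exactly along the connected components of $Q$.

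The routine part of the argument is the bookkeeping: matching ``$F_c^\ast$ respects $(R_c, M_c)$'' (resp. the hypotheses about $H_c$) to conditions (\ref{fvs:table:index:1})--(\ref{fvs:table:index:4}) at the child. The one genuinely delicate step, common to both directions, is the identity $P = \mathcal{U}(R, R_a, R_b, P_a, P_b)$ --- equivalently, the claim that the auxiliary graph $Q$ faithfully records which components of $R$ get merged by a forest gluing the two child solutions. A secondary subtlety is that a reduced forest is unique only up to the choice of representative vertex in each single-edge component, so one must ensure that the $R_c$ returned by Proposition~\ref{prop:restriction1} (which uses the Single-edge Rule) are exactly the reduced forests intended by condition~(\ref{fvs:table:index:2}) at the children; this is precisely why Proposition~\ref{prop:restriction1} is stated with that rule built in.
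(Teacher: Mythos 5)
Your proposal is correct and follows essentially the same route as the paper: Proposition~\ref{prop:restriction1} and Proposition~\ref{prop:restriction2} for the forward direction, Proposition~\ref{prop:correctness} for the converse, with the same identification $(A_1,A_2,B)=(V_a,V_b,\bar{V_t})$ and the same choice of $P_a,P_b,i_a,i_b$. The only difference is that you spell out the verification of $P=\mathcal{U}(R,R_a,R_b,P_a,P_b)$ and the role of the Single-edge Rule, which the paper's proof leaves as "not difficult to verify."
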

	\begin{proof}
	Suppose $\dptable[t, (R, M, P), i]=1$.
	Let $H$ be an induced forest of $G[V_t\cup \bd(\bar{V_t})]-E(G[\bd(\bar{V_t})])$ that is a partial solution with respect to $(R, M, P)$ and $i$.
	For each $x\in \{a,b\}$, let $H_x\defeq H\cap (G[V_x\cup \bd(\bar{V_x})]-E(G[\bd(\bar{V_x})]))$.
	By Proposition~\ref{prop:restriction1}, 
	there are restrictions $(R_a, M_a)$ and $(R_b, M_b)$ of $(R,M)$ to $V_a$ and $V_b$, respectively, 
	such that 
	\begin{itemize}
	\item $H_a$ respects $(R_a, M_a)$, 
	and $H_b$ respects $(R_b, M_b)$, and 
	\item every vertex in $(V(R) \setminus (V(R_1) \cup V(R_2))) \cap B$ has at least two neighbors in $(V(R_1)\cap A_1)\cup (V(R_2)\cap A_2)$.
	\end{itemize}
	For each $x\in \{a,b\}$, let $P_x$ be the partition of $\mathcal{C}(R_x)$ such that 
	$C_1$ and $C_2$ in $\mathcal{C}(R_x)$ are contained in the same part if and only if
	they are contained in the same connected component of $H_x$.
	Then by Proposition~\ref{prop:restriction2}, the tuple $(R, R_a, R_b, P_a, P_b)$ is compatible and it is not difficult to verify that $P=\mathcal{U}(R, R_1, R_2, P_1, P_2)$.
	Let $i_x\defeq \abs{V(H)\cap V(G[V_x])}$. Then, $i_a+i_b=i$ as $V_a$ and $V_b$ are disjoint.
	This concludes the forward direction.
	
	To verify the converse direction, suppose the latter conditions hold.
	For each $x\in \{a,b\}$, let $H_x$ be an induced forest in $G[V_x\cup \bd(\bar{V_x})]-E(G[\bd(\bar{V_x})])$ that is a partial solution with respect to $(R_x, M_x, P_x)$ and $i_x$.
	By Proposition~\ref{prop:correctness}, 
	there is an induced forest $H$ in $G[V_t \cup \bd(\bar{V_t})]-E(G[\bd(\bar{V_t})])$ respecting $(R, M)$ such that 
	\[
		H\cap G[V_t]=(H_a\cap G[V_a])\cup (H_b\cap G[V_b]).
	\]
	Therefore, we have $\abs{V(H)\cap V_t}=\abs{V(H_a)\cap V_a}+\abs{V(H_b)\cap V_b}=i_a+i_b=i$, so $\dptable[t, (R, M, P), i]=1$, as required.
		\end{proof}

	Based on Proposition~\ref{claim:correctness}, we can proceed with the computation of the table at an internal node $t$ with children $a$ and $b$. 
   Let $\dpindex = [(R, M, P), i] \in \left(\cR_t \times \cM_{t, R_t} \times \cP_{t, R_t}\right) \times \{0,\ldots,n\}$.

\begin{description}
	\item[Step 1 (Valid Index).]  We verify whether $\dpindex$ is valid, i.e.\ whether it can represent a valid partial solution in the sense of the definition of the table entries. That is, each vertex of degree at most $1$ in $R$ has to have at least one potential leaf. 

	\item[Step 2 (Reduced Forests).] 
	We consider all pairs of indices for $\dptable_a$ and $\dptable_b$ denoted by
\begin{itemize}
	\item $\dpindex_a = [(R_a, M_a, P_a), i_a] \in \left(\cR_a \times \cM_{a, R_a} \times \cP_{a, R_a} \right) \times \{0,\ldots,n\}$ and
	\item $\dpindex_b = [(R_b, M_b, P_b), i_b] \in \left(\cR_b \times \cM_{b, R_b} \times \cP_{b, R_b}\right) \times \{0,\ldots,n\}$.
\end{itemize}
	We check
		\begin{itemize}
	\item $(R_a, M_a)$ and $(R_b, M_b)$ are restrictions of $(R, M)$ to $G_{a, \bar{a}}$ and $G_{b, \bar{b}}$ respectively,
	\item $\dptable[t_a, (R_a, M_a, P_a), i_a]=1$ and $\dptable[t_b, (R_b, M_b, P_b), i_b]=1$,
	\item $(R, R_a, R_b, P_a, P_b)$ is compatible  and $P=\mathcal{U}(R, R_1, R_2, P_1, P_2)$,
	\item every vertex in $(V(R) \setminus (V(R_1) \cup V(R_2))) \cap B$ has at least two neighbors in $(V(R_1)\cap A_1)\cup (V(R_2)\cap A_2)$,
	\item $i_a+i_b=i$.
	\end{itemize}
	If there are $\dpindex_a$ and $\dpindex_b$ satisfying all of the above conditions, then we assign $\dptable[t, (R, M, P), i]=1$ and otherwise, we assign $\dptable[t, (R, M, P), i]=0$.
	Correctness follows from Proposition~\ref{claim:correctness}.
\end{description}

We finish by analyzing the running time of the algorithm.
At each node $t \in V(T)$, we can enumerate all table indices in time $n^{\mathcal{O}(w)}$ by Corollary~\ref{cor:mvc:lemma} and Proposition~\ref{prop:indices}.
	Let $\dpindex = [(R, M, P), i] \in \left(\cR_t \times \cM_{t, R_t} \times \cP_{t, R_t}\right) \times \{0,\ldots,n\}$.
   	If $t$ is a leaf node, then $\dptable[t, (R, M, P), i]$ can be computed in linear time. 
	Assume that $t$ is an internal node.
	We can check in linear time whether $\dpindex$ is valid or not.
	Next, for all pairs of $\dpindex_a= [(R_a, M_a, P_a), i_a] \in \left(\cR_a \times \cM_{a, R_a} \times \cP_{a, R_a} \right) \times \{0,\ldots,n\}$ and $\dpindex_b= [(R_b, M_b, P_b), i_b] \in \left(\cR_b \times \cM_{b, R_b} \times \cP_{b, R_b}\right) \times \{0,\ldots,n\}$
	we verify the conditions of Step 2 hold, which can be done in time $\cO(n^2)$.
	Therefore, by Proposition~\ref{prop:indices}, we can decide whether $\dptable[t, (R, M, P), i]=1$ or not in time $n^{\mathcal{O}(w)}$.
	As $T$ contains $\mathcal{O}(n)$ nodes, we can solve \textsc{Feedback Vertex Set} in time $n^{\mathcal{O}(w)}$.
	
	We can easily modify our algorithm into an algorithm solving the weighted version of the problem.
	In \textsc{Weighted Feedback Vertex Set}, we are given a graph and a weight function $\omega:V(G)\rightarrow \mathbb{R}$, 
	we want to find a set $S$ with minimum $\omega(S)$ such that $G-S$ has no cycles.
	Similar to \textsc{Feedback Vertex Set}, we can instead solve the problem of finding an induced forest $F$ with maximum $\omega(V(F))$.
	Instead of specifying $i$ in the table index $[t, (R, M, P), i]$,
	we store at $\dptable[t, (R, M, P)]$ the maximum value $\omega(V(F) \cap V_t)$ over all induced forests $F$ that respect $(R, M)$ and whose connectivity partition is $P$.
	The procedure for leaf nodes is analogous. In the internal node, 
	we compare all pairs $(R_a, M_a, P_a)$ and $(R_b, M_b, P_b)$ for children $t_a$ and $t_b$, and take the maximum among all sums $\mathcal{T}[t_a, (R_a, M_a, P_a)]+\mathcal{T}[t_b, (R_b, M_b, P_b)]$.
	Therefore, we can solve \textsc{Weighted Feedback Vertex Set} in time $n^{\mathcal{O}(w)}$ as well. We have proved Theorem \ref{thm:FVSmim}.

\section{Hamiltonian Cycle for linear mim-width 1}\label{sec:hamcyc}

\begin{theorem}
    \textsc{Hamiltonian Cycle} is $\NP$-complete on graphs of linear mim-width $1$, even if given the mim-width decomposition.
    \end{theorem}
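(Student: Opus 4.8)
The plan is to dispatch membership in $\NP$ immediately (a Hamiltonian cycle is a polynomial-size certificate, verifiable in linear time) and to get $\NP$-hardness by revisiting the reduction of Panda and Pradhan~\cite{PP08}, which from an instance $I$ of an $\NP$-complete problem produces in polynomial time a \emph{rooted directed path} graph $G_I$ --- the intersection graph of a family of directed paths in a rooted tree --- such that $G_I$ is Hamiltonian iff $I$ is a yes-instance. What remains is to prove that every $G_I$ has linear mim-width $1$, in the strong form that a linear branch decomposition of mim-width $1$ of $G_I$ can be output in polynomial time, so that hardness holds even when the decomposition is part of the input. Since a linear branch decomposition is nothing but a linear order $\sigma$ of $V(G_I)$, it suffices to exhibit such a $\sigma$ for which every prefix cut $(A, V(G_I)\setminus A)$, with $A$ an initial segment of $\sigma$, satisfies $\mimval_{G_I}(A)\le 1$, i.e.\ $G_I[A, V(G_I)\setminus A]$ has no induced matching of size $2$.

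\textbf{Constructing and verifying the order.}
To define $\sigma$ I would work directly with the tree model $\mathcal{T}$ of $G_I$ output by the reduction, writing $P_v\subseteq\mathcal{T}$ for the directed path (oriented towards the root) assigned to vertex $v$. A first useful observation is the standard fact that two root-directed paths that share a node are comparable, in the sense that the top (root-side) endpoint of one lies on the other; this linearises the local intersection structure. Using this, $\sigma$ is obtained from a fixed traversal of $\mathcal{T}$ --- for instance, order the vertices $v$ by the position of the bottom endpoint of $P_v$ in a left-to-right DFS of $\mathcal{T}$, breaking ties by the depth of the top endpoint --- after checking that the gadget-to-gadget attachments in the Panda--Pradhan construction are laid out along $\mathcal{T}$ in one consistent left-to-right fashion. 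Then, for a prefix $A$, suppose for contradiction that $x_1y_1$ and $x_2y_2$ form an induced matching in $G_I[A, V(G_I)\setminus A]$ with $x_1,x_2\in A$, $y_1,y_2\notin A$; translating, $P_{x_1}\cap P_{y_1}\ne\emptyset$ and $P_{x_2}\cap P_{y_2}\ne\emptyset$, while $P_{x_1}\cap P_{y_2}=\emptyset$ and $P_{x_2}\cap P_{y_1}=\emptyset$. A short case analysis on the relative positions in $\mathcal{T}$ of the four endpoints, using that $x_1,x_2$ precede $y_1,y_2$ in $\sigma$ (so their bottom endpoints occur earlier in the traversal), should force $P_{x_1}$ to meet $P_{y_2}$ or $P_{x_2}$ to meet $P_{y_1}$, a contradiction. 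Hence $\mimval_{G_I}(A)\le 1$ for every prefix, $G_I$ has linear mim-width $1$, and $\sigma$ is plainly computable in polynomial time from $\mathcal{T}$.

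\textbf{Main obstacle.}
The difficulty is not the high-level argument but pinning down the right order: rooted directed path graphs in general do \emph{not} have bounded mim-width, so the proof must exploit the special shape of $\mathcal{T}$ in the construction (essentially a caterpillar/spine with short attached paths), and the case analysis above has to be run against the concrete gadget adjacencies, not in full generality. A secondary technical point is keeping each gadget --- and, crucially, the clique-like connectors joining consecutive gadgets --- contiguous under $\sigma$, since a cut that splits a connector is exactly where an induced $2$-matching could otherwise arise, and in handling degenerate vertices whose path is a single node. Once $\sigma$ is fixed so that these connectors are never split nontrivially, verifying $\mimval\le 1$ reduces to the finite check sketched above, applied gadget by gadget, and the theorem follows.
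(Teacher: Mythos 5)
Your high-level plan coincides with the paper's: $\NP$-membership is immediate, hardness comes from the Panda--Pradhan reduction to rooted directed path graphs, and the whole burden is to exhibit, in polynomial time, a linear order of the constructed graph all of whose prefix cuts have mim-value $1$. However, the proposal stops exactly where the proof begins. You never actually define the order beyond ``a fixed traversal of $\mathcal{T}$ \ldots after checking that the gadget-to-gadget attachments are laid out in one consistent left-to-right fashion,'' and the verification is deferred to ``a short case analysis \ldots [that] should force'' a contradiction. As you yourself observe, no generic property of rooted directed path models can carry this argument, since the class as a whole does not have bounded mim-width; so the claim lives or dies on the concrete adjacencies of the Panda--Pradhan gadgets, and those are precisely what you have not engaged with. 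A DFS-by-bottom-endpoint order is not shown to work, and it is not clear that it does: the paper's order is not a traversal of the tree model at all but the sequence $U_1\oplus L_1\oplus\cdots\oplus U_m\oplus L_m\oplus(X_m,\ldots,X_1)$, in which all the $X_i$ are pulled out and placed at the very end in \emph{reverse} index order. That placement is what makes Case~3 of the verification go through, and it is justified by the nested-neighborhood property $N(X_{i_1})\subseteq N(X_{i_2})$ for $i_1\le i_2$ --- a fact about the construction, not about directed path models.

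Concretely, what is missing is the content of the paper's proof: (a) an explicit description of the vertex set ($X_i$, $Y_j$, $Z_j$, $A_{i,j}$) and the cliques imposed on it; (b) the five structural properties extracted from it ($\cX\cup\cY\cup\cZ$ independent, $\cA$ a clique, $Y_j$ and $Z_j$ twins with neighborhood $\cA_j$, the nested neighborhoods of the $X_i$, and the fact that $\cA_j$ has no neighbors in later $U$-blocks); and (c) the three-case analysis showing that for any prefix $S_v$, two crossing edges $a_1b_1,a_2b_2$ cannot form an induced matching because one endpoint's crossing neighborhood is contained in the other's. Your ``main obstacle'' paragraph correctly diagnoses that splitting a clique-like connector is where an induced $2$-matching could arise, but resolving that is the theorem, not a footnote to it; without (a)--(c) the proposal is a plausible plan rather than a proof.
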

    \begin{proof}
    Itai et al \cite{IPS82} showed that given a bipartite graph $G$ with maximum degree 3, it is $\NP$-complete to decide if it has a Hamiltonian cycle, while Panda and Pradhan \cite{PP08} construct, from this graph $G$, a rooted directed path graph $H$ such that $H$ has a Hamiltonian cycle if and only if $G$ does. Here we show that the construction of \cite{PP08} can be used to also output a linear mim-width 1 decomposition of $H$, in polynomial time, which will prove our result.

    Let $G$ be the bipartite graph on bipartition $(\{v_1, v_2, \ldots, v_m\}, \{w_1, w_2, \ldots, w_m\})$
    that has maximum degree at most $3$, and has no leaves.
    Let us consider the construction, given by Panda and Pradhan \cite{PP08}, of a new graph $H$ from $G$ as follows:
    \begin{enumerate}
    \item For each $i\in \{1, \ldots, m\}$, we introduce vertices $X_i, Y_i$ to $H$, and if $w_i$ has degree $3$,
    then we introduce a vertex $Z_i$ additionally. We let $\cX_{\ge i} \defeq \bigcup_{i' \ge i} X_i$, $\cX \defeq \cX_{\ge 1}$, $\cY_{\ge i} \defeq \bigcup_{i' \ge i} Y_i$, $\cY \defeq \cY_{\ge 1}$, $\cZ_{\ge i} \defeq \bigcup_{\substack{i' \ge i \\ \deg(w_i)=3}} Z_i$ and $\cZ \defeq \cZ_{\ge 1}$.
    \item For each $v_iw_j\in E(G)$, we introduce a vertex $A_{i,j}$ to $H$. We let $\cA_j \defeq \bigcup_{i\colon v_iw_j \in E(G)} A_{i, j}$ and $\cA \defeq \bigcup_{j \le m} \cA_j$.
    \item For each $i\in \{1, \ldots, m\}$, we make $\{X_i\}\cup \{A_{i',j}:i'\le i, v_{i'}w_j\in E(G)\}$ a clique in $H$.
    \item For each $j\in \{1, \ldots, m\}$, we make $\{Y_j\}\cup \{A_{i,j}:v_iw_j\in E(G)\}$ a clique in $H$,
    and $w_i$ has degree $3$, then we also make $\{Z_j\}\cup \{A_{i,j}:v_iw_j\in E(G)\}$ a clique in $H$.
    \end{enumerate}
    That concludes the construction of the graph $H$.
	We now summarize the most important properties of $H$ which will be useful later in the proof.
	\begin{enumerate}
		\makeatletter	
		\renewcommand\labelenumi{(H\arabic{enumi})}
		\renewcommand\theenumi{H\arabic{enumi}}
		\makeatother
		\item $\cX \cup \cY \cup \cZ$ is an independent set in $H$. \label{proof:hamcyc:properties:1} 
		\item $\cA$ is a clique in $H$. \label{proof:hamcyc:properties:2}
		\item For $j \in [m]$ and $v \in \{Y_j, Z_j\}$, $N(v) = \cA_j$. \label{proof:hamcyc:properties:3}
		\item For $i_1, i_2 \in [m]$ with $i_1 \le i_2$, $N(X_{i_1}) \subseteq N(X_{i_2})$ and $N(A_{i_1, \cdot}) \cap \cX \subseteq N(A_{i_2, \cdot}) \cap \cX$. \label{proof:hamcyc:properties:4}
		\item For $j \in [m-1]$, no vertex in $\cA_j$ has a neighbor in $\cY_{\ge j+1} \cup \cZ_{\ge j+1}$. \label{proof:hamcyc:properties:5}
	\end{enumerate}   
    
        Formally, a branch decomposition $(T, \decf)$ of $H$ is linear if $T$ consists of a path on $|V(H)|-2$ nodes with a leaf added to each inner node of the path, with $\decf$ a bijection between the leaves of $T$ and the vertices of $H$. For simplicity, let us simply say that a linear branch decomposition is a total ordering of the vertices of $H$.
    For each $j\in \{1, \ldots, m\}$, let $L_j$ be any ordering of vertices in $\cA_j$,
    and let $U_j$ be the ordering $(Y_j, Z_j)$ if $w_j$ has degree $3$, and $(Y_j)$ otherwise.
    We claim that the linear branch decomposition\footnote{For two disjoint total orderings $X$ and $Y$ we define their sum $X \oplus Y$ as follows. Suppose $X = x_1, x_2, \ldots, x_r$ and $Y = y_1, y_2, \ldots, y_s$, then $X \oplus Y = x_1, x_2, \ldots, x_r, y_1, y_2, \ldots y_s$.}
    \begin{align}
    	U_1\oplus L_1\oplus U_2\oplus L_2 \oplus \cdots \oplus U_m\oplus L_m \oplus (X_m, X_{m-1}, \ldots, X_1)\label{proof:hamcyc:branch:dec}
    \end{align}
    of $H$ has mim-width at most $1$.
    Let $v\in V(H)$, and let $S_v$ be the union of $\{v\}$ and the set of vertices appearing before $v$ in the ordering, and let $T_v\defeq V(H)\setminus S_v$.
    We divide into three cases depending on the where the vertex $v$ appears in the linear branch decomposition. Suppose for a contradiction that there exist $a_1, a_2 \in S_v$, $b_1, b_2 \in T_v$ such that $a_1b_1$, $a_2b_2 \in E(H)$ but $a_1b_2, a_2b_1 \notin E(H)$ (which would imply that the linear branch decomposition (\ref{proof:hamcyc:branch:dec}) has mim-width at least two).

	\begin{description}
		\item[Case 1 ($v \in U_k$ for some $k$).] By (\ref{proof:hamcyc:properties:3}), every vertex in $U_t$ where $t < k$, has no neighbors in $T_v$. Let $x \in T_v$ be a neighbor of $v$. By (\ref{proof:hamcyc:properties:1}), $x \in \cA$ so by (\ref{proof:hamcyc:properties:2}), $x$ is adjacent to every vertex in $L_1 \cup L_2 \cup \cdots \cup L_k$. We can conclude that $v$ is neither $a_1$ nor $a_2$. We have argued that $a_1, a_2 \in L_1 \cup \cdots \cup L_{k-1}$. By (\ref{proof:hamcyc:properties:4}), either $N(a_1) \cap \cX \subseteq N(a_2) \cap \cX$ or $N(a_2) \cap \cX \subseteq N(a_1) \cap \cX$. Suppose wlog.\ that the former holds. Together with (\ref{proof:hamcyc:properties:2}) and (\ref{proof:hamcyc:properties:5}), we can conclude that $N(a_1) \cap T_v \subseteq N(a_2) \cap T_v$, a contradiction.
		
		\item[Case 2 ($v \in L_k$ for some $k$).] Again by (\ref{proof:hamcyc:properties:3}), every vertex in $U_t$ where $t < k$ has no neighbors in $T_v$. By the argument given in Case $1$, it cannot happen that $a_1$ and $a_2$ are both contained in $L_1 \cup \cdots \cup L_{k-1} \cup (L_k \cap S_v)$. Hence we can (wlog.) assume that $a_1 \in U_k$, i.e.\ $a_1 = Y_k$ or $a_1 = Z_k$. Since $Y_k$ and $Z_k$ are twins by (\ref{proof:hamcyc:properties:3}), $a_2$ cannot be the vertex in $U_k \setminus \{a_1\}$ (if exists). We can conclude that $a_2 \in L_1 \cup \cdots \cup L_{k-1} \cup (L_k \cap S_v)$, in particular that $a_2 \in \cA$. By (\ref{proof:hamcyc:properties:3}), $N(a_1) = \cA_k$, so by (\ref{proof:hamcyc:properties:2}), every neighbor of $a_1$ is adjacent to $a_2$, a contradiction.
		
		\item[Case 3 ($v = X_k$ for some $k$).] Suppose wlog.\ that $b_1 = X_{j_1}$ and $b_2 = X_{j_2}$ where $j_1 < j_2 < k$. By (\ref{proof:hamcyc:properties:4}), $N(b_1) \subset N(b_2)$, so in particular $N(b_1) \cap S_v \subset N(b_2) \cap S_v$, a contradiction.
	\end{description}	
    We have shown that the linear branch decomposition (\ref{proof:hamcyc:branch:dec}) has mim-width $1$.
    \end{proof}

\section{Powers of graphs}\label{sec:graphclass}

	In this section we show that $k$-powers of graphs of tree-width $w-1$, path-width $w$, or clique-width $w$ all have mim-width at most $w$.
	This is somewhat surprising because this bound does not depend on $k$.
	We begin by proving the bound for graphs of bounded treewidth with the following lemma capturing the essential property used in the proof.
	Throughout the following, for a graph $G$ and a pair of vertices $v, w \in V(G)$, we denote by $\dist_G(v,w)$ the distance between $v$ and $w$ in $G$.
	
	\begin{lemma}\label{lem:power}
	Let $k$ and $w$ be positive integers and let $(A,B,C)$ be a vertex partition of  graph $G$ such that there are no edges between $A$ and $C$ and $B$ has size $w$.
	If $H$ is the $k$-power of $G$,
	then $\mimval_{H}(A\cup B)\le w$.
	\end{lemma}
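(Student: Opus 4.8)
The plan is to take an arbitrary induced matching $\{x_1y_1,\dots,x_my_m\}$ of $H[A\cup B, C]$ (so $x_i\in A\cup B$, $y_i\in C$, using that $V(G)=V(H)=A\cup B\cup C$) and to produce an injection $i\mapsto b_i$ from the matching edges into $B$; since $|B|=w$ this gives $m\le w$, and hence $\mimval_{H}(A\cup B)\le w$. For each $i$, the edge $x_iy_i\in E(H)$ means $\dist_G(x_i,y_i)\le k$, so I would fix a shortest $x_i$--$y_i$ path $P_i$ in $G$, which then has length at most $k$. The first observation is that $P_i$ must meet $B$: this is trivial if $x_i\in B$, and if $x_i\in A$ then, walking along $P_i$ from $x_i$ toward $y_i\in C$, the first vertex that leaves $A$ cannot lie in $C$ (its predecessor lies in $A$ and there are no $A$--$C$ edges in $G$), so it lies in $B$. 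Let $b_i\in B\cap V(P_i)$ be such a vertex, and let $d_i^-$ and $d_i^+$ be the lengths of the two subpaths of $P_i$ from $x_i$ to $b_i$ and from $b_i$ to $y_i$; then $d_i^-+d_i^+$ equals the length of $P_i$, so $d_i^-+d_i^+\le k$.

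The core step is to show that $i\mapsto b_i$ is injective. Suppose $b_i=b_j=:b$ for some $i\ne j$. From $(d_i^-+d_i^+)+(d_j^-+d_j^+)\le 2k$, at least one of $d_i^-+d_j^+\le k$ and $d_j^-+d_i^+\le k$ holds; say the former (the other case is symmetric). Concatenating the $x_i$-to-$b$ portion of $P_i$ with the $b$-to-$y_j$ portion of $P_j$ yields a walk in $G$ from $x_i$ to $y_j$ of length $d_i^-+d_j^+\le k$, so $\dist_G(x_i,y_j)\le k$. Since $x_i\in A\cup B$ and $y_j\in C$ are distinct vertices, this forces $x_iy_j\in E(H)$, contradicting that $\{x_1y_1,\dots,x_my_m\}$ is an \emph{induced} matching. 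Hence distinct matching edges are charged to distinct vertices of $B$, and $m\le |B|=w$. As the matching was arbitrary, $\mimval_H(A\cup B)\le w$.

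The conceptual content is that $B$ separates $A$ from $C$ not merely in $G$ but, in a quantitative way, in every power of $G$: every short $A$--$C$ walk passes through $B$, so each crossing edge of $H[A\cup B,C]$ can be charged to a vertex of $B$. The only mild subtlety — and the step I expect to need the most care in phrasing — is that the charged vertex $b_i$ need not split $P_i$ into balanced halves; this is precisely what the averaging inequality $(d_i^-+d_i^+)+(d_j^-+d_j^+)\le 2k$ handles, since it guarantees that at least one of the two possible concatenations of half-paths is short enough to produce the forbidden chord. Everything else (existence of $b_i$, that the concatenation is a genuine walk, and that $x_i\ne y_j$) is routine.
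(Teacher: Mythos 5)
Your proof is correct and follows essentially the same route as the paper's: both charge each crossing edge of $H[A\cup B,C]$ to a vertex of $B$ lying on a short connecting path (using that $B$ separates $A$ from $C$), and both use the same averaging step $(d_i^-+d_i^+)+(d_j^-+d_j^+)\le 2k$ to produce a forbidden chord; the paper phrases this as pigeonhole on $w+1$ edges rather than injectivity, which is the same argument in contrapositive form.
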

	\begin{proof}
	Let $B\defeq \{b_1, b_2, \ldots, b_w\}$.
	It is clear that for $v\in A\cup B$ and $z\in C$, $\dist_G(v,z)\le k$ if and only if there exists $i\in \{1, 2, \ldots, w\}$ such that
	$\dist_G(v,b_i)+\dist_G(z,b_i)\le k$.
	
	Suppose for contradiction that $\mimval_{H}(A\cup B)> w$.
	Let $\{y_1z_1, y_2z_2, \ldots, y_tz_t\}$ be an induced matching of size $t \ge w+1$ in $H[A\cup B, C]$.
	There are distinct integers $t_1, t_2\in \{1, 2, \ldots, t\}$ and an integer $j\in \{1, 2, \ldots, w\}$ such that 
	\begin{align*}
		\dist_G(y_{t_1},b_j)+\dist_G(z_{t_1},b_j)\le k \text{ and } \dist_G(y_{t_2},b_j)+\dist_G(z_{t_2},b_j)\le k.
	\end{align*}
	Then we have either $\dist_G(y_{t_1},b_j)+\dist_G(z_{t_2},b_j)\le k$ or $\dist_G(y_{t_2},b_j)+\dist_G(z_{t_1},b_j)\le k$, 
	which contradicts the assumption that $y_{t_1}z_{t_2}$ and $y_{t_2}z_{t_1}$ are not edges in $H$.
	
	We conclude that $\mimval_{H}(A\cup B)\le w$.
	\end{proof}
	\begin{definition}
	A \emph{tree decomposition} of a graph $G$ is 
	a pair $(T,\cB)$ 
	consisting of a tree $T$
	and a family $\cB=\{B_t\}_{t\in V(T)}$ of sets $B_t\subseteq V(G)$,
	called \emph{bags},
	satisfying the following three conditions:
	\begin{enumerate}[(i)]
	\item $V(G)=\bigcup_{t\in V(T)}B_t$,
	\item for every edge $uv$ of $G$, there exists a node $t$ of $T$ such that $u,v\in B_t$, and\label{def:tree:decomposition:edge}
	\item for $t_1,t_2,t_3\in V(T)$, $B_{t_1}\cap B_{t_3}\subseteq B_{t_2}$ whenever $t_2$ is on the path from $t_1$ to $t_3$ in $T$.
	\end{enumerate}
	The \emph{width} of a tree decomposition $(T,\cB)$ is $\max\{ \abs{B_{t}}-1:t\in V(T)\}$.	
	The \emph{tree-width} of $G$ is the minimum width over all tree decompositions of $G$. 
	A tree decomposition $(T,\cB=\{B_t\}_{t\in V(T)})$ is a \emph{nice tree decomposition} with root node $r\in V(T)$ if $T$ is a rooted tree with root node $r$, and every node $t$ of $T$ is one of the following:
\begin{enumerate}[(1)]
  \item A \emph{leaf node}, i.e.\ $t$ is a leaf of $T$ and $B_t=\emptyset$.
  \item An \emph{introduce node}, i.e.\ $t$ has exactly one child $t'$ and $B_t=B_{t'}\cup \{v\}$ for some $v\in V(G)\setminus B_{t'}$.
  \item A \emph{forget node}, i.e.\ $t$ has exactly one child $t'$ and $B_t=B_{t'}\setminus \{v\}$ for some $v\in B_{t'}$.
  \item A \emph{join node}, i.e.\ $t$ has exactly two children $t_1$ and $t_2$, and $B_t=B_{t_1}=B_{t_2}$.
\end{enumerate}
\end{definition}
	
		\begin{theorem}\label{thm:treewidthpower}
			Let $k$ and $w$ be positive integers and $G$ be a graph that admits a nice tree decomposition of width $w$, all of whose join bags are of size at most $w$. Then the $k$-power of $G$ has mim-width at most $w$. Furthermore, given such a nice tree decomposition, we can output a branch decomposition of mim-width at most $w$ in polynomial time.
	\end{theorem}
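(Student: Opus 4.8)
The plan is to reduce the statement to building \emph{one} suitable branch decomposition of $V(G)$ whose cuts are then controlled via Lemma~\ref{lem:power}. Write $H$ for the $k$-power of $G$. If $(X,\bar X)$ is a cut of $V(G)$ and $B\subseteq \bar X$ satisfies $|B|\le w$ and covers the $G$-boundary of $\bar X$ (meaning no vertex of $\bar X\setminus B$ has a $G$-neighbour in $X$), then applying Lemma~\ref{lem:power} to the partition $(\bar X\setminus B,\,B,\,X)$ of $V(G)$ gives $\mimval_H(\bar X)\le w$, and since $\mimval_H$ is unchanged under complementing the part, $\mimval_H(X)\le w$ too. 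So it suffices to construct a branch decomposition $(\dectree,\decf)$ of $G$ such that for every edge of $\dectree$ at least one side of the induced cut admits such a covering set of size at most $w$; this yields $\mimw(\dectree,\decf)\le w$, and in particular that the $k$-power of $G$ has mim-width at most $w$.

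To build $(\dectree,\decf)$, root the given nice tree decomposition $(T,\cB)$ at a node $r$ with $B_r=\emptyset$. In a nice tree decomposition every vertex $v\in V(G)$ is forgotten at a unique forget node $f_v$, and distinct vertices get distinct forget nodes, so $|V(G)|$ is the number of forget nodes; we may assume $|V(G)|\ge 2$, the statement being trivial otherwise. Let $T'$ be the minimal subtree of $T$ containing all the $f_v$: its degree-one nodes are forget nodes, and its nodes of degree at least three are join nodes of $(T,\cB)$, hence of degree exactly three. Suppress every degree-two node of $T'$ that is not of the form $f_v$ (these are introduce nodes lying on paths between forget nodes), obtaining a tree $T^\star$ each of whose nodes is some $f_v$ or a join node. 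Finally attach, for each $v\in V(G)$, a fresh pendant leaf $\ell_v$ to $f_v$ and set $\decf(v)\defeq\ell_v$; call the result $\dectree$. Forget nodes now have degree at most three and join nodes exactly three, so $\dectree$ is subcubic, and its leaves are precisely the $\ell_v$; hence $(\dectree,\decf)$ is a branch decomposition of $G$.

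Now fix an edge $e$ of $\dectree$. If $e=\{f_v,\ell_v\}$ is a pendant edge its cut is $(\{v\},V(G)\setminus\{v\})$ and $\mimval_H\le 1\le w$. Otherwise $e$ arises from an edge $e^\star$ of $T^\star$; let $\alpha$ be the endpoint of $e^\star$ farther from $r$, so $\alpha$ is a forget node or a join node. Write $\gamma(\alpha)$ for the union of the bags in the subtree of $T$ rooted at $\alpha$ and put $\mathrm{own}(\alpha)\defeq\gamma(\alpha)\setminus B_\alpha$. A short argument with the tree-decomposition axioms shows $\mathrm{own}(\alpha)=\{v:f_v\text{ lies in the subtree of }\alpha\}$, and hence that the cut induced by $e$ is $(\mathrm{own}(\alpha),\,V(G)\setminus\mathrm{own}(\alpha))$, the pendants on the $\alpha$-side being exactly the $\ell_v$ with $f_v$ below $\alpha$. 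A second short argument shows that any vertex outside $\mathrm{own}(\alpha)$ having a $G$-neighbour in $\mathrm{own}(\alpha)$ must lie in $\gamma(\alpha)$, hence in $\gamma(\alpha)\setminus\mathrm{own}(\alpha)=B_\alpha$. Since $\alpha$ is a forget node (so $B_\alpha$ is a bag with one vertex removed and $|B_\alpha|\le w$) or a join node (so $|B_\alpha|\le w$ by hypothesis), the set $B_\alpha$ covers the $G$-boundary of $V(G)\setminus\mathrm{own}(\alpha)$ and has size at most $w$; by the first paragraph the mim-value of this cut is at most $w$. Therefore $\mimw(\dectree,\decf)\le w$, proving the bound, and since the construction only manipulates $(T,\cB)$ and never touches $H$ it runs in polynomial time, giving the ``furthermore'' part.

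The only genuinely delicate point is the choice of branching structure. Hanging the leaves $\ell_v$ directly on $T$ would leave cuts associated with introduce nodes, whose $G$-boundary is controlled only by a full bag of size up to $w+1$ — too weak. Suppressing the introduce nodes is exactly what deletes these cuts, so that every remaining cut is governed by a forget-node bag or a join-node bag; the hypothesis that join bags have size at most $w$ is precisely what makes the true branching points of the decomposition admissible. A small amount of further care (omitted here) is needed to check that after deleting the empty leaf nodes of $T$ and suppressing introduce nodes the result is a legitimate branch decomposition when $|V(G)|\ge 2$, and, if one insists on $|B_\alpha|=w$ exactly as in Lemma~\ref{lem:power}, to pad $B_\alpha$ with vertices of $V(G)\setminus\mathrm{own}(\alpha)\setminus B_\alpha$, which preserves the covering property.
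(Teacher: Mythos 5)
Your proof is correct and follows essentially the same route as the paper: attach a leaf at each forget node, clean up the tree by suppressing the remaining low-degree nodes, and bound every cut by applying Lemma~\ref{lem:power} with the bag of a forget or join node as the separating set. One minor quibble: your closing remark that cuts at introduce nodes would be ``too weak'' is not quite accurate --- the paper's version of the argument handles that case by observing that the introduced vertex $u$ can have no neighbour in $V_t$, so $B_t\setminus\{u\}$ of size at most $w$ still covers the boundary --- but since you suppress those nodes anyway, this does not affect your argument.
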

	\begin{proof}
			Let $H$ be the $k$-power of $G$, and let $(T, \cB = \{B_t\}_{t\in V(T)})$ be a nice tree decomposition of $G$ of width $w$, all of whose join bags have size at most $w$, with root node $r$.
	We may assume that $B_r = \emptyset$ and subsequently that $r$ is a forget or a join node. (Otherwise, we add a path of forget nodes on top of $r$ and make the last node the new root of $T$.)
	
	We obtain a branch decomposition $(T', \decf)$ as follows:
	\begin{itemize}
		\item Let $T''$ be the tree obtained from $T$ by, for each forget node forgetting a vertex $v$, adding a leaf $\ell_v$, and assigning $\decf(v)\defeq \ell_v$.
		\item We obtain $T'$ from $T''$ by recursively smoothing degree $2$ nodes that are not the root node and degree $1$ nodes that are not assigned by $\decf$, where smoothing a node of degree $2$ is an operation of removing this node and adding an edge between its two neighbors.
	\end{itemize}
		Since for each vertex $v \in V(G)$, there is a unique forget node forgetting $v$ in $(T, \cB)$,
		the map $\decf$ constructed above is a bijection. 
		Furthermore, $r$ has degree $2$ in $T'$, since $r$ is a join or a forget node in $(T, \cB)$. Thus, $(T', \decf)$ is a rooted branch decomposition with root node $r$.
	
		We consider $\crossinggraph{t}$ for some $t \in V(T')$, the crossing graph w.r.t.\ $t$, and argue that $\mimval_H(V_t) \le w$.
		If $t$ is a leaf node, then $\mimval_H(V_t)\le 1$.
		Assume $t$ is an internal node, then $t$ also appears in $(T, \cB)$. We argue that we can find a set of at most $w$ vertices $S \subseteq \bar{V_t}$ such that $S$ separates $V_t$ from $\bar{V_t} \setminus S$ which by Lemma~\ref{lem:power} will yield the claim.
		
		We first observe that $S \defeq N(V_t) \cap \bar{V_t} \subseteq B_t$ and clearly $S$ separates $V_t$ from $\bar{V_t} \setminus S$. If $t$ is a forget node, then by definition $\card{B_t} \le w$ and hence $\card{S} \le w$. If $t$ is a join node, then by assumption $\card{B_t} \le w$ and hence $\card{S} \le w$. If $t$ is an introduce node introducing a vertex $u \in V(G)$, then $u$ cannot have any neighbor in $V_t$, since all vertices in $V_t$ have been forgotten below $t$. Hence, $S \subseteq B_t \setminus \{u\}$ and we can conclude that $\card{S} \le w$.
	\end{proof}
	It is well-known (see e.g.~\cite{Klo94}) that any tree decomposition can be transformed in polynomial time to a nice tree decomposition of the same width, hence the previous theorem implies	
	\begin{corollary}
		Let $k$ and $w$ be positive integers and let $G$ be a graph of tree-width $w - 1$ (path-width $w$). Then the $k$-power of $G$ has mim-width at most $w$ and given a tree decomposition (path decomposition) of $G$ of width $w - 1$ ($w$), one can compute a branch decomposition of mim-width $w$ in polynomial time.
	\end{corollary}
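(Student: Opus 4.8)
The plan is to derive this corollary directly from Theorem~\ref{thm:treewidthpower}, using the fact recalled just above it (see e.g.~\cite{Klo94}) that any tree decomposition can be turned in polynomial time into a \emph{nice} tree decomposition of the same width, and that when the input is a path decomposition the resulting nice decomposition still has a path as its underlying tree and hence contains no join nodes. With these tools in hand, each of the two cases is just a matter of matching bag sizes to the hypothesis of Theorem~\ref{thm:treewidthpower}.

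\textbf{Tree-width.} Suppose $G$ has tree-width $w-1$ and we are handed a tree decomposition of width $w-1$. First I would convert it, in polynomial time, into a nice tree decomposition $(T,\cB)$ of the same width $w-1$. Every bag of $(T,\cB)$ then has size at most $w$; in particular every join bag has size at most $w$. Thus $(T,\cB)$ satisfies the hypothesis of Theorem~\ref{thm:treewidthpower} for the integer $w$ (its width is even $w-1 < w$, and the argument given in the proof of Theorem~\ref{thm:treewidthpower} — which only ever uses that forget and join bags have size at most $w$ and that an introduced vertex lies outside $V_t$ — applies verbatim). Theorem~\ref{thm:treewidthpower} then outputs, in polynomial time, a branch decomposition of the $k$-power of $G$ of mim-width at most $w$, which in particular shows that the $k$-power has mim-width at most $w$.

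\textbf{Path-width.} Suppose $G$ has path-width $w$ and we are handed a path decomposition of width $w$. Making it nice in polynomial time keeps the underlying tree a path, so the resulting nice tree decomposition has no join nodes at all; the requirement of Theorem~\ref{thm:treewidthpower} that all join bags have size at most $w$ is then vacuously true, and applying the theorem again yields in polynomial time a branch decomposition of the $k$-power of $G$ of mim-width at most $w$.

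The only subtlety — and the nearest thing to an obstacle — is the off-by-one bookkeeping between a decomposition of width $w-1$ (resp.\ $w$) and the bag-size bound of Theorem~\ref{thm:treewidthpower}, together with the need to check that the standard niceness transformation both preserves the width and, in the path-decomposition case, introduces no join nodes. Once these points are settled, the statement is an immediate consequence of Theorem~\ref{thm:treewidthpower}.
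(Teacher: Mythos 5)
Your proposal is correct and matches the paper's own (very brief) derivation: the paper likewise obtains the corollary by converting the given tree or path decomposition into a nice one of the same width via the standard transformation and then invoking Theorem~\ref{thm:treewidthpower}. Your extra care about the off-by-one between width $w-1$ and bag size $w$, and about the absence of join nodes in the path-width case, is exactly the bookkeeping the paper leaves implicit.
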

	The following notions are of importance in the field of phylogenetic studies, i.e. the reconstruction of ancestral relations in biology, see e.g. \cite{CS16}.
A graph $G$ is a leaf power if there exists a threshold $k$ and a tree
$T$, called a leaf root, whose leaf set is $V(G)$ such that $uv \in E$ if and only if the
distance between $u$ and $v$ in $T$
is at most $k$. Similarly, $G$ is called a min-leaf power if $uv \in E$ if and only if the
distance between $u$ and $v$ in $T$
is more than $k$. Thus, $G$ is a leaf power if an only if its complement is a min-leaf power.
	It is easy to see that trees admit nice tree decompositions all of whose join bags have size $1$ and	
	since every leaf power graph is an induced subgraph of a power of some tree, 
	it has mim-width at most $1$ by Theorem~\ref{thm:treewidthpower}.
	\begin{corollary}
		The leaf powers and min-leaf powers have mim-width at most $1$ and given a leaf root, we can compute in polynomial time a branch decomposition witnessing this.
	\end{corollary}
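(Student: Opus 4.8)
The plan is to treat leaf powers directly with Theorem~\ref{thm:treewidthpower} and then derive the min-leaf power case by complementation, all the nontrivial content sitting in the latter step.

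For a leaf power $G$ with leaf root $T$ and threshold $k$, I would first note that $G$ is precisely the induced subgraph of the $k$-power $T^k$ on the leaf set $L$ of $T$. Since $T$ is a tree, one can compute in polynomial time a nice tree decomposition of width $1$ all of whose join bags have size at most $1$: root $T$ arbitrarily, and bottom-up, for a vertex $v$ with children $c_1,\dots,c_d$ take the decompositions already built for the subtrees at the $c_i$ (each with root bag $\{c_i\}$), prepend to the $i$-th one an introduce node with bag $\{v,c_i\}$ followed by a forget node with bag $\{v\}$ (covering the edge $vc_i$), and combine the $d$ resulting decompositions — all with root bag $\{v\}$ — by a chain of join nodes with bag $\{v\}$. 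Feeding this into Theorem~\ref{thm:treewidthpower} with $w=1$ yields in polynomial time a branch decomposition of $T^k$ of mim-width at most $1$; note that the branch decomposition produced in the proof of Theorem~\ref{thm:treewidthpower} is built from the tree decomposition alone and is independent of $k$, so one does not even need to know the threshold. I would then restrict this branch decomposition to $V(G)=L$, i.e.\ delete the leaves mapped to internal vertices of $T$ and iteratively smooth degree-$2$ nodes and delete unlabelled degree-$1$ nodes; every crossing graph of the resulting branch decomposition of $G$ is an induced subgraph of a crossing graph of $T^k$, hence still has no induced matching of size two, so it certifies mim-width at most $1$.

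For a min-leaf power $G$ with leaf root $T$ and threshold $k$, observe that for distinct leaves $u,v$ we have $uv\in E(\overline{G})$ iff $\dist_T(u,v)\le k$, so $\overline{G}$ is the leaf power with leaf root $T$ and threshold $k$; by the first part I obtain in polynomial time a branch decomposition $(T',\decf')$ of $\overline{G}$ of mim-width at most $1$. The key claim is then that the same $(T',\decf')$ also certifies mim-width at most $1$ for $G$. Fix a cut $(A,B)$ of $V(G)$ with $B=V(G)\setminus A$: on this fixed bipartition, non-adjacency in $\overline{G}$ is adjacency in $G$, so the crossing graph $G[A,B]$ is exactly the bipartite complement of the crossing graph $\overline{G}[A,B]$. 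A bipartite graph has no induced matching of size two if and only if the neighbourhoods of the vertices on one side are linearly ordered by inclusion, and replacing each such neighbourhood $N(x)$ by $B\setminus N(x)$ merely reverses this order; hence the bipartite complement of a bipartite graph with no induced matching of size two is again such a graph. Since $\overline{G}[A,B]$ has no induced matching of size two, neither does $G[A,B]$, i.e.\ $\mimval_G(A)\le 1$, which proves the claim.

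The bulk of this argument is routine bookkeeping (the nice tree decomposition of a tree, the invocation of Theorem~\ref{thm:treewidthpower}, and the monotonicity of mim-width under restricting a branch decomposition to an induced subgraph). The one point that needs care, and the only part not already implicit in the discussion preceding the corollary, is the last paragraph: that "mim-width at most one" is preserved under complementation and in fact witnessed by the very same branch decomposition. This rests entirely on the characterisation of bipartite graphs with no induced matching of size two via linearly ordered one-sided neighbourhoods, together with the triviality that complementing all those neighbourhoods preserves the linear order.
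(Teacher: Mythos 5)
Your proposal is correct and follows essentially the same route as the paper: build a nice tree decomposition of the leaf root with all join bags of size $1$, invoke Theorem~\ref{thm:treewidthpower}, restrict the resulting branch decomposition to the leaf set, and handle min-leaf powers by complementation. The only part the paper leaves implicit --- that a branch decomposition of mim-width $1$ of $\overline{G}$ also witnesses mim-width $1$ for $G$, via the chain-graph characterisation of bipartite graphs with no induced matching of size two --- is exactly what you spell out, and your argument for it is sound.
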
	
	
	Next, we consider powers of graphs of bounded clique-width.
	A graph is \emph{$w$-labeled} if there is a labeling function $f:V(G)\rightarrow \{1, 2, \ldots, w\}$, and we call $f(v)$ the \emph{label} of $v$.  For a $w$-labeled graph $G$, we call the set of all
vertices with label $i$ the \emph{label class $i$} of $G$.
	The following can be thought as a generalization of Lemma~\ref{lem:power}.
	
	\begin{lemma}\label{lem:power2}
	Let $k$ and $w$ be positive integers and let $(A,B)$ be a vertex partition of  graph $G$ such that $G[A]$ is $w$-labeled and 
	for every pair of vertices $x, y$ in the same label class of $G[A]$, $x$ and $y$ have the same neighborhood in $B$.
	If $H$ is the $k$-power of $G$,
	then $\mimval_{H}(A)\le w$.
	\end{lemma}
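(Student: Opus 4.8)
The plan is to follow the strategy of Lemma~\ref{lem:power}: there the adjacency in $H$ of a vertex on one side of the cut to a vertex on the other was characterized through $w$ ``hubs'' (the vertices of $B$), and here the role of the hubs will be played by the $w$ label classes of $G[A]$. Write $f\colon A \to \{1,\dots,w\}$ for the labeling, and for each $i \in \{1,\dots,w\}$ let $N_i \subseteq B$ be the common $B$-neighbourhood of the vertices in label class $i$ (well defined by hypothesis). I would introduce two families of distances. For $y \in A$ and $i \in \{1,\dots,w\}$, let $\phi_i(y) \defeq \min\{\dist_G(y,x) : x \in A,\ f(x)=i\}$ (so $\phi_i(y)=0$ if $f(y)=i$, and $\phi_i(y)=+\infty$ if class $i$ is empty or unreachable). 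For $z \in B$ and $i \in \{1,\dots,w\}$, let $g_i(z)$ be the minimum length of a walk in $G$ from some vertex of label class $i$ to $z$ all of whose vertices other than the first one lie in $B$ (equivalently, $g_i(z) = \min_{b \in N_i}\bigl(1 + \dist_{G[B]}(b,z)\bigr)$).

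The core step is to establish, for $y \in A$ and $z \in B$, the equivalence
\begin{equation*}
	yz \in E(H) \quad\Longleftrightarrow\quad \exists\, i \in \{1,\dots,w\}\colon\ \phi_i(y) + g_i(z) \le k. \tag{$\star$}
\end{equation*}
For the forward direction I would take a shortest $y$–$z$ path $p_0 = y, p_1, \dots, p_\ell = z$ in $G$ with $\ell \le k$, and let $j$ be the largest index with $p_j \in A$ (well defined since $p_0 \in A$ and $p_\ell = z \in B$). With $i \defeq f(p_j)$, the suffix $p_j, p_{j+1}, \dots, p_\ell$ is a walk from label class $i$ to $z$ all of whose vertices after $p_j$ lie in $B$, so $g_i(z) \le \ell - j$; since $\dist_G(y, p_j) = j \ge \phi_i(y)$, we get $\phi_i(y) + g_i(z) \le j + (\ell - j) = \ell \le k$. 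For the converse, fix $i$ with $\phi_i(y) + g_i(z) \le k$, pick $x \in A$ with $f(x)=i$ realising $\phi_i(y)$, and pick a walk $x', q_1, \dots, q_d = z$ with $d = g_i(z)$, $f(x')=i$, and $q_1,\dots,q_d \in B$; since $x$ and $x'$ lie in the same label class they have the same $B$-neighbourhood, so $xq_1 \in E(G)$ and $x, q_1, \dots, q_d$ is a walk of length $d$; concatenating it with a shortest $y$–$x$ path yields a $y$–$z$ walk of length $\phi_i(y) + g_i(z) \le k$, whence $\dist_G(y,z) \le k$.

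Granting $(\star)$, the remainder mirrors the end of the proof of Lemma~\ref{lem:power}. Suppose for contradiction that $H[A,B]$ contains an induced matching $\{y_1z_1, \dots, y_tz_t\}$ with $t \ge w+1$, each $y_s \in A$ and $z_s \in B$. For each $s$ choose, using $(\star)$, an index $i(s)$ with $\phi_{i(s)}(y_s) + g_{i(s)}(z_s) \le k$; by pigeonhole there are $s \neq s'$ with $i(s) = i(s') =: i$. Adding the two inequalities gives $\bigl(\phi_i(y_s) + g_i(z_{s'})\bigr) + \bigl(\phi_i(y_{s'}) + g_i(z_s)\bigr) \le 2k$, so at least one of the two bracketed sums is at most $k$, and by $(\star)$ this makes $y_sz_{s'}$ or $y_{s'}z_s$ an edge of $H$ — contradicting that the matching is induced. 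Hence $\mimval_H(A) \le w$.

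I expect the only genuinely delicate point to be the converse direction of $(\star)$, and in particular the rerouting step that replaces the starting vertex of a $B$-internal walk to $z$ by a possibly different vertex of the same label class. That step is exactly where the hypothesis ``same-label vertices of $G[A]$ have the same neighbourhood in $B$'' is used, and it is what makes $g_i$ depend only on the label $i$ and not on an individual vertex; the forward direction and the concluding pigeonhole/triangle-inequality argument are essentially the same as for Lemma~\ref{lem:power}.
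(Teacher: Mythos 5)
Your proof is correct and follows essentially the same route as the paper's: the paper splits each shortest $y_s$--$z_s$ path at its last vertex $q_s$ lying in $A$, pigeonholes on the label of $q_s$, and reroutes through the common $B$-neighbourhood of that label class, which is exactly what your equivalence $(\star)$ combined with the concluding pigeonhole/averaging step accomplishes. Your $\phi_i$/$g_i$ formulation merely packages the paper's $a_i+b_i\le k$ bookkeeping into an explicit characterization of adjacency in $H[A,B]$.
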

	\begin{proof}
	Suppose for contradiction that $\mimval_{H}(A)> w$.
	Let $\{y_1z_1, y_2z_2, \ldots, y_tz_t\}$ be an induced matching of size at least $w+1$ in $H[A, B]$.
	For $i\in \{1, 2, \ldots, t\}$, there is a path $P_i$ of length at most $k$ from $y_i$ to $z_i$ in $G$. 
	Let $Q_i$ be the subpath of $P_i$ from $\bd(A)$ to $z_i$, and 
	$q_i$ be the end vertex of $P_i$ other than $z_i$, and 
	let $R_i$ be the subpath of $P_i$ from $y_i$ to $z_i$.
	Let $a_i$ be the length of $R_i$ and $b_i$ be the length of $Q_i$.
	By construction, $a_i+b_i\le k$.

	Since $t\ge w+1$, there are two integers $t_1, t_2\in \{1, 2, \ldots, t\}$ such that 
	$q_{t_1}$ and $q_{t_2}$ are contained in the same label class of $G[A]$.
	Thus, we have either $a_{t_1}+b_{t_2}\le k$ or $a_{t_2}+b_{t_1}\le k$.
	
	Assume $a_{t_1}+b_{t_2}\le k$. In this case, we show that the distance from $y_{t_1}$ to $z_{t_2}$ in $G$ is at most $k$, which contradicts the assumption that 
	$y_{t_1}z_{t_2}$ is not an edge of $H$.
	Note that two vertices in a label class of $G[A]$ have the same neighborhood in $B$.
	Thus, $q_{t_1}$ has a neighbor in the neighborhood of $q_{t_2}$ in $Q_{t_2}$.
	Therefore, $G[V(R_{t_1})\cup (V(Q_{t_2})\setminus \{q_{t_2}\})]$ contains a path of length at most $k$ from $y_{t_1}$ to $z_{t_2}$.
	Analogously we can show that if $a_{t_2}+b_{t_1}\le k$, then 
	$G[V(R_{t_2})\cup (V(Q_{t_1})\setminus \{q_{t_1}\})]$ contains a path of length at most $k$ from $y_{t_2}$ to $z_{t_1}$.
	But this is a contradiction and we conclude that $\mimval_{H}(A)\le w$.
	\end{proof}
	
\begin{definition}		
	The \emph{clique-width} of a graph $G$ is the minimum number of labels needed to construct $G$ using the following four operations:
\begin{enumerate}[(1)]
\item Creation of a new vertex $v$ with label $i$ (denoted by $i(v)$).
\item Disjoint union of two labeled graphs $G$ and $H$ (denoted by $G \oplus H$).
\item Joining by an edge each vertex with label $i$ to each vertex with label $j$ ($i\neq j$, denoted by $\eta_{i,j}$). 
\item Renaming label $i$ to $j$ (denoted by $\rho_{i\rightarrow j}$).
\end{enumerate}
\end{definition}
	A string of operations given in the previous definition is called a \emph{clique-width $k$-expression} or shortly a \emph{$k$-expression} if it uses at most $k$ distinct labels. We can represent this expression as a tree-structure and such trees are known as \emph{syntactic trees} associated with $k$-expressions. An easy observation is that for a node $t$ in a syntactic tree associated with a $k$-expression, and the vertex set $V_t$ consisting of vertices introduced in some descendants of $t$, 
$V_t$ is a $k$-labeled graph where two vertices in the same label class has the same neighborhood in $V(G)\setminus V_t$.

	\begin{theorem}
	Let $k$ and $w$ be positive integers and $G$ be a graph of clique-width $w$.
	Then the $k$-power of $G$ has mim-width at most $w$. Furthermore, 
	given a clique-width $w$-expression, we can output a branch decomposition of mim-width at most $w$ in polynomial time.
	\end{theorem}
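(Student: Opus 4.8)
The plan is to mirror the strategy used in the proof of Theorem~\ref{thm:treewidthpower}: extract a branch decomposition of $H$ (the $k$-power of $G$) directly from a syntactic tree of the given $w$-expression, and then certify the mim-width bound cut-by-cut using Lemma~\ref{lem:power2} together with the observation stated just above the theorem, namely that at any node $t$ of the syntactic tree the vertex set $V_t$ of vertices introduced below $t$ carries a $w$-labeling (the labels assigned by the expression at $t$) under which any two equally-labeled vertices have the same neighborhood in $V(G)\setminus V_t$.

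First I would build the branch decomposition. Let $\mathcal{S}$ be a syntactic tree associated with the $w$-expression; its leaves are exactly the vertex-creation nodes $i(v)$, which gives a natural bijection $\decf$ from $V(G)$ to the leaves of $\mathcal{S}$. The internal nodes for the unary operations $\eta_{i,j}$ and $\rho_{i\to j}$ have degree $2$, so I would suppress all degree-$2$ internal nodes (smoothing, exactly as in the proof of Theorem~\ref{thm:treewidthpower}), obtaining a subcubic tree $T$ whose leaf set is in bijection with $V(G)$ via $\decf$. Assuming $\abs{V(G)}\ge 2$ (the case $\abs{V(G)}\le 1$ being trivial since the mim-width is then $0$), the expression contains at least one $\oplus$-operation; the $\oplus$-node without any $\oplus$-ancestor is unique, and after smoothing it becomes a node of degree $2$, which I would take as the root $r$ of $T$. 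Thus $(T,\decf)$ is a rooted branch decomposition of $G$, and the whole construction, including computing the label class of every vertex at every node of $\mathcal{S}$, runs in polynomial time.

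Next I would bound $\mimw(T,\decf)$. Fix an edge $e$ of $T$ and let $(A_1^e,A_2^e)$ be the induced bipartition of $V(G)$. Since smoothing a chain of degree-$2$ (unary-operation) nodes does not change which leaves lie on either side of the chain, $e$ corresponds to a node $t$ of $\mathcal{S}$ with $A_1^e = V_t$ and $A_2^e = V(G)\setminus V_t$. By the observation above, $G[V_t]$ is $w$-labeled so that equally-labeled vertices have the same neighborhood in $V(G)\setminus V_t$; hence Lemma~\ref{lem:power2}, applied with $A = V_t$ and $B = V(G)\setminus V_t$, yields $\mimval_{H}(V_t)\le w$. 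As this holds for every edge of $T$, we obtain $\mimw(T,\decf)\le w$, and the ``furthermore'' part follows from the polynomial-time bound on the construction.

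I expect the only delicate points — rather than genuine obstacles — to be bookkeeping ones: checking that the hypothesis of Lemma~\ref{lem:power2} is precisely the content of the syntactic-tree observation (in particular, that any edges present in $G[V_t]$ beyond those of the labeled graph built at $t$ are irrelevant, since Lemma~\ref{lem:power2} only uses the labeling and the across-the-cut neighborhoods), verifying uniqueness of the topmost $\oplus$-node so that the root has degree $2$, and dealing with degenerate cuts (a singleton side trivially has $\mimval_H\le 1\le w$) so that $(T,\decf)$ is formally a valid branch decomposition.
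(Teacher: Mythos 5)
Your proposal is correct and takes essentially the same approach as the paper: both extract the branch decomposition from the syntactic tree by mapping each vertex to its introduce node, smoothing degree-two nodes, rooting at the topmost branching ($\oplus$) node, and then applying Lemma~\ref{lem:power2} via the observation that each $V_t$ is $w$-labeled with equally-labeled vertices having identical neighborhoods in $V(G)\setminus V_t$.
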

	\begin{proof}
	Let $H$ be the $k$-power of $G$ and let $\phi$ be the given clique-width $k$-expression defining $G$, and  
	$T$ be the syntactic tree of $\phi$ with root node $r$. Let $r'$ be the node in $T$ of degree at least $3$ with minimum $\dist_T(r, r')$.
	Note that for every vertex $v$ of $G$, there is a node of $T$ creating $v$, see the first operation in the definition of clique-width. In the following, we call such a node an {\em introduce node}.
		We obtain a branch decomposition $(T', \decf)$ as follows:
	\begin{itemize}
	\item For each introduce node $\ell_v$ introducing a vertex $v$, we assign $\decf(v)\defeq \ell_v$.	
	\item We obtain $T'$ from $T$ as follows: If $r\neq r'$, we first remove all vertices in the path from $r'$ to $r$ in $T$ other than $r'$.
	We recursively smooth degree $2$ nodes that are not $r'$ (as in the proof of Theorem~\ref{thm:treewidthpower}).
	We fix $r'$ to be the root node of $T'$ as well.
	\end{itemize}

	Consider $\crossinggraph{t}$ for some $t \in V(T')$, the crossing graph w.r.t.\ $t$.
	If $t$ is a leaf node, then $\mimval_H(V_t)\le 1$.
	Assume $t$ is an internal node.
	Then $t$ also appears in $T$.
	We observe that $V_t$ is a $w$-labeled graph such that for any pair of vertices $x, y$ in the same label class, $x$ and $y$ have the same neighborhood in $V(G)\setminus V_t$.
	So we can apply Lemma~\ref{lem:power2} to conclude that we have $\mim_H(V_t)\le w$ which implies that $H$ has mim-width at most $w$.
	\end{proof}
	
\section{Conclusion}

We have shown that \textsc{Feedback Vertex Set} admits an $n^{\mathcal{O}(w)}$-time algorithm when given a branch decomposition of mim-width $w$. This provides a unified polynomial-time algorithm for {\sc Feedback Vertex Set} on known classes of bounded mim-width, and gives the first polynomial-time algorithms for \textsc{Circular Permutation} and \textsc{Circular $k$-Trapezoid} graphs for fixed $k$.

Somewhat surprisingly, we prove that powers of graphs of bounded tree-width or clique-width have bounded mim-width.
Heggernes et al.~\cite{Heggernes2016} showed that the clique-width of the $k$-power of a path of length $k(k+1)$ is exactly $k$.
This also shows that the expressive power of mim-width is much stronger than clique-width, since all powers of paths have mim-width just $1$.
As a special case, we show that {\sc Leaf Power} graphs have mim-width $1$.
We believe the notion of mim-width can be of benefit to the study of {\sc Leaf Power} graphs.

We conclude with repeating an open problem regarding algorithms for computing mim-width.
The problem of computing the mim-width of general graphs was shown to be 
$\W[1]$-hard, not in $\APX$ unless $\NP = \ZPP$~\cite{SV16} and no algorithm for computing the mim-width of 
a graph in $\XP$ time is known. We therefore repeat an open question from~\cite{SV16}:
\begin{question*}[see also \cite{SV16}, cf.~\cite{VatshelleThesis}]
	Is there an $\XP$ algorithm approximating mim-width $w$ by some function $f(w)$ and returning a decomposition?
\end{question*}
We remark that it is a big open problem whether {\sc Leaf Power} graphs can be recognized in polynomial time~\cite{Bra10,CS16,Lafond17,NR16}. 
A positive answer to our open problem may be used to design such a recognition algorithm 
using branch decompositions of bounded mim-width.

\bibliography{fvsmim}
\end{document}